\def\confversion{0}
\def\ifconf{\ifnum\confversion=1}
\def\ifnotconf{\ifnum\confversion=0}
\def\showauthornotes{0}
\def\showkeys{0}
\def\showdraftbox{1}
\definecolor{darkred}{rgb}{0.5,0,0}
\definecolor{darkgreen}{rgb}{0,0.35,0}
\definecolor{darkblue}{rgb}{0,0,0.55}
\newcommand{\Authornote}[2]{{\sf\small\color{red}{[#1: #2]}}}
\newcommand{\Authorcomment}[2]{{\sf \small\color{gray}{[#1: #2]}}}
\newcommand{\Authorfnote}[2]{\footnote{\color{red}{#1: #2}}}
\newcommand{\Authornote}[2]{}
\newcommand{\Authorcomment}[2]{}
\newcommand{\Authorfnote}[2]{}
\newtheorem{theorem}{Theorem}[section]
\newtheorem{definition}[theorem]{Definition}
\newtheorem{lemma}[theorem]{Lemma}
\newtheorem{remark}[theorem]{Remark}
\newtheorem{corollary}[theorem]{Corollary}
\newtheorem{claim}[theorem]{Claim}
\newtheorem{fact}[theorem]{Fact}
\theoremstyle{remark}
\newtheorem{algo}[theorem]{Algorithm}
\newenvironment{algorithm_simple}[3]
        {\noindent\begin{boxedminipage}{\textwidth}
        \begin{algo}[#1]\ \par
        {\begin{tabular}{r l}
        \textbf{Input} & #2\\
        \textbf{Output} & #3
        \end{tabular}\par\enskip}}
        {\end{algo}\end{boxedminipage}}
\def\FullBox{\hbox{\vrule width 6pt height 6pt depth 0pt}}
\def\qed{\ifmmode\qquad\FullBox\else{\unskip\nobreak\hfil
\penalty50\hskip1em\null\nobreak\hfil\FullBox
\parfillskip=0pt\finalhyphendemerits=0\endgraf}\fi}
\def\qedsketch{\ifmmode\Box\else{\unskip\nobreak\hfil
\penalty50\hskip1em\null\nobreak\hfil$\Box$
\parfillskip=0pt\finalhyphendemerits=0\endgraf}\fi}
\newcommand{\acts}{{\cdot}}
\def\to{\rightarrow}
\def\epsilon{\varepsilon}
\def\phi{\varphi}
\newcommand{\ie}{i.e.,\xspace}
\newcommand{\eg}{e.g.,\xspace}
\newcommand{\etal}{et al.\xspace}
\newcommand{\R}{{\mathbb R}}
\newcommand{\E}{{\mathbb E}}
\newcommand{\C}{{\mathbb C}}
\newcommand{\Z}{{\mathbb Z}}
\newcommand{\F}{{\mathbb F}}
\newcommand{\cC}{\mathcal{C}}
\newcommand{\cH}{\mathcal{H}}
\newcommand{\cD}{\mathcal{D}}
\newcommand{\abs}[1]{\ensuremath{\left\lvert #1 \right\rvert}}
\newcommand{\norm}[1]{\ensuremath{\left\lVert #1 \right\rVert}}
\def\ProbabilityRender#1#2{
  \@ifnextchar\bgroup%
  {\renderwithdist{#1}{#2}}
   {\singlervrender{#1}{#2}}
}
\def\singlervrender#1#2{%
   \ensuremath{\mathchoice
       {{#1}\left[ #2 \right]}
       {{#1}[ #2 ]}
       {{#1}[ #2 ]}
       {{#1}[ #2 ]}
   }
}
\def\renderwithdist#1#2#3{%
   \@ifnextchar\bgroup
   {\superfancyrender{#1}{#2}{#3}}
   {\ensuremath{\mathchoice
      {\underset{#2}{#1}\left[ #3 \right]}
      {{#1}_{#2}[ #3 ]}
      {{#1}_{#2}[ #3 ]}
      {{#1}_{#2}[ #3 ]}
     }
   }
}
\def\superfancyrender#1#2#3#4#5{
   \ensuremath{\mathchoice
      {\underset{#1}{{#1}}\left#4 #3 \right#5}
      {{#1}_{#2}#4 #3 #5}
      {{#1}_{#2}#4 #3 #5}
      {{#1}_{#2}#4 #3 #5}
   }
}
\newfont{\inhead}{eufm10 scaled\magstep1}
\newcommand{\poly}{{\mathrm{poly}}}
\newcommand{\polylog}{{\mathrm{polylog}}}
\DeclareMathOperator\supp{Supp}
\DeclareMathOperator{\tr}{\operatorname {tr}}
\DeclareMathOperator{\aut}{\operatorname{Aut}}
\DeclareMathOperator{\diag}{\operatorname{diag}}
\renewcommand{\Re}{\operatorname{Re}}
\renewcommand{\Im}{\operatorname{Im}}
\newcommand{\inparen}[1]{\left(#1\right)}             
\DeclareSymbolFont{extraup}{U}{zavm}{m}{n}
\DeclareMathSymbol{\varheart}{\mathalpha}{extraup}{86}
\DeclareMathSymbol{\vardiamond}{\mathalpha}{extraup}{87}
\DeclareMathOperator{\Sym}{Sym}
\DeclareMathOperator{\spec}{Spec}
\DeclarePairedDelimiter\set{\lbrace}{\rbrace}
\DeclareMathOperator{\bias}{bias}
\begin{document}

\title{Explicit Abelian Lifts and Quantum LDPC Codes}

\author{Fernando Granha Jeronimo\thanks{{\tt IAS}. {\tt granha@ias.edu}. This material is based upon work supported by the National Science Foundation under Grant No. CCF-1900460 and also supported in part by NSF grant CCF-1816372.} \and Tushant Mittal\thanks{{\tt UChicago}. {\tt tushant@uchicago.edu}. Supported in part by NSF grant CCF-1816372. }  \and Ryan O'Donnell\thanks{{\tt CMU}. {\tt odonnell@cs.cmu.edu}. Supported by NSF grant FET-1909310. } \and Pedro Paredes\thanks{{\tt CMU}. {\tt preisben@cs.cmu.edu}. Supported by NSF grant FET-1909310. } \and Madhur Tulsiani\thanks{{\tt TTIC}. {\tt madhurt@ttic.edu}. Supported by NSF grant CCF-1816372.}}

\date{\today}



\maketitle
\thispagestyle{empty}

For an abelian group $H$ acting on the set $[\ell]$, an
$(H,\ell)$-lift of a graph $G_0$ is a graph obtained by replacing each
vertex by $\ell$ copies, and each edge by a matching corresponding to
the action of an element of $H$.

In this work, we show the following \emph{explicit} constructions of
expanders obtained via abelian lifts. For every (transitive) abelian
group $H \leqslant \Sym(\ell)$, constant degree $d \ge 3$ and
$\epsilon > 0$, we construct explicit $d$-regular expander graphs $G$
obtained from an $(H,\ell)$-lift of a (suitable) base $n$-vertex
expander $G_0$ with the following parameters:
\begin{itemize}
  \item[(i)] $\lambda(G) \le 2\sqrt{d-1} + \epsilon$, for any lift size $\ell \le 2^{n^{\delta}}$ where $\delta=\delta(d,\epsilon)$,
  \item[(ii)] $\lambda(G) \le \epsilon \cdot d$, for any lift size $\ell \le 2^{n^{\delta_0}}$ for a fixed $\delta_0 > 0$, when $d \ge d_0(\epsilon)$, or
  \item[(iii)] $\lambda(G) \le \widetilde{O}(\sqrt{d})$, for lift size ``exactly'' $\ell = 2^{\Theta(n)}$.
\end{itemize}
As corollaries, we obtain \emph{explicit} quantum lifted product codes
of Panteleev and Kalachev of almost linear distance (and also in a
wide range of parameters) and \emph{explicit} classical quasi-cyclic
LDPC codes with wide range of circulant sizes.

Items $(i)$ and $(ii)$ above are obtained by extending the techniques
of Mohanty,  O'Donnell and Paredes [STOC 2020] for $2$-lifts to much
larger abelian lift sizes (as a byproduct simplifying their
construction). This is done by providing a new encoding of special
walks arising in the trace power method, carefully ``compressing''
depth-first search traversals. Result $(iii)$ is via a simpler proof
of Agarwal \etal [SIAM J. Discrete Math 2019] at the expense of
polylog factors in the expansion.

\newpage

\ifnotconf
\pagenumbering{roman}
\tableofcontents
\clearpage
\fi

\pagenumbering{arabic}
\setcounter{page}{1}

\section{Introduction}
Graphs are ubiquitous in theoretical computer science and the ability
to explicitly construct graphs with special properties can be quite
useful. Two such properties are expansion and symmetry. A graph is
expanding if it is simultaneously sparse and highly connected (meaning
that we need to remove a lot of edges to disconnect a large part of
the graph.) The theory of
\emph{explicit} constructions of expander graphs has seen a dramatic
development over the past four decades\footnote{See~\cite{HooryLW06}
  for an excellent survey on expander graphs.}
\cite{LPS88,Margulis88,Morgenstern94,RVW00,BL06,BT08,Cohen16,MOP20,
  OW20, Alon21}. We now have constructions via diverse methods
achieving a wide range of expansion guarantees. These range from very
explicit algebraic constructions of so-called Ramanujan
graphs~\cite{LPS88} to recursive combinatorial ones based on the
Zig-Zag product~\cite{RVW00}. These constructions have a plethora of
applications specially to coding theory and
pseudorandomness~\cite{Vadhan12}.  A highly sought goal is to make the
expansion of a family of (bounded) degree $d$ graphs as close to the
Ramanujan bound as possible, \ie having largest non-trivial eigenvalue
at most $2\sqrt{d-1}$. The Alon-Boppana bound\cite{Nil91} states that
the largest non-trivial eigenvalue is at least $2\sqrt{d - 1} - o(1)$,
so the Ramanujan bound is in a sense optimal. This goal of achieving
strong spectral guarantees has been an important motivation.

Moving beyond spectral guarantees, we can ask for graphs that combine
the important property of expansion with additional structure and the
one we focus on is \textit{symmetry}\footnote{Informally, we say that
$G$ has symmetries of $H$ if $H \subseteq \aut(G)$, where $\aut(G)$
denotes the group of all graph isomorphisms to itself.  }.
One of the problems that has been studied in graph theory is to
construct graphs with a given automorphism group. Frucht proved in
1939 that for every finite group $H$, we have a graph $G$ such that
$\aut(G) = H$ . Babai \cite{B74} later showed that there is such a
graph on at most $2|H|$ vertices\footnote{Except for $\Z_3$, $\Z_4$
and $\Z_5$.}. Thus, we have a natural question

\begin{center}
  \emph{Can we explicitly construct expanding graphs with given symmetries? }
\end{center}

While interesting in its own right, the ability to control symmetries
also has concrete applications. For example, a very recent work
\cite{GW21} constructs many families of expanding asymmetric graphs,
\ie having no symmetries, and shows applications to property testing and
other areas. We will focus on an important connection to both quantum
and classical codes that was the motivation behind this work.

Low-density parity check (LDPC) codes were first introduced by
Gallager \cite{Gal62} in the '60s and are one of the most popular
classes of classical error-correcting codes, both in theory and in
practice.  LDPC codes are linear codes whose parity check matrices
have row and column weights bounded by a constant (which means that
each parity check depends only on a constant number of bits).  The
popularity of this family of codes comes from the fact that there are
many known constructions of classical LDPC codes that achieve linear
rate and distance that can also be decoded in linear time \cite{RU08}.

A family of codes that has been extensively studied is cyclic
codes, \ie codes that are invariant under the action of $\Z_N$ where
$N$ is the blocklength. This symmetry leads to efficient encoding and
decoding algorithms and a major open problem is whether good cyclic
codes exist. Babai, Shpilka and Stefankovich \cite{BSS05} showed that
cyclic codes cannot be good LDPC codes and this negative result was
extended by Kaufman and Wigderson \cite{KW10} to LDPC codes with
a \textit{transitive} action by an arbitrary abelian group.

Quasi-cyclic codes are a generalization of cyclic codes in which
symmetry is only under rotations of multiples of a parameter (called
index) $n$ where $N = n\ell$. This is equivalent to relaxing the
transitivity condition to allow for $n$ orbits. Unlike cyclic codes,
good quasi-cyclic codes are known to exist as was shown by Chen,
Peterson and Weldon \cite{CPW69}. More recently, Bazzi and
Mitter \cite{BM06} gave a randomized construction for any constant
$n>2$ and showed that it attains Gilbert--Varshamov bound rate $1/n$.
Quasi-cyclic codes have been extensively studied and are very useful
in practice (\eg their LDPC counterparts are part of the 5G standard
of mobile communication \cite{LBMZX18}).

In the realm of quantum computing, the fragility of qubits
makes quantum error correcting codes crucial for the
realization of scalable quantum
computation. \emph{Calderbank-Shor-Steane} (CSS) codes are a family of
quantum error-correcting codes that was first described
in \cite{CS96, Ste96}. A CSS code is defined by a pair of classical linear
codes that satisfy an orthogonality condition. The quantum analog of
LDPC codes is thus defined as CSS codes where the parity check
matrices of both codes have bounded row and column weights.

Constructing quantum LDPC codes of large distance has been active area
of research recently. After two decades, \cite{EKZ20} broke the
$\sqrt{N}$ barrier and there was a flurry of activity
with \cite{hho20} extending it to $N^{3/5}$ (up to $\poly\log$
factors). Panteleev and Kalachev \cite{PK20} came up with another
breakthrough construction achieving almost linear
distance. Both \cite{hho20} and \cite{PK20} are non-explicit
constructions crucially relying on symmetries. The construction
in~\cite{PK20} interestingly used quasi-cyclic LDPC codes which in
turn was constructed using expander graphs with cyclic symmetry.
Moreover, Breuckmann and Eberhardt~\cite{BreuckmannE21} introduced a
new approach for constructing quantum codes simultaneously
generalizing \cite{hho20} and \cite{PK20} in order to obtain explicit
codes out of a pair of graphs having the symmetries of any group.
This provides a very concrete motive to study explicit construction of
expander graphs symmetric under various families of groups. \vspace{-
11 pt}

\paragraph*{Update} \hspace{-0.4cm}
While the current paper was being prepared for publication, Panteleev
and Kalachev \cite{PK21} found a breakthrough construction of explicit
good quantum LDPC codes.

\vspace{- 5 pt}
\subsection*{Current Techniques}
Many of the current known constructions of expanders are Cayley graphs
and therefore are highly symmetric but are somewhat rigid in the sense
that one may not be able to finely control the symmetries of a given
construction. One general approach is to construct an expanding Cayley
graph for a given group but the Alon--Roichman theorem \cite{AR94} only
guarantees a logarithmic degree which is tight when the group
is abelian (and this large degree is undesirable for some application
in coding theory). The other technique used to build expanders is via
an operation called \emph{lifting}.

In general form, the \emph{random} lifting operation takes a lift size
parameter $\ell$, a base expander graph $G_0$ on $n$ vertices and a
subgroup $H$ of the symmetric group, $\Sym(\ell)$, and constructs a new
``lifted'' graph $G$ on $n\ell$ vertices where each vertex $v$ of
$G_0$ is replaced by $\ell$-copies $(v,1),\ldots,(v,\ell)$ and for
every edge $e=(u,v)$ of $G_0$ a uniformly random element of $h_e \in
H$ is sampled and $(u,i)$ is connected to $(v, h_e(i))$ for $i \in
[\ell]$. We say that $G$ obtained this way is a \emph{random}
$(H,\ell)$-lift of $G_0$. We call it an \emph{unstructured}
$\ell$-lift if there is no restriction on the group, \ie $H
= \Sym(\ell)$.

Lifting has three very useful properties. One, it preserves the degree
of the base graph. Secondly, random lifts preserve
expansion\footnote{This holds for any lift size in the case of
``unstructured'' $\ell$-lifts, but only holds for $\ell \le
2^{O_d(n)}$ when $H$ is abelian (and transitive).} with high
probability. Finally, (and importantly for us), if $H$ is abelian,
then the lifted graph inherits symmetries of $H$. The first two
properties are clearly useful in constructing larger expanders from a
small one, and for this reason, there has been extensive work on lift
based constructions.

Bilu and Linial~\cite{BL06} introduced \emph{2-lifts} in an explicit
construction of graphs with expansion $O(\sqrt{d}\log^{1.5}(d))$ for
every degree. More recently, Mohanty, O'Donnell and
Paredes~\cite{MOP20} gave the first explicit construction of
near-Ramanujan, \ie largest non-trivial eigenvalue bounded by
$2\sqrt{d-1} + \epsilon$, graphs of every degree. The key technique in
their work was a finer derandomization of $2$-lifts. Subsequently,
Alon~\cite{Alon21} gave explicit constructions of near-Ramanujan
expanders of every degree and every number of vertices. The work
in~\cite{MOP20} was also generalized to achieve finer spectral
guarantees together with local properties via \emph{unstructured}
$\ell$-lifts in O'Donnell and Wu~\cite{OW20}.

When one restricts $H$ to be abelian, Agarwal \etal~\cite{ACKM19}
showed that \emph{random} $(\mathbb{Z}_\ell,\ell)$-lifts (also known
as \emph{shift lifts}) are expanding. Motivated by the applications of
these lifts to codes, we obtain explicit constructions of expanding
abelian lifts, for a wide range of lift sizes.

\subsection{Our Results and Techniques}

Our construction of the lifts (and the expansion thereof) vary based
on the parameter $\ell$ and we make the following classification for
ease in presenting the results. Let $n,d,\varepsilon$ be given.

\begin{itemize}
\item \textit{Sub-Exponential} - This is the regime where $\ell \le \exp\left(  n^{\delta(d,\varepsilon)} \right)$. The exponent $\delta(d,\varepsilon)$ goes to zero as the degree ($d$) increases or  $\varepsilon$ vanishes.
\item \textit{Moderately-Exponential} - This is when $\ell \le \exp\left(  n^{\delta_0} \right)$. The exponent is some fixed universal constant $\delta_0 \in (0,1)$ .
\item \textit{Exactly-Exponential} - This is the regime where $\ell = \exp(\Theta_d(n))$.
\end{itemize}

Our first main result shows explicit constructions in the
sub-exponential and moderately exponential regimes.

\begin{restatable}{theorem}{MainI}\label{theo:main1}
  For large enough $n$ and constant degree $d \ge 3$, given $\ell$ such that $\ell \leq \exp(n^{\Theta(1)})$,
  the generating elements of a transitive abelian group $H \leqslant \Sym(\ell)$, and any fixed constant $\varepsilon \in (0,1)$,
  we can construct in deterministic polynomial time, a $d$-regular graph $G$ on $\Theta(n\ell)$ vertices such that
\begin{itemize}
    \item $G$ is $(H,\ell)$-lift of a graph $G_0$ on $\Theta(n)$ vertices.
    \item{(Sub-Exponential)} If $\ell \le \exp\left(  n^{\delta(d,\varepsilon)} \right)$, then $\lambda(G) \le 2\sqrt{d-1} + \epsilon$.
    \item{(Moderately-Exponential)} If $\ell \le \exp\left( n^{\delta} \right)$ and also $d \ge d_0(\epsilon)$, then $\lambda(G) \le \varepsilon \cdot d$.
\end{itemize}
\end{restatable}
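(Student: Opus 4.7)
The plan is to follow the template of Mohanty--O'Donnell--Paredes but extend it from $2$-lifts to arbitrary (transitive) abelian lifts, while being careful about the lift size $\ell$ that can be handled. First I would fix a suitable base expander $G_0$ on $\Theta(n)$ vertices; by prior work (e.g.\ the near-Ramanujan constructions of \cite{MOP20} or \cite{Alon21}) one can take $G_0$ to be explicit, $d$-regular, and $2\sqrt{d-1}+\varepsilon/2$-spectral. The second-largest eigenvalue $\lambda(G)$ of an $(H,\ell)$-lift decomposes, by abelian Fourier analysis, as
\[
   \lambda(G) \;=\; \max\inbraces{\lambda(G_0),\ \max_{\chi \neq 1}\ \norm{A_\chi}},
\]
where $\chi$ ranges over nontrivial characters of $H$ and $A_\chi$ is the \emph{$\chi$-signed} adjacency matrix of $G_0$ whose entry on edge $e$ equals $\chi(h_e)$. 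So the whole problem reduces to certifying, for every nontrivial character $\chi$ simultaneously, that $\norm{A_\chi} \le 2\sqrt{d-1}+\varepsilon$ (or $\varepsilon d$ in case (ii)).

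Next I would attack the spectral bound on $A_\chi$ by the trace power method: bound $\Tr(A_\chi^{2k})$ for $k \approx c \log n$ by summing $\chi$-weighted contributions over closed walks of length $2k$ in $G_0$. The classical argument pairs each walk with its underlying graph, and for ``tangle-free'' walks (each short subwalk visits each edge at most twice) the sum telescopes into an expression that depends only on how the random signings $h_e$ behave on short cycles. The central combinatorial step I would import from \cite{MOP20} is a careful bookkeeping via depth-first-search on the walk: each walk is encoded by a spanning tree together with a short list of ``excess'' edges, and one uses that most trees give a character sum of $0$ unless the sign pattern closes up on the excess edges. The key innovation I would need for abelian $\ell$-lifts (as hinted by the abstract) is a \emph{more economical DFS-encoding of these walks}: the naive encoding bleeds a factor depending on $\ell$, which is fine when $\ell=2$ but catastrophic for large $\ell$. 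By compressing repeated DFS sub-traversals and charging tree edges once rather than per-traversal, the resulting test that the signings must fool becomes a condition indexed only by the (small) set of tree edges and excess edges of the walk.

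With this compressed encoding in hand, the derandomization step becomes essentially: produce edge-labels $\{h_e\}_{e\in E(G_0)}$ from a distribution that is $\eta$-biased with respect to $H$ for some $\eta = n^{-\Theta(k)}$, so that every nontrivial character-test arising in the trace calculation is approximated to within $\eta$. Explicit $\eta$-biased sets over abelian groups of size $\ell$ exist with seed length $O(\log|H|+\log(m/\eta)) = O(\log\ell + \log(n/\eta))$ for $m$ edges; iterating the whole lift a bounded number of times if necessary, this seed length accommodates $\ell$ up to $\exp(n^\delta)$ for an appropriate $\delta=\delta(d,\varepsilon)$, giving the sub-exponential regime, and up to $\exp(n^{\delta_0})$ with a universal $\delta_0$ when $d$ is large enough to replace the $2\sqrt{d-1}+\varepsilon$ bound by the weaker $\varepsilon d$ bound (which forgives larger tangles and hence a coarser trace method). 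Finally, enumerating over the $\poly(n)$-size seed support and checking the spectral bound on each of the finitely many $A_\chi$'s (or rather on one generic $A_\chi$ per cyclic summand of $H$) runs in deterministic polynomial time.

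The main obstacle I anticipate is the DFS-compression step: one must show that, after collapsing repeated tree traversals, the resulting character-test really does depend only on a \emph{short} word in $H$ whose length is governed by the excess rather than by the walk length times $\ell$. This is what allows a small-bias distribution of seed length independent of $k\cdot \log\ell$ to suffice, and it is where the extension from $|H|=2$ to general abelian $H$ is nontrivial; one also has to verify carefully that the compressed walk sums are still dominated by the tangle-free contribution, so that the standard ``Ihara--Bass / non-backtracking'' route to the $2\sqrt{d-1}+\varepsilon$ threshold still goes through after the reduction to the compressed encoding.
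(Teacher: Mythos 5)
Your high-level plan (character decomposition of the lift, trace power method, derandomization by a small-bias distribution over $H^{E_0}$, deterministic enumeration over the seed set) matches the paper's, but there is a genuine gap in how you propose to handle large $\ell$: you take the walk length to be $k \approx c\log n$, whereas the union bound over the $\ell-1$ nontrivial characters forces $k \gtrsim \log \ell$, which is $n^{\Theta(1)}$ in both regimes of the theorem. This is not a minor parameter tweak. It is precisely the reason the original \cite{MOP20} count cannot be reused: their bound of roughly $n\,(d-1)^k\, k^{O(k\ln k/g)}$ on singleton-free hikes trivializes once $\ln k \gg \sqrt{g}=\Theta(\sqrt{\log_{d-1} n})$, i.e.\ well before $k$ reaches $\log\ell$ for any $\ell \ge \exp\bigl(2^{\Theta(\sqrt{\log n})}\bigr)$. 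With $k \approx \log n$ you cannot even cover $\ell = 2^{n^{\delta}}$, so the derandomization fails at the very first step.

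You also misdiagnose what the ``DFS compression'' needs to accomplish. It is not that the encoding of a walk ``bleeds a factor depending on $\ell$'' — the hike count lives entirely in $G_0$ and never sees $\ell$ — nor that the character test must be shortened to a word whose length is governed by the excess. The role of the small-bias distribution is standard and unchanged: it kills the expectation of the non-singleton-free hike terms (\cref{lem:biased_sum}), with bias $\nu \approx (n\ell d^2)^{-1}(\epsilon/d)^{2k}$, and since the support size is $\poly(1/\nu)=\poly(n,\ell)$ this is fine; it is emphatically not independent of $k\log\ell$, only polynomial in the output size. What actually has to change for large $k$ is the \emph{count of singleton-free hikes}: the paper first encodes the hike graph $\mathcal{H}\subseteq G_0$ by the DFS recursive/backtrack pattern (either an $\{\mathrm{R},\mathrm{B}\}^{2|E(\mathcal{H})|}$ string for the moderately exponential regime, or, for the near-Ramanujan regime, a per-vertex recursive-neighbor count that is nontrivial only at the $O(\delta k)$ vertices of degree $\ge 3$, via the excess bound of \cite{AHL02}/\cite{MOP20}), and \emph{then} encodes the hike given $\mathcal{H}$ by splitting it into stretches of length $<r$ (the bicycle-free radius). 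Together these give $n\,d\,(d-1)^k \binom{k}{O(\delta k)}d^{O(\delta k)} k^{O(k/r)}$, which after taking the $2k$th root is close to $\sqrt{d-1}$ for $k$ up to roughly $e^{\delta r}=n^{\Theta_d(\delta)}$. Finally, note the paper works with the signed non-backtracking operator $B_s(\chi)$ and the Ihara-type bound $\lambda(G)\le 2\max\{\sqrt{d-1},\rho_2(B)\}$; your proposal oscillates between $A_\chi$ and ``Ihara--Bass'', but the trace calculation really must be done on $B_s(\chi)$ so that the walks in question are non-backtracking hikes, for which the hike-graph excess bound applies.
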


The bulk of the technical work is in the proof of \cref{theo:main1}.
For this, we build on the techniques of~\cite{MOP20} for derandomizing
$2$-lifts via the trace power method. When analyzing larger lift sizes
(required in our derandomization of quantum and classical codes), we
are led to consider much larger walk lengths in the trace power
method. A central technical component in their work is the counting of
some special walks which ultimately governs the final spectral bound
of the construction. For lift sizes larger than
$2^{2^{\Theta(\sqrt{\log n})}}$, their counting trivializes no longer
implying expansion of the construction.  Our main technical
contribution consists in providing alternative ways of counting such
special walks by carefully compressing the traversal of the
depth-first search (DFS) algorithm.

We are able to extend the near-Ramanujan guarantee for $2$-lifts
from~\cite{MOP20} to the entire sub-exponential regime of lift sizes
$\ell$. In the moderately exponential regime, the walks are too long
and we resort to another counting that can only guarantee an expansion
of $\varepsilon \cdot d$. \cref{theo:main1} can be seen (slight)
simplification of the construction in~\cite{MOP20} since we can now do
a single large lift instead of performing a sequence of $2$-lifts as
in their work\footnote{Performing a single lift also has the advantage
of having to meet a technical condition (bicycle-freeness) only once
instead of at each lift operation.}.

Let us now formally state the results of Agarwal \etal
in~\cref{theo:agarwal_et_al} showing \emph{randomized} constructions
of abelian lifts.

\begin{theorem}[Agarwal \etal~\cite{ACKM19}, Theorem 1.2]\label{theo:agarwal_et_al}
  Let $G_0$ be a $d$-regular $n$-vertex graph, where $2 \le d \le \sqrt{n/(3\ln{n})}$.
  Let $G$ be a \emph{random} $(\mathbb{Z}_\ell,\ell)$-lift of $G_0$. Then
  \begin{align*}
     \lambda(G) = O(\lambda(G_0)),
  \end{align*}
  with probability $1 - \ell \cdot e^{-\Omega(n/d^2)}$. Moreover, if $\ell \geq \exp( O_\varepsilon(nd) )$, then no abelian $(H, \ell)$ lift has $\lambda(G) \le \varepsilon \cdot d$.
\end{theorem}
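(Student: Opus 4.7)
The approach is to diagonalize the adjacency matrix $A$ of the lift $G$ via the characters of $\Z_\ell$. Writing $\omega = e^{2\pi i/\ell}$, for each $k \in \{0,1,\ldots,\ell-1\}$ one obtains an $n \times n$ Hermitian matrix $A_k$ with entries $(A_k)_{u,v} = \omega^{k h_{u,v}}$ on oriented edges of $G_0$ carrying the random label $h_{u,v} \in \Z_\ell$; in particular $A_0$ is the adjacency matrix of $G_0$. The full spectrum of $A$ is the union of the spectra of the $A_k$, so it suffices to show $\|A_k\| = O(\lambda(G_0))$ with probability $1 - e^{-\Omega(n/d^2)}$ for each non-trivial $k$ and then take a union bound over $k \in \{1,\ldots,\ell-1\}$.

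For a fixed $k \ne 0$, I would apply the trace power method: expand $\tr(A_k^{2t})$ as a sum over closed walks $w$ of length $2t$ in $G_0$, each contributing $\prod_e \omega^{k m_e(w) h_e}$, where $m_e(w)$ is the signed traversal count of the oriented edge $e$. Since the $h_e$ are independent and uniform on $\Z_\ell$, the expectation factors across edges and equals $\prod_e \mathbf{1}[\ell \mid k m_e(w)]$; for $t$ small enough that $2tk < \ell$ (with the general case reduced to this via the effective order $\ell/\gcd(k,\ell)$) this forces $m_e(w)=0$ for every $e$. In other words, only \emph{balanced} closed walks, in which each oriented edge is used equally often in each direction, contribute to the expectation. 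Counting balanced walks via a reduction to walks in the universal cover tree $T_d$ (and to walks in $G_0$ itself when edges are revisited) then yields $\mathbb{E}[\tr(A_k^{2t})] \le n \cdot O(\lambda(G_0))^{2t}$, absorbing the tree-like contribution $(2\sqrt{d-1})^{2t}$ via Alon--Boppana.

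With $t = \Theta(n/d^2)$, Markov's inequality applied to the non-negative quantity $\tr(A_k^{2t})$ gives $\|A_k\| = O(\lambda(G_0))$ except with probability $e^{-\Omega(n/d^2)}$, and a union bound over the $\ell - 1$ non-trivial characters produces the failure probability $\ell \cdot e^{-\Omega(n/d^2)}$ stated in the theorem; the degree restriction $d \le \sqrt{n/(3\ln n)}$ is exactly what allows this choice of $t$ to defeat the union bound. The principal obstacle is sharp counting of the balanced walks: walks that repeatedly traverse small cycles of $G_0$ inflate the raw count, so one must additionally argue that $G_0$ is bicycle-free on neighborhoods of radius $O(t)$, or handle the non-tree-like contribution by a separate counting argument. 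For the impossibility clause, I would use an entropy/counting argument combined with known limitations on expansion under a free abelian action: any abelian $(H,\ell)$-lift is specified by only $nd/2$ labels in $H$, giving total description length $O(nd \log \ell)$, which once $\ell \ge \exp(\Omega_\varepsilon(nd))$ becomes insufficient to simultaneously enforce $\|A_k\| \le \varepsilon d$ across all $\ell$ character blocks $A_k$.
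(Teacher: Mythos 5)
Your proposal takes a genuinely different route from the one the paper references. The paper explicitly notes that the cited result of Agarwal \etal~\cite{ACKM19} ``is based on discrepancy methods building on the work of Bilu and Linial,'' \ie it proceeds through a converse of the expander mixing lemma applied to the real and imaginary parts of each character block, exactly as the present paper does in its own Section 6 for the exactly-exponential regime. You instead propose a trace power method on the adjacency matrices $A_k$ directly. The two approaches agree on the first step (block-diagonalize by characters of $\Z_\ell$, reduce to bounding $\|A_k\|$ for nontrivial $k$, union bound over the $\ell-1$ blocks) but diverge completely afterward.

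The trace power route has a genuine, unresolved gap exactly where you flag it: the counting of ``balanced'' closed walks (those with zero net traversal on every edge) at walk length $t = \Theta(n/d^2) = \Theta_d(n)$. For this theorem you need walk lengths linear in $n$, and that is precisely the regime in which no trace-power-style walk counting is available; indeed the present paper's own new encoding only pushes the trace power method to walk lengths up to $n^{\delta}$ with $\delta < 1$ (Sections 4--5), and for lifts of size $\exp(\Theta(n))$ the authors deliberately switch to the Bilu--Linial discrepancy argument because the walk counting breaks down. At $t = \Theta(n)$ the hike graph fills up the entire base graph, bicycle-freeness at radius $O(t)$ is unattainable for $d$-regular $G_0$ (the girth is at most $O(\log_{d-1} n)$), and the naive count of balanced closed walks does not stay within $n\cdot O(\lambda(G_0))^{2t}$. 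So the step ``Counting balanced walks via a reduction to walks in the universal cover tree $\ldots$ then yields $\E[\tr(A_k^{2t})] \le n\cdot O(\lambda(G_0))^{2t}$'' is not a routine reduction --- it is the crux of the argument, and it is not clear it can be made to work at all at this walk length. The Bilu--Linial approach avoids this entirely by bounding the bilinear form $|u^T M v|$ on Boolean vectors via an expander Hoeffding bound and then invoking their discretization lemma (\cref{thm:bilu1}); no walk counting is ever performed.

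A secondary issue: your sketch for the impossibility clause is not really an argument. ``Description length $O(nd\log\ell)$ becomes insufficient to simultaneously enforce $\|A_k\|\le\varepsilon d$ across all $\ell$ character blocks'' has no target quantity against which to compare, so as stated it does not give a contradiction. The actual lower bound in \cite{ACKM19} is a concrete spectral/counting argument about the collection $\{A_k\}$, not a compression argument; you would need to pin down what information-theoretic quantity the $nd\log\ell$ bits are supposed to be overwhelmed by, and why that forces some $\|A_k\| > \varepsilon d$.
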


This result is based on discrepancy methods building on the work of
Bilu and Linial \cite{BL06} and gives lower and upper bounds that are
tight up to a factor of $d^3$ in the exponent.

\cref{theo:main1} can be seen as a (derandomization of the parameters) in \cref{theo:agarwal_et_al}
for every constant degree and lift size from $2$ all the way to
$\exp(n^{\Theta_d(1)})$. In the sub-exponential regime, our result
improves their spectral guarantee from $O(\sqrt{d})$ to $2\sqrt{d-1}
+ \epsilon$.

Our second main result shows explicit constructions in the exponential
regime. While it is not hard to observe that one can derandomize the
exponential lift by using off-the-shelf tools, we give a short proof
via a key lemma of Bilu and Linial~\cite{BL06} that is a converse of
the expander mixing lemma. Although it gives a spectral guarantee that
is weaker by a log factor, it yields an accessible proof and moreover,
interpolates the exponent from $\exp(O(n/d^2))$ all the way to the
barrier of $\exp(O(nd))$ thereby bridging the $d^3$-gap.

\begin{theorem}[Exactly Exponential Lifts]\label{theo:main2}
  For any positive integers $n,\ell$ and every constant degree $d \ge 3$, given $\ell$,
  the generating elements of a transitive abelian group $H \leqslant \Sym(\ell)$, there exists a deterministic $\poly(\exp(n),\ell)$
  time algorithm that constructs a $d$-regular graph $G$ on $n\ell$ vertices such that
  \begin{itemize}
    \item $G$ is $(H,\ell)$-lift of a graph $G_0$ on $n$ vertices, and
    \item  If $\ell \leq \exp\left(\Theta\left(n/\sqrt{d}\right)\right)$, then $\lambda(G) \le O\left( \sqrt{d} \cdot \log d\right)$.
    \item  If $\ell = \exp\left(\Theta\left(nd^\delta\right) \right)$ for $\delta \in [-1/2, 1)$, then $\lambda(G) \le O\left(d^{\frac{2+\delta}{3}} \cdot \log d\right)$.
  \end{itemize}
  In particular, we have explicit polynomial time construction of a lift when $\ell = \exp(\Theta(n))$.
\end{theorem}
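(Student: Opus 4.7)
My plan is to reduce the spectral bound to a discrepancy bound via Bilu-Linial's converse of the expander mixing lemma \cite{BL06}, establish the required discrepancy for a uniformly random abelian lift by Bernstein plus a union bound, and derandomize using off-the-shelf tools in time $\poly(\exp(n),\ell)$.

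Fix any explicit $d$-regular expander $G_0$ on $n$ vertices (e.g., a Ramanujan graph, or the output of \cref{theo:main1}). For each edge $e \in E(G_0)$, draw $h_e$ independently and uniformly from $H$, yielding a random $(H,\ell)$-lift $G$ on $n\ell$ vertices. For disjoint $S, T \subseteq V(G_0) \times [\ell]$, set $S_u = \{i : (u,i) \in S\}$ with $s_u = |S_u|$, and analogously define $T_v$ and $t_v$. Then
\[
e_G(S,T) \;=\; \sum_{e=(u,v) \in E(G_0)} \bigl|h_e(S_u) \cap T_v\bigr|,
\]
and transitivity of $H$ on $[\ell]$ gives $\E[e_G(S,T)] = d|S||T|/(n\ell)$. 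A short calculation exploiting the abelian structure of $H$ bounds the per-edge variance by $O(s_u t_v \min(s_u,t_v)/\ell)$, and Bernstein's inequality applied to this independent sum yields, for each fixed $(S,T)$,
\[
\Pr\bigl[\,|e_G(S,T) - \E[e_G(S,T)]| > \alpha\sqrt{|S||T|}\,\bigr] \;\le\; \exp\bigl(-\Omega(\alpha^2 |S||T|/V)\bigr),
\]
where $V = V(S,T)$ is the aggregate variance, controlled by the profiles $\{s_u\}, \{t_v\}$ and by the $d$-regularity of $G_0$.

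A union bound over pairs of specified sizes $s, t$ (of which there are at most $(n\ell)^{s+t}$) must beat the factor $\exp(O((s+t)\log(n\ell)))$. Since $\log(n\ell) = \Theta(nd^\delta)$ in the stated regime, balancing the Bernstein exponent against this combinatorial count yields a worst-case discrepancy $\alpha = O(d^{(2+\delta)/3})$, specializing to $\alpha = O(\sqrt d)$ when $\ell \le \exp(\Theta(n/\sqrt d))$. Bilu-Linial's converse of the expander mixing lemma then upgrades this uniform $\alpha$-discrepancy bound into the spectral guarantee $\lambda(G) \le O(\alpha \log(d/\alpha))$, producing the advertised bounds together with the $\log d$ factor loss acknowledged in the preamble to the theorem.

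It remains to derandomize. Since the random analysis shows that a large fraction of lifts satisfy the discrepancy condition, a valid lift can be found deterministically by either (i) applying the method of conditional expectations with a pessimistic estimator that aggregates the per-$(S,T)$ tail probabilities, or (ii) enumerating the seeds of a suitable pseudorandom generator that fools the relevant family of discrepancy events and verifying each candidate by a $\poly(n\ell)$-time eigenvalue computation. Either approach runs in time $\poly(\exp(n),\ell)$. The main obstacle I foresee is the variance bookkeeping: $V$ must be estimated tightly enough across all size profiles $(s,t)$ that the exponent $(2+\delta)/3$ emerges cleanly in all three regimes of $\delta$, smoothly interpolating from $O(\sqrt d)$ at $\ell = \exp(\Theta(n/\sqrt d))$ up to the $O(d)$ barrier of \cref{theo:agarwal_et_al} at $\ell = \exp(\Theta(nd))$; the derandomization step itself should then be routine.
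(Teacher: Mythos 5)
Your proposal shares the paper's first step (Bilu--Linial's converse of the expander mixing lemma, \cref{thm:bilu1}) but then diverges in two ways that each create a genuine gap.

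\textbf{First, the level at which the discrepancy argument is carried out.} You apply the converse of the expander mixing lemma directly to the $n\ell\times n\ell$ adjacency matrix of the lifted graph $G$, which forces a union bound over pairs $(S,T)\subseteq V(G_0)\times[\ell]$---that is, over roughly $(n\ell)^{s+t}$ objects for each size profile $(s,t)$ up to $s,t = \Theta(n\ell)$. The randomness, however, lives only in the $|E(G_0)| = nd/2$ independent group elements $h_e$, so the Bernstein tail you invoke is a concentration statement about a sum of $O(n)$ independent summands. For structured sets (e.g.\ arithmetic-progression-like $S_u, T_v$ of size $\Theta(\ell)$) the per-edge variance is $\Theta(\ell^2)$, which makes the exponent $\alpha^2|S||T|/V = O(\alpha^2 n/d)$, far too small to beat $(n\ell)^{s+t} = \exp(\Omega(n\ell))$ unless $\alpha$ grows with $\ell$; the ``variance bookkeeping'' you flag is not a technicality but the crux of the matter, and I do not see how it closes at the target $\alpha = O(d^{(2+\delta)/3})$. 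The paper sidesteps this entirely: via the character decomposition (\cref{fact:character_decomp}, \cref{cor:decomposespec}) the spectrum of the lifted graph decomposes as $\bigcup_\chi \spec(A(\chi))$ for $\ell$ characters, where each $A(\chi)$ is an $n\times n$ complex signed adjacency matrix. Writing $A(\chi) = C + iD$ with real symmetric $C,D$ and noting $\norm{A(\chi)} \le 2\max\{\norm C,\norm D\}$, the paper applies \cref{thm:bilu1} to the $n\times n$ matrices $C,D$ only. The discrepancy requirement is then over Boolean vectors in $\{0,1\}^n$---subsets of $V(G_0)$, not $V(G)$---and the case split in \cref{lem:bilu2} shows that for small $\sqrt{|S||T|}$ the expander mixing lemma on $G_0$ gives the bound \emph{deterministically} (no randomness at all), while for large $\sqrt{|S||T|}$ the concentration bound over the $nd/2$ edges kicks in comfortably. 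The union bound that remains is only over the $\ell$ characters, with a per-character failure probability of $2/(3\ell)$. This ``base-graph level'' application of Bilu--Linial is the structural idea you are missing.

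\textbf{Second, the derandomization.} With i.i.d.\ $h_e$ drawn uniformly from $H$ the seed length is $\Theta(nd\log\ell)$ bits; when $\ell = \exp(\Theta(n))$ this is $\Theta(n^2 d)$ bits, so brute-force enumeration would take $\exp(\Theta(n^2))$ time, far exceeding the claimed $\poly(\exp(n),\ell)$. Simply invoking ``a suitable PRG'' or ``conditional expectations with a pessimistic estimator'' does not by itself deliver the time bound. The paper's construction explicitly reduces the seed length \emph{before} applying the probabilistic argument: it generates the $nd-1$ group elements as a walk on an explicit expander $L$ on $\ell$ vertices, so the seed is the starting vertex plus the walk steps, $O(\log\ell + nd\log d')$ bits, and the concentration needed is provided by the expander Hoeffding inequality (\cref{thm:expanderchernoff}, \cref{cor:expandercher}) rather than Bernstein for i.i.d.\ samples. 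The brute-force enumeration over all $\ell\cdot (d')^{nd}$ walks then fits inside $\poly(\exp(n),\ell)$ time directly, and the subsequent spectral check is routine. This expander-walk device is not a detail that ``should be routine''; it is precisely what makes the derandomization go through in the stated time.

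In short: the Bilu--Linial endgame is right, but you need (a) the character decomposition to reduce the discrepancy problem to $n\times n$ matrices where the union bound is tractable and part of the bound is deterministic via the base graph's expansion, and (b) a low-seed pseudorandom source (the expander walk) built in from the start so that the seed space is enumerable in $\poly(\exp(n),\ell)$ time.
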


\subsection{Derandomized Quantum and Classical Codes}

We first state the code constructions in~\cite{PK20} and then show how
large explicit abelian lifts derandomize their codes.

\begin{theorem}[\cite{PK20}]\label{theo:pk}
   Let $G$ be a $d$-regular graph on $n\ell$-vertices such that $G$ has a symmetry\footnote{To be more precise, $\Z_\ell$ acts freely on $G$.}
   of $\Z_\ell$ and $\lambda_2(G) \leq \varepsilon \cdot d$. Then we can construct the following,
   \begin{itemize}
       \item A good quasi-cyclic LDPC code of block length $N=\Theta(n\ell)$ and index $\Theta(n)$.
       \item A quantum LDPC code which has distance $\Theta_{\epsilon,d}(\ell)$ and dimension $\Theta(n)$.
   \end{itemize}
\end{theorem}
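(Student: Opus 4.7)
The plan is to follow the lifted-product construction of Panteleev and Kalachev. Since $\Z_\ell$ acts freely on $G$, quotienting by this action produces a base multigraph on $n$ vertices whose edges are labeled by elements of $\Z_\ell$; I would collect these labels into an $n \times n$ biadjacency-type matrix $\matr A$ over the group algebra $R \defeq \mathbb{F}_2[\Z_\ell]$, chosen so that ``unfolding'' $\matr A$ into an $n\ell \times n\ell$ matrix over $\mathbb{F}_2$ recovers (the bipartite double cover of) $G$.

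For the classical quasi-cyclic code, I would take a suitable submatrix of the unfolded $\matr A$ as the parity-check matrix of a code $C$. Unfolding preserves row and column weights, so $C$ is $O(d)$-LDPC of block length $N = \Theta(n\ell)$, and quasi-cyclicity of index $n$ is inherited from the $\Z_\ell$-symmetry of $G$. To argue that $C$ has linear rate and linear distance, I would run a Sipser--Spielman style argument: the spectral bound $\lambda_2(G) \le \varepsilon d$ implies that every small set of coordinates has vertex-expansion strictly above $d/2$ into parity checks, which via a standard double-counting of unsatisfied checks gives $\Omega(N)$ distance, with the rate bound then following from the row count.

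For the quantum code, I would apply the lifted product over $R$: from $\matr A$ extract two matrices $\matr A_1, \matr A_2$ and form the CSS code with parity checks $H_X \defeq (\matr A_1 \otimes I \mid I \otimes \matr A_2^{*})$ and $H_Z \defeq (I \otimes \matr A_2 \mid \matr A_1^{*} \otimes I)$, where $\matr A^{*}$ denotes the $R$-conjugate transpose (transpose combined with group inversion in $R$). The CSS condition $H_X H_Z^{T} = 0$ holds automatically from the algebra, LDPC-ness is inherited from the bounded row and column weights of $\matr A_1, \matr A_2$, and a K\"unneth-style rank computation over $R$ yields quantum dimension $\Theta(n)$. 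The main obstacle, and where I expect the bulk of the work to lie, is the distance bound $\Omega_{\varepsilon,d}(\ell)$: I would translate any nontrivial logical operator of weight $w$ into a nonzero $R$-kernel element whose ``shadow'' on the $n$ base vertices of $G$ is supported on only $O(w/\ell)$ positions, and then invoke the expander mixing lemma on $G$ applied to this shadow and its complement to force the shadow either to vanish or to have $\Omega(n)$ support; combined with the $R$-module constraints imposed by the lifted product, this forces $w = \Omega(\ell)$. Making this ``shadow-expansion'' step fully rigorous, while simultaneously tracking the combinatorial support in $G$ and the group-algebra structure, is the delicate part of the argument.
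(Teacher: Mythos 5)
The paper does not prove this theorem; it imports it verbatim from \cite{PK20}, and only sketches the underlying construction in the section on explicit quantum and classical codes. Measured against that sketch, your proposal has two substantial structural gaps. First, the classical code is not ``a suitable submatrix of the unfolded $\matr A$'': it is the Tanner code $\textup{T}(G,\mathcal{C}_0)$ built from $G$ together with a small inner code $\mathcal{C}_0 \subseteq \mathbb{F}_2^d$ such that both $\mathcal{C}_0$ and $\mathcal{C}_0^\perp$ meet the Gilbert--Varshamov bound. Each vertex of $G$ imposes the constraints of $\mathcal{C}_0$ on its $d$ incident edges, and it is the interplay between the local distance of $\mathcal{C}_0$, $\mathcal{C}_0^\perp$ and the spectral expansion $\lambda_2(G) \le \varepsilon d$ that drives the Sipser--Spielman argument. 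Your proposal never introduces $\mathcal{C}_0$, and without it there is no code with linear rate and distance to speak of; the rows of the $\mathbb{F}_2$-unfolded adjacency matrix of a $d$-regular graph do not by themselves define a good parity-check matrix.

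Second, the quantum code in \cite{PK20} is the lifted product $\textup{LP}(B,F)$ where $B$ is the bipartite factor graph of $\textup{T}(G,\mathcal{C}_0)$ viewed over $R = \mathbb{F}_2[\Z_\ell]$, and $F$ is, crucially, the $\ell$-cycle graph (the $1\times 1$ matrix $1+x$ over $R$). You instead propose a lifted product of two matrices $\matr A_1, \matr A_2$ both ``extracted from $\matr A$'', i.e.\ both coming from the expander side. That is a genuinely different and much harder object---it is essentially the balanced-product problem of making both factors simultaneously good---and is not what \cite{PK20} constructs. Keeping one factor the cycle code is precisely what makes the dimension and distance bounds tractable: dimension $\Theta(n)$ comes from the K\"unneth-style rank count as you say, but the distance $\Theta_{\varepsilon,d}(\ell)$ comes from showing that nontrivial logical operators must wind around the $\Z_\ell$ direction of the cycle factor, combined with the robustness of the Tanner code on the expander side. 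The ``shadow-expansion'' step you sketch does not obviously reduce to this, and without the cycle factor in place, the claimed $\Omega(\ell)$ lower bound on weight would not follow by the route you outline.
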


Panteleev and Kalachev use the aforementioned \emph{randomized}
construction of abelian lifted expanders by Agarwal
\etal~\cite{ACKM19}, where each edge of the base graph is a associated
with an element in $\mathbb{Z}_\ell$ sampled uniformly. When $\ell$ is
in the exponential regime they obtain quantum LDPC codes with almost
linear distance, \ie $\Omega(N/\log(N))$.

Breuckmann and Eberhardt~\cite{BreuckmannE21} gave a derandomization
of \cite{PK20} in a more restricted parameter regime by observing that
the Ramanujan graph construction by Lubotsky, Philips and
Sarnak \cite{LPS88} of size $n$ has a (free) action of
$\Z_{n^{1/3}}.$ By \cref{theo:pk}, we have an explicit quantum LDPC
code of distance $O(N^{1/3})$ under the notion of
distance\footnote{\cite{BreuckmannE21} state their result for a
slightly different notion of a quantum codes called subsystems codes
for which the corresponding distance (also known as dressed distance)
is larger.}  in~\cite{PK20,hho20}.

As a direct corollary of \cref{theo:main2}, we have a complete
derandomization of \cite{PK20} yielding explicit quantum LDPC codes of
almost linear distance. This greatly improves the distance of the
existing explicit construction. We also get good quasi-cyclic LDPC
codes of almost linear circulant size. Moreover, the ability to
construct a wide range of lift sizes from \cref{theo:main1} lets us
control the circulant size which can be useful in practice. By
controlling the lift size, we can also directly amplify the rate of
their quantum LDPC codes (without resorting to the product of
complexes). To summarize,

\begin{corollary}[\cite{PK20}, \cref{theo:main1} ,\cref{theo:main2}]
   We have explicit polynomial time construction of each of the following,
   \begin{itemize}
       \item Good quasi-cyclic LDPC code of block length $N$ and
  any circulant size up to $N/\polylog(N)$ or $\Theta(N/\log(N))$.
       \item Quantum LDPC code with distance $\Omega(N/\log(N))$
  and dimension $\Omega(\log(N))$.
        \item Quantum LDPC code with distance $\Omega(N^{1-\alpha})$
  and dimension $\Theta(N^{\alpha})$ for every constant $\alpha > 0$.
   \end{itemize}
\end{corollary}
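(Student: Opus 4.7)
The plan is to apply Theorem \ref{theo:pk} in a black-box fashion, supplying in each of the three bullets an explicit $d$-regular graph on $N = n\ell$ vertices equipped with a free $\mathbb{Z}_\ell$-action and satisfying $\lambda_2 \le \varepsilon d$. In every case I take $H = \mathbb{Z}_\ell$ acting regularly on $[\ell]$, which is a transitive abelian subgroup of $\Sym(\ell)$; the natural $\mathbb{Z}_\ell$-action on any $(H,\ell)$-lift is free by construction, so the structural hypothesis of Theorem \ref{theo:pk} is automatic. Thus the whole task reduces to choosing the parameters $(n,\ell,d)$ consistently with the desired block length and invoking whichever of Theorem \ref{theo:main1} or Theorem \ref{theo:main2} covers the corresponding lift-size regime.

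For the quasi-cyclic LDPC bullet, any target circulant size $\ell \le N/\polylog(N)$ is realized by setting $n = N/\ell$, which is at least $\polylog(N)$, so $\log\ell \le \log N \le n^{\delta_0}$ and we sit in the sub-exponential or moderately-exponential regime of Theorem \ref{theo:main1}; fixing $d \ge d_0(\varepsilon)$ the resulting graph has $\lambda_2 \le \varepsilon d$. For the extremal circulant size $\ell = \Theta(N/\log N)$ we take $n = \Theta(\log N)$, so $\ell = \exp(\Theta(n))$ is in the exactly-exponential regime, and Theorem \ref{theo:main2} gives $\lambda_2 \le O(\sqrt{d}\log d) \le \varepsilon d$ for any sufficiently large constant $d$. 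Feeding either graph into Theorem \ref{theo:pk} yields the claimed good quasi-cyclic LDPC code of index $\Theta(n)$ and block length $\Theta(n\ell) = \Theta(N)$. The two quantum bullets are obtained by the same parameter bookkeeping using the distance/dimension formulas $\Theta_{\varepsilon,d}(\ell)$ and $\Theta(n)$ from Theorem \ref{theo:pk}: distance $\Omega(N/\log N)$ with dimension $\Omega(\log N)$ uses exactly the $n=\Theta(\log N)$, $\ell=\Theta(N/\log N)$ instance from Theorem \ref{theo:main2}; distance $\Omega(N^{1-\alpha})$ with dimension $\Theta(N^{\alpha})$ uses $n=\Theta(N^{\alpha})$, $\ell=\Theta(N^{1-\alpha})$, which satisfies $\log \ell = O(\log n)$ and is thus well inside the sub-exponential regime of Theorem \ref{theo:main1} for any fixed $\alpha \in (0,1)$ and $d \ge d_0(\varepsilon,\alpha)$.

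The proof is essentially bookkeeping — all of the real effort lives inside Theorem \ref{theo:pk} and the explicit constructions of Theorems \ref{theo:main1}--\ref{theo:main2}. The only point that deserves care, and that I expect to be the main (if minor) obstacle, is checking that the three bullets really do straddle the three lift-size regimes: neither Theorem \ref{theo:main1} alone nor Theorem \ref{theo:main2} alone suffices, since the former is capped at $\ell \le \exp(n^{\delta_0})$ and so cannot reach circulant size $\Theta(N/\log N)$, while the latter requires $\ell = \exp(\Theta(n))$ and therefore cannot produce polynomial or quasi-polynomial circulant sizes. Having both explicit constructions available is exactly what lets all three bullets be discharged simultaneously, and the verification amounts to confirming the inequalities $\log\ell \le n^{\delta_0}$ or $\ell \le \exp(O(n))$ in each case and choosing $d = d(\varepsilon,\alpha)$ large enough to absorb the $O(\sqrt{d}\log d)$ or $2\sqrt{d-1}+\varepsilon$ bounds into $\varepsilon d$.
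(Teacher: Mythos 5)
Your proposal is correct and follows essentially the same route as the paper's own proof: in each bullet you invoke Theorem~\ref{theo:pk} as a black box, choose $n$ and $\ell$ so that $N = n\ell$, and supply the explicit lifted graph from Theorem~\ref{theo:main1} (sub- or moderately-exponential $\ell$) or Theorem~\ref{theo:main2} (exactly-exponential $\ell = \exp(\Theta(n))$), taking $d$ large enough to absorb the spectral bounds into $\varepsilon d$. The only place where you are slightly looser than the paper is the polylog bound in the first bullet: the paper pins down the exponent as $\ell = O\bigl(N/(\log N)^{1/\delta_0}\bigr)$, which is what the informal $N/\polylog(N)$ in the statement is shorthand for, whereas you leave the polylog exponent implicit; this is a matter of bookkeeping, not a gap.
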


\subsection*{Further Directions}
Our work also leads to several natural avenues for further exploration.
\begin{enumerate}
\item \textit{More Symmetries} - While these lift-based constructions yield graphs with symmetries arising from abelian groups, it is interesting to understand whether one can construct sparse graphs with symmetries corresponding to other families of groups.
Such constructions may require new ways of using the symmetry groups, in ways other than in lifts of a base graph.
More generally, it may be useful to investigate other ways of exploiting graph symmetry, beyond their applications to codes.
\item \textit{Better notions of explicitness} - It is a very interesting problem to find strongly explicit constructions of lifted abelian expander. Even making the running time closer to linear would be interesting.
Also, since quasicyclic codes are widely used in practice, it may be helpful to find explicit constructions which are efficiently implementable.
\item \textit{Complete Range} - Can we derandomize abelian lifts for $\ell$ in between $2^{n^{\Theta(1)}}$ and $2^{O_d(n)}$? Can we extend the near-Ramanujan bound beyond the subexponential range?
\end{enumerate}

\section{Preliminaries}
For an operator $M$, let its eigenvalues be ordered such that
$\{ \abs{\lambda_1(M)} \ge \cdots \ge \abs{\lambda_n(M)}\}$. We define
$\rho_2(M) = \abs{\lambda_2(M)}$. For an an $n$-vertex graph $G =
(V,E)$, we denote by $\lambda(G) = \rho_2(A)$, where $A$ is its
adjacency operator.

We assume that we have an an ordering on $V$ and by convention, $(u,v) \in E$ if $u \leq v$.

A \emph{character}\footnote{The definition we give is that of
a \emph{linear character}. We use the term character as we work only
with abelian groups. } of a group is a map $\chi: H \to \C^*$ that
respects group multiplication, i.e., $\chi(h_1h_2)
= \chi(h_1)\chi(h_2)$. For a finite group $|\chi(h)| = 1$ for every
$h \in H$. The \textit{trivial character} is the one which has
$\chi(h) = 1$ for every $h$. The rest of the characters we
call \textit{non-trivial}.

The action of a group $H$ on a set of $\ell$ elements is defined by a
map $\psi: H \to \Sym(\ell)$ which satisfies $\psi(h_1h_2)
= \psi(h_1)\psi(h_2) $.  Since we only care about the action of the
group, we will assume that our input is actually
$\psi(H) \subseteq \Sym(\ell)$ and the action is the natural one.

\begin{definition}[$(H,\ell)$-lift of a graph]
    An $(H,\ell)$-signing of an undirected graph $G = (V,E)$ is a
    function $s: E \to H \subseteq \Sym(\ell)$. The lifted graph $G(s)
    = (V',E')$ is a graph on $\ell$ copies of the vertices $V' = V\times
    [\ell]$ where for every edge $(u,v) \in E$ we have $((u,i),
    (v,s(u,v)\acts i) ) \in E'$
\end{definition}

We will restrict to analyzing abelian $H$ and the most important case
to consider is when $H = \Z_\ell$, i.e. the cyclic group. A necessary
condition for the lift to be expanding is for it to be connected. A
subgroup $H$ is \textit{transitive} if for every $i, j \in [\ell]$,
there exists $h \in H$ such that $h\cdot i = j$. Lifts of
non-transitive subgroups are disconnected because if the pair
$\{i,j\}$ violate the condition then any pair $(u,i)$ and $(v,j)$ are
disconnected. Thus, we will assume henceforth that we work with
transitive abelian subgroups.

Let $E^d$ denote the set of directed edges i.e. $E^d = \{(u,v),
(v,u) \;|\; (u,v) \in E \} $. We extend the signing to $E^d$ such that
for an edge $(u,v) \in E $, $\;s(v,u) := s(u,v)^{-1}$.

\medskip
\begin{definition}[Non-backtracking walk operator]\label{def:backtrack}
  For an extended signing $s:E^d\to H$ and a character $\chi$ of $H$,
  the signed non-backtracking walk matrix $B_s(\chi)$ is a
  non-symmetric matrix of size $|E^d| \times |E^d|$ in which the entry
  corresponding to the pair of edges $(u,v), (x,y)$ is $\chi(s(x,y))$
  if $v = x, \; u \neq y$, and zero otherwise.
\end{definition}

The unsigned variant is obtained by taking the trivial character in
the definition above. Let the non-backtracking walk matrix of $G$ be
$B$ and the lifted graph with respect to a signing $s$ be
$B_{G(s)}$. We use the following standard facts.

\begin{fact}\label{fact:ihara}
Let $B$ be the non-backtracking walk matrix of a $d$-regular graph $G$. 
Then, 
\[
\lambda(G) ~\leq~  2\cdot \max\{ \sqrt{d-1},\; \rho_2(B)\}.
\]
\end{fact}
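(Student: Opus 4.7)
The plan is to derive this fact from the Ihara--Bass determinantal identity, which for a $d$-regular graph $G$ reads
\[
\det(I - uB) \;=\; (1-u^2)^{|E|-|V|} \cdot \det\!\bigl(I - uA + (d-1)u^2 I\bigr).
\]
Reading off roots of both sides, I get a bijection (with multiplicity) between the nonzero eigenvalues of $B$ and the eigenvalues of $A$: namely, if $\lambda$ is an eigenvalue of $A$, then the two roots of the quadratic $\mu^2 - \lambda\mu + (d-1) = 0$ are both eigenvalues of $B$, and conversely every eigenvalue of $B$ other than $\pm 1$ arises this way. I would first state this correspondence cleanly as a lemma, citing Ihara--Bass; this is the only non-trivial input.

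Next, I would track what happens to the top two eigenvalues of $A$. For $\lambda_1(A) = d$ the quadratic has roots $d-1$ and $1$, so $\rho_1(B) = d-1$ and the corresponding second root $1$ is harmless. For $\lambda = \lambda_2(A)$ (in absolute value), writing $\mu_\pm = \tfrac{1}{2}\bigl(\lambda \pm \sqrt{\lambda^2 - 4(d-1)}\bigr)$, I split into two cases according to the sign of the discriminant. If $|\lambda| \leq 2\sqrt{d-1}$ then $\mu_\pm$ is complex with $|\mu_\pm|^2 = \mu_+\mu_- = d-1$, and we already have $|\lambda|\le 2\sqrt{d-1}$, matching the first term of the max.

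In the other case $|\lambda| > 2\sqrt{d-1}$, both roots are real with product $d-1 > 0$ and sum $\lambda$, so they have the same sign and $|\mu_+| + |\mu_-| = |\lambda|$. Therefore $\max(|\mu_+|,|\mu_-|) \ge |\lambda|/2$, and since $\mu_\pm$ are eigenvalues of $B$ not equal to the Perron value $d-1$, this larger root contributes to $\rho_2(B)$, giving $\rho_2(B) \ge |\lambda|/2$, i.e., $|\lambda| \le 2\rho_2(B)$. Combining the two cases yields the claimed bound $\lambda(G) = |\lambda_2(A)| \le 2\max\{\sqrt{d-1},\,\rho_2(B)\}$.

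The only potential obstacle is verifying that the larger root $\mu_+$ really is captured by $\rho_2(B)$ rather than being the Perron eigenvalue; this is straightforward since $\mu_+ = d-1$ would force $\lambda = d$, which contradicts $\lambda = \lambda_2(A)$ when $G$ is connected (and when $G$ is disconnected one reduces to a connected component). So the whole argument is essentially a short computation on top of the Ihara--Bass identity.
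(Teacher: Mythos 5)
Your proof is correct, and it rests on exactly the same quadratic relationship $\mu^2 - \lambda\mu + (d-1) = 0$ between eigenvalues of $A$ and of $B$ that the paper uses. The difference is how you arrive at that relation: you invoke the Ihara--Bass determinantal identity as a black box, whereas the paper (Claim~B.1 / \cref{claim:o_donnel_obs} in the appendix) gives a short, self-contained eigenvector computation: if $Af = (\beta + \tfrac{d-1}{\beta})f$, then $g(u,v) := f(u) - \beta f(v)$ is directly verified to satisfy $Bg = \beta g$. From this, \cref{cor:ihara_bass_simple} runs the contrapositive of the argument you give for your ``second case.'' The paper's route has two advantages worth noting. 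First, it avoids citing the full determinantal identity, needing only the one direction (eigenvalue of $A$ $\Rightarrow$ eigenvalue of $B$) via an explicit eigenvector; this keeps the argument elementary. Second, and more importantly for this paper, the eigenvector construction extends verbatim to \emph{signed} (indeed complex-valued) $A$ with $|A(u,v)| = 1$, via $g(u,v) := A(u,v)^{-1}(f(u) - \beta A(u,v) f(v))$. The paper needs that generality because the same fact is applied to each character block $B_s(\chi)$, whose ``adjacency'' entries are roots of unity; the classical Ihara--Bass statement you cite is for the unsigned case. One small remark on your writeup: you carefully addressed why $\mu_+$ is not the Perron eigenvalue $d-1$, which the paper's proof of \cref{cor:ihara_bass_simple} glosses over --- a nice touch, though in the unsigned connected case the inequality $\lambda_2 < d$ makes this immediate.
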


\begin{fact}\label{fact:character_decomp}
  If $H\subseteq \Sym(\ell)$ is abelian, then there exist characters
  $\{\chi_1,\cdots ,\chi_\ell\}$\footnote{These need not be
  distinct. For example if $H$ is trivial, then all the $\chi_i$ are
  trivial} such that we have $\spec(B_{G(s)})
  = \bigcup_i \spec(B_s(\chi_i))$. If $H$ is transitive, then exactly
  one of the characters is trivial.
\end{fact}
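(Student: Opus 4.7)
The plan is to exploit the simultaneous diagonalizability of the commuting unitary permutation matrices arising from the abelian group $H$, and then decompose the non-backtracking walk space of the lift along characters of $H$.

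I would first set up the spaces. A directed edge of $G(s)$ has the form $((u,i),(v,s(u,v)\acts i))$, so lifted directed edges are naturally parametrized by pairs $(e,i) \in E^d \times [\ell]$, giving the identification $\C^{E^d_{G(s)}} \cong \C^{E^d} \otimes \C^{\ell}$. Since $H \subseteq \Sym(\ell)$ is abelian, the permutation matrices $\{\psi(h) : h \in H\}$ acting on $\C^{\ell}$ are pairwise commuting unitaries, so they admit a common orthonormal eigenbasis $v_1,\ldots,v_\ell$. Defining $\chi_k(h)$ to be the eigenvalue of $\psi(h)$ on $v_k$, the multiplicativity of $\psi$ ensures that $\chi_k : H \to \C^*$ is multiplicative, hence a character of $H$.

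Next I would compute the action of $B_{G(s)}$ on a pure tensor $g \otimes v_k$. A direct check shows that the non-backtracking condition on the lift reduces to the underlying condition $u \ne y$ in the base graph (if $y = u$, the forced lifted successor automatically backtracks). Combined with the eigenvalue relation $v_k(s(u,v)\acts i) = \chi_k(s(u,v))\, v_k(i)$, one obtains
\[
B_{G(s)}(g\otimes v_k)\bigl((u,v),i\bigr) \;=\; \chi_k(s(u,v)) \left(\sum_{\substack{y \ne u,\; (v,y)\in E^d}} g(v,y)\right) v_k(i).
\]
Thus $\C^{E^d} \otimes v_k$ is $B_{G(s)}$-invariant, and the induced operator $T_k$ on $\C^{E^d}$ is $T_k = M_k B$, where $(M_k g)(u,v) = \chi_k(s(u,v))\, g(u,v)$ is diagonal multiplication and $B$ is the unsigned non-backtracking walk matrix. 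Reading off \cref{def:backtrack} one also has $B_s(\chi_k) = B M_k$; since $M_k$ is invertible (as $|\chi_k(h)| = 1$), we get $T_k = M_k B_s(\chi_k) M_k^{-1}$, so $T_k$ and $B_s(\chi_k)$ are similar and share their spectrum.

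Assembling the orthogonal decomposition $\C^{E^d} \otimes \C^\ell = \bigoplus_{k=1}^{\ell} \C^{E^d} \otimes v_k$ then yields $\spec(B_{G(s)}) = \bigcup_k \spec(T_k) = \bigcup_k \spec(B_s(\chi_k))$. For the transitivity claim, the multiplicity of the trivial character in the permutation representation of $H$ on $\C^\ell$ equals the number of $H$-orbits on $[\ell]$; transitivity forces this multiplicity to be $1$, so exactly one $\chi_k$ (namely the one whose eigenvector is a nonzero multiple of the all-ones vector) is trivial. The main subtlety of the argument is the asymmetry in \cref{def:backtrack}: the signed matrix $B_s(\chi_k)$ applies $\chi_k$ to the \emph{second} edge, whereas the tensor-product calculation naturally produces $\chi_k$ applied to the \emph{first} edge, which is exactly why one must pass through the similarity $T_k = M_k B_s(\chi_k) M_k^{-1}$ rather than claim a literal equality.
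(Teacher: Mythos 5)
Your proof is correct and takes essentially the same route as the paper: identify the lifted edge space with $\C^{E^d}\otimes\C^\ell$, simultaneously diagonalize the commuting permutation matrices $\{\psi(h)\}$, and decompose $B_{G(s)}$ block-by-block along the resulting characters. The only cosmetic difference is that the paper parametrizes a lifted edge by the fiber index of its \emph{head}, which makes the character land directly on the second edge and gives $B_s(\chi_k)$ on the nose, whereas your parametrization puts the character on the first edge, so you need the extra (correct) observation that $M_k B$ and $B M_k$ are similar.
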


\section{Proof Strategy}
We give an overview of the proof of \cref{theo:main1}.  As mentioned
earlier, our results build on the work of Mohanty, O'Donnell and
Paredes~\cite{MOP20}, so we briefly recall notions and ideas from
their work that we will need.

Let $G_0$ be a base expander graph and $s \colon E_0 \to \mathbb{Z}_2$
be a signing that defines a lift. It is convenient to first think that
the signing is chosen uniformly at random and later see which
properties were indeed used so that an appropriate derandomization
tool may be used. Using well known facts (\cref{fact:ihara}
and \cref{fact:character_decomp}) they reduce the problem of analyzing
the expansion of the lifted graph to that of bounding the spectral
radius $\rho(B_s)$ of the non-backtracking operator $B_s$.

\noindent \textbf{The MOP Argument:} A common technique to bound the spectral radius 
is the trace power method which in our case amounts to counting
special non-backtracking walks. This is the motivation for using the
non-backtracking operator $B_s$ instead of the more common adjacency
operator which require counting closed walks (which is potentially
harder). Another standard fact\footnote{To avoid discussing some
unimportant technicalities, we will make some simplifications in this
high-level overview.} is that
\begin{align*}
  \rho(B_s)^{2k} ~\le~ \tr((B_s^{*})^kB_s^k) ~=~ \sum_{\substack{(e_1,\ldots,e_{2k}) \\ \textup{closed edge walk}}} \prod_{i=1}^{2k} \chi(s(e_i)).
\end{align*}

The above expression greatly simplifies when we take the expectation
over a uniformly random signing since only walks in which every edge
occurs at least twice stand a chance of surviving the
expectation. These walks are called singleton free in~\cite{MOP20}. We have
\begin{align*}
  \E_{s \in \mathbb{Z}_2^{E_0}}\left[ \rho(B_s)^{2k} \right] ~\le~ \sum_{\substack{(e_1,\ldots,e_{2k}) \\ \textup{closed edge walk}}} \E_{s \in \mathbb{Z}_2^{E_0}}\left[\prod_{i=1}^{2k} \chi(s(e_i)) \right] \le \abs{\left\{\substack{\textup{$2k$-length singleton free} \\ \textup{non-backtracking walks in $G_0$}}\right\}},
\end{align*}
reducing the problem of bounding the spectral radius to a counting
problem of these special walks. In the hypothetical (idealized)
scenario of $G_0$ being Ramanujan and the counting on the RHS above
being $(d-1)^{k}$, we would have a Ramanujan lift. The above
expression also hints that $\epsilon$-bias distributions might be a
useful derandomization tool here. This idealized scenario can be too
optimistic and the count of $(d-1)^{k}$ has additional factors, but
they remain small after taking a $2k$-th root (when $k$ is neither too
small or large)

One of the main technical contributions in~\cite{MOP20} is the
counting of $2k$-length singleton free non-backtracking walks in
$G_0$, which they call hikes.  For the sake of intuition, we will
assume that $G_0$ has girth $g$, but it is not hard to modify the
argument when $G_0$ has at most one cycle around any neighborhood of
radius $< g/2$ centered at vertex in $G_0$ (the bicycle freeness
property). They view the vertices and edges visited in a hike as
forming a hike graph $\mathcal{H}$. Assuming that $g
= \Omega(\log_{d-1}(n))$, if $k$ is not too large, then $\mathcal{H}$
looks like a tree possibly with a few additional edges forming cycles
as established by Alon, Hoory and Linial in~\cite{AHL02} (and
generalized in \cite{MOP20} to bicycle-free radius from girth).

Assuming that the hike is singleton free, we can have at most $k$
steps that visit an edge that was not previously visited. This implies
that the hike graph $\mathcal{H}$ has at most $k$ edges and at most
$k+1$ vertices (since it is connected).  They count the number of
these special walks by directly specifying an encoding for the hike.
Up to negligible factors (after $2k$-th root for $k$ not too small),
they show that there are at most
\begin{align*}
  n \cdot (d-1)^{k} \cdot k^{O\left(\frac{\ln(k)}{g}\right)  \cdot k},
\end{align*}
singleton free hikes of length $2k$ (see \cite[Theorem 3.9]{MOP20} for
precise details). This bound trivializes, \ie it becomes at least
$(d-1)^{2k}$, for $\ln(k) \gg \sqrt{g}
= \Theta\left(\sqrt{\log_{d-1}(n)}\right)$. This means that we cannot
use their bound for very long walks and this in turn prevents us from
getting lift sizes larger than $2^{2^{\Theta(\sqrt{\log_{d-1}(n)})}}$
from their results.

\noindent \textbf{Our Approach:} Now, let's consider $\mathbb{Z}_\ell$
lifts for large $\ell$. The spectral radius of each individual
$B_s(\chi)$ can be analyzed in a similar fashion as above via the
trace power method. However, we need to bound all of
them \emph{simultaneously}. We know no better way than a simple union
bound over the $\ell-1$ cases, but this will force us to obtain a much
better concentration guarantee out of the trace power method which in
turn entails having to consider much larger walk lengths.

Instead of encoding a hike directly as in~\cite{MOP20}, we will first
encode the subgraph of $G_0$ traversed by the hike, which we call hike
graph, and then encode the hike having the full hike graph at our
disposal. We will give two different encodings for the hike graph. The
first one is simpler and can encode an arbitrary graph. The second
encoding uses the special structure of the hike graph, namely, having
few vertices of degree greater than $2$. Both encodings are based on
the traversal history of the simple depth-first search (DFS)
algorithm. Let $\mathcal{H}$ be the hike graph on $m \le k$ edges and
$n' \le k+1$ vertices. As DFS traverses $\mathcal{H}$, each of its
edges will be visited twice: first ``forward'' via a recursive call
and later ``backwards'' via a backtracking operation.  We view each
step of the DFS traversal as being associated with an edge that is
being currently traversed and the associated type of traversal:
recursive (R) or backtracking (B). A key observation is that only for
the recursive traversals we need to know the next neighbor out of
$d-1$ possibilities (except for the first step). For the backtracking
steps, we can rely on the current stack of DFS. Thus, if we are given
a starting vertex from $G_0$, a binary string in $\set{R,B}^{2m}$ and
a next neighbor for each recursive step, we can reconstruct
$\mathcal{H}$. Note that there at most
\begin{align*}
  n \cdot d \cdot (d-1)^{k} \cdot 2^{2k},
\end{align*}
such encodings. Having access to the hike graph and again assuming
that the graph has girth $g = \Omega(\log_{d-1}(n))$ (similarly,
bicycle freeness is also enough). Using the locally tree-like
structure, a $2k$-length hike can be specified by splitting it into
segments of length $< g/2$, by specifying the starting vertex of the
first segment and the ending vertex of each segment, we have enough
information to recover the full hike. Note that there are at most
\begin{align*}
  k^{O(k/g)},
\end{align*}
ways of encoding a hike. Then, the number of $2k$-hikes in $G_0$ is at
most
\begin{align*}
  n \cdot d \cdot (d-1)^{k} \cdot 2^{2k} \cdot k^{O(k/g)}.
\end{align*}
Now we can take $k \approx n^{\delta}$ for a sufficiently small
$\delta=\delta(d) > 0$ and obtain, after taking the $2k$-th root of
the above quantity,
\begin{align*}
  \rho(B_s) ~\le~ (1+\epsilon) \cdot 2 \cdot \sqrt{(d-1)},
\end{align*}
when $k=k(n,d,\epsilon)$ is sufficiently large and $c=c(\epsilon)$ is
sufficiently small. The extra factor $2$ prevent us from obtaining
near-Ramanujan bounds with this counting. Nonetheless, the simple
counting already allows us to obtain expansion $O(\sqrt{d})$ for lifts
sizes as large as $2^{n^{\delta(d)}}$. Moreover, by weakening the
expansion guarantee we can obtain lift sizes as large as
$2^{n^{\Theta(1)}}$ from this counting and obtain part
of~\cref{theo:main1}. If we insist on getting a near-Ramanujan bound,
we need to compress the traversal history further since storing a
string $\set{R,B}^{2m}$ is too costly and leads to this factor of
$2$. Note that this string has an equal number of $R$ and $B$ symbols,
so it cannot be naively compressed.

To obtain a near-Ramanujan graph, we will take advantage of the
special structure of the hike graph (when the walk length is large but
not too large) in which most of its vertices have degree exactly $2$.
These degree $2$ vertices are particularly simple to handle in a DFS
traversal. For them, we only need to store the next neighbor out of
$d-1$ possibilities in $G_0$ (except possibly for the first step). In
a sequence of backtrackings, if the top of the DFS stack is a degree
$2$ vertex we know that we are done processing it since no further
recursive call will be initiated from it. Then, we simply pop it from
the stack. It is for the ``rare'' at most $\delta \cdot n'$ vertices
$v$ of degree $\ge 3$ that we need to store how many extra recursive
calls $t_v$ we issue from $v$ and a tuple of additional next neighbors
$(d_1,\dots,d_{t_v})$. The total number of such encodings is at most
\begin{align*}
  n \cdot d \cdot (d-1)^{k} \cdot \binom{k+1}{\delta (k+1)} \cdot (d-1)^{\delta (k+1)},
\end{align*}
which combined with the same previous way of encoding a hike given its graph
results in a total number of hike encodings of $G_0$ of at most
\begin{align*}
  n \cdot d \cdot (d-1)^{k} \cdot \binom{k+1}{\delta (k+1)} \cdot (d-1)^{\delta (k+1)} \cdot k^{O(k/g)},
\end{align*}
By choosing $\delta=\delta(d,\epsilon)$ sufficiently small and taking
$k=k(n,d,\epsilon) \le 2^{\delta \cdot g} \approx n^{O_d(\delta)}$ sufficiently large, we obtain
after taking the $2k$-th root
\begin{align*}
  \rho(B_s) ~\le~  \sqrt{(d-1)} + \epsilon,
\end{align*}
indeed leading to a near-Ramanujan bound for lifts as large as
$2^{n^{\delta}}$ in \cref{theo:main1}.

Now we briefly explain how to handle the union bound to ensure that
$\rho(B_s(\chi))$ is \emph{simultaneously} small for all $(\ell-1)$
non-trivial characters (in the decomposition
of \cref{fact:character_decomp}). This union bound is \emph{standard}
when using the trace power method, what is relevant is the trade-off
between lift size and walk length. To obtain a high probability
guarantee from a guarantee on expectation, it is standard to consider
larger walk lengths from which concentration follows from a simple
Markov inequality. More precisely, if for some function $f\;$,
$\E \rho(B_s(\chi_{j}))^{2k} \le f(n,d,g,k)$, then by Markov's
inequality,
\begin{align*}
  \Pr_{s \in \mathbb{Z}_{\ell}^{E_0}} \left[\rho(B_s(\chi)) \ge 2^{\log_2(\ell)/(2k)} \cdot f(n,d,g,k)^{1/(2k)} \right] ~\le~ \frac{1}{\ell}.
\end{align*}
Therefore, for $k \ge \log_2(\ell)$ sufficiently large, we can union
bound over all characters $\chi$ and obtain similar bounds as
before. As alluded above, this lower bound on the length of the walk
depending on the lift size is the reason why we are led to consider
much longer walks. To conclude this proof sketch, we need to replace a
random signing by a pseudorandom random one. As in~\cite{MOP20}, we
use $\epsilon$-biased distributions but suitably generalized to
abelian groups, \eg the one\footnote{For our application, it suffices
to have the support size of the $\epsilon$-biased distribution
polynomial in $1/\epsilon$.} by Jalan and Moshkovitz
in \cite{JM21}. We may be taking very large walks on the base graph
$G_0$, so the error of the generator needs to be smaller than $n \cdot
d^{2k}$, where $k$ can be as large as $n^{\Theta(1)}$. We note that as
long as the degree $d$ is a constant this quantity is at most a
polynomial in the size of the \emph{final} lifted graph $G$ since
walks of length $O(\log(\abs{V(G)}))$ suffice for any lift size up to
full extent of $2^{O(n)}$, for which abelian lifts can be expanding.

The above argument covers~\cref{theo:main1}, namely, the
sub-exponential and moderately-exponential abelian lift sizes. The
``exact'' exponential regime of~\cref{theo:main2} relies on an elegant
converse of the expander mixing lemma by Bilu and
Linial~\cite{BL06}. Since this regime is simpler, we defer the details
to~\cref{sec:exact_exponential_lift}, where it is formally presented.

\section{A New Encoding for Special Walks}
In this section we will count the total number of singleton-free hikes
of a given length on a fixed graph, $G$. We split the count into two
parts. First, we count the number of possible hike graphs and then,
for a given hike graph $\cH$, we count the number of hikes that
can \ie yield $\cH$ on traversal. Each of these counts is via an
encoding argument and therefore we have two kinds of encoding. One for
graphs and the other for hikes. In the first part of the section we
give two ways of encoding graphs, and in the other half, we encode
hikes. Since the first section is a general encoding for subgraphs, we
relegate formal definitions related to hikes to a later section.

\subsection{Graph Encoding}
Let $\cH$ be a subgraph of a fixed $d$-regular graph $G$. We wish to
encode $\cH$ in a succinct way such that given the encoding and $G$,
we can recover $\cH$ uniquely. We will give two ways of encoding
$\cH$. The first one will be generic that works for any subgraph of a
$d$-regular graph. The second encoding takes advantage of the special
sparse structure (not too many vertices of degree greater than
two). We assume that we have an order on the neighbors of every
vertex, and thus, given $(v,j)$, we can access the $j^{th}$ neighbor
of $v$ efficiently.

We will do this by encoding a DFS based-traversal of it from a given
start vertex .  Here, we really need our DFS traversal to be optimal
in the sense that the number of times each edge is traversed is at
most two and not any higher. We, therefore, include precise details of
our implementation in ~\cref{sec:dfs_implementation}.
 
To reconstruct the graph, we reconstruct the traversal and so need
access to two types of data before every step - (1) Is this step
recursive or backtracking (2) If it is a recursive step, then which
neighbour do we recurse to.

To determine the neighbor of the current vertex we need to move to in
a recursive call we need to specify one out $d-1$ possibilities
(except in the first step which has $d$ possibilities). This can be
specified by a tuple of $(d_1,\ldots,d_{\abs{E(\mathcal{H})}}) \in [d]
\times [d-1]^{\abs{E(\mathcal{H})}-1}$ indicating the neighbor. For a
backtracking step, we just pop the stack and thus don't need any
additional data.

We use two ways to figure out whether a step is recursive or
backtracking. The direct way is to just record the sequence in a
binary string of length $2|E\mathcal{H}|$.  A neighbour $u$ of $v$ is
called \textit{recursive} if the edge $(v,u)$ is visited by a
recursive call from $v$. A simple observation about backtracking
sequences is that -- It starts when we encounter a vertex that has
already been visited or we reach a degree one vertex and ends when we
see a visited vertex that has unvisited recursive
neighbors. Therefore, we store a string $\sigma \in
[d]\times[d-1]^{\abs{V(\mathcal{H})}-1}$ in which $\sigma_i$ denotes
the number of recursive neighbors of the $i^{th}$ visited vertex. To
summarize,
\\~\\
\noindent  $\textbf{GraphEnc}(\cH)$: 
\begin{enumerate}[topsep=0.3ex, itemsep=0.4ex, label={({\alph*})}]
  \item Starting vertex $v_1 \in V(G)$
  \item A sequence of degrees $(d_1,\ldots,d_{\abs{E(\mathcal{H})}}) \in [d] \times [d-1]^{\abs{E(\mathcal{H})}-1}$
  \item Either $\sigma \in \set{\textup{R},\textup{B}}^{2\abs{E(\mathcal{H})}}$ (\textbf{Encoding I})
        or $\sigma \in [d]\times[d-1]^{\abs{V(\mathcal{H})}-1}$ (\textbf{Encoding II})
\end{enumerate}
\vspace{0.5cm}
\begin{algorithm_simple}{Unpacking Algorithm for \textbf{GraphEnc}}{$\textup{\textbf{GraphEnc}}(\mathcal{H})$}{$\mathcal{H}$}\label{algo:unpack_hike}
    \begin{itemize}[topsep=0.3ex,itemsep=0.4ex,parsep=0.7ex,label={$\cdot$},leftmargin = 0.4cm]
        \item Initialize DFS stack  $S$ with $v_1$
        \item Initialize $\mathcal{H} = (\set{v_1},\emptyset)$
        \item Initialize $n,r,t = 1$  \qquad \qquad // count visited vertices, recursive steps and total steps 
        \item Initialize $ord(v_1) = 1$
        \item While $S \ne \emptyset$: \begin{itemize}[label={$\cdot$},parsep=0.7ex]
              \item Let $v$ be the top vertex on the stack $S$
              \item $\textup{step} = \textsf{StepType}(v,t)$
              \item If $\textup{step} = \textup{R}$ (recursive): \begin{itemize}[label={$\cdot$}]
                      \item Assign $v_{\text{next}}$ to be $d_r^{th}$ neighbor of $v$ and increment $r$
                      \item Add edge $\set{v,v_{\text{next}}}$ to $\mathcal{H}$
                      \item If $v_{\text{next}}$ is  unvisited : \begin{itemize}[label={$\cdot$}]
                          \item Add vertex $v_{\text{next}}$ to $\mathcal{H}$
                          \item $n \leftarrow n+1$
                          \item $\mathrm{ord}(v_{\text{next}}) \leftarrow n $
                          \item $\textup{push}(v_{\text{next}},S)$
                      \end{itemize}
                      \item Else if $v_{\text{next}}$ is visited, increment $t$ \qquad \qquad // Next step is backtracking
                     \end{itemize}
             \item If $\textup{step} = \textup{B}$ (backtracking): \begin{itemize}[label={$\cdot$}]
                     \item pop(S)           
             \item $t \leftarrow t+1$
              \end{itemize}
        \end{itemize}
        \item return $\mathcal{H}$
    \end{itemize}
\end{algorithm_simple}

\begin{algorithm_simple}{\textsf{StepType}}{$(v,t)$}{$(\text{Type})$}\label{algo:step_type}
    Note - {\small The subroutine to detect the type of step depends on the encoding string $\sigma$.} 
    \begin{itemize}[topsep=0.3ex, itemsep=0.4ex,parsep=0.7ex,label={$\cdot$},leftmargin = 0.4cm]
    \item If $\sigma$ is from \textbf{Encoding I}, return $\sigma_t$
    \item Else, let  $j = \mathrm{ord}(v)$ 
    \begin{itemize}[label={$\cdot$}]
    \item If $\sigma_j >0$ \qquad \qquad//{\small Check if there are any remaining recursive neighbours }
    \begin{itemize}[label={$\cdot$}] 
        \item Decrement $\sigma_j \leftarrow \sigma_j-1$ 
        \item return $R$
        \end{itemize}
     \item Else, return $B$ 
     \end{itemize}
    \end{itemize}
\end{algorithm_simple}

\subsubsection{Counting the encodings}

For the first kind of encoding of type, we have $2^{2k}$ strings of
length $2k$ over $\{R,B\}$. The second encoding might seem wasteful in
general but it is much better when the graph has special structure
that our hike graph will satisfy. We first note that for any vertex
$v$, the number of recursive neighbours
$ \sigma_v \leq \deg_{\cH}(v)-1$ (or $\leq \deg_{\cH}(v)$ if $v =
v_0$).

\begin{definition}[Excess]
The \emph{excess} of $\mathcal{H}$ is defined as $exc(\mathcal{H}) \coloneqq |E(\mathcal{H})| - |V(\mathcal{H})|$.
\end{definition}

\begin{definition}[Excess Set]
We define a vertex to be an \textit{excess vertex} in $\mathcal{H}$ if $deg_{\cH}(v)>2$ and we define the \textit{excess set} to be the set consisting of such vertices i.e  \begin{align*}
  \textup{excSet}(\mathcal{H}) \coloneqq \abs{\set{v \in V(\mathcal{H}) \mid \textup{deg}(v) > 2}}.
  \end{align*}    
\end{definition}

 \begin{lemma}\label{lemma:graphcount}
    Let $G$ be a fixed $d$-regular graph on $n$ vertices. The total number of connected subgraphs $\mathcal{H}$ of $G$ having at most $\leq k$ edges is at most
    \[2 n \cdot d \cdot (d-1)^{k-1} \cdot 2^{2k}. \] 
    Moreover, if $\cH$ is constrained to have at most two vertices of degree one\footnote{We will see later that hike graphs satisfy this strange property } and $exc(\cH) \leq \delta k$, the count is at most 
    \[  2nk^3 \cdot d \cdot (d-1)^{k-1} \cdot 2^{H_2\inparen{\frac{\delta}{1-\delta}}k} \cdot d^{\delta k}. \]
 \end{lemma}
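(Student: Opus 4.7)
The plan is to bound the number of connected subgraphs $\mathcal{H}$ of $G$ with at most $k$ edges by the number of valid $\textbf{GraphEnc}$ encodings. Since the unpacking algorithm deterministically reconstructs $\mathcal{H}$ from its encoding together with the fixed graph $G$, distinct subgraphs produce distinct encodings, so it suffices to count encodings.

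\emph{First bound (via Encoding I).} For a fixed edge count $m=|E(\mathcal{H})|$, the three components of the encoding are chosen independently: the starting vertex has $n$ options, the recursive-neighbor sequence $(d_1,\ldots,d_m)\in[d]\times[d-1]^{m-1}$ has $d(d-1)^{m-1}$ options, and the step-type string $\sigma\in\{R,B\}^{2m}$ has $2^{2m}$ options. Summing $n\cdot d(d-1)^{m-1}\cdot 4^m$ over $m=1,\ldots,k$ as a geometric series with common ratio $4(d-1)\geq 8$ for $d\ge 3$ contributes only a factor $\tfrac{8}{7}<2$ beyond the $m=k$ term, yielding the claimed $2n\cdot d(d-1)^{k-1}\cdot 2^{2k}$.

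\emph{Second bound (via Encoding II).} Under $V_1\le 2$ and $exc(\mathcal{H})\le \delta k$, I would first extract the structural consequences. Rewriting the handshake identity as $\sum_{v:\deg\ge 3}(\deg_{\mathcal{H}}(v)-2)=2\cdot exc(\mathcal{H})+V_1$ gives
\[
|\textup{excSet}(\mathcal{H})| \;\le\; 2\cdot exc(\mathcal{H})+V_1 \;\le\; 2\delta k+2,
\]
and $|V(\mathcal{H})|=m-exc(\mathcal{H})\ge(1-\delta)k$ when $m$ is close to $k$. Since $\sigma_v=\deg_{\mathcal{H}}(v)-\1{v\neq v_1}$, enumerating $\sigma$-vectors amounts to enumerating degree sequences of $\mathcal{H}$. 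I would count the valid sequences by (i) choosing the excess positions among the visited vertices, $\binom{|V|}{V_{\ge 3}}$ ways, (ii) specifying the degree at each excess position from $\{3,\ldots,d\}$, at most $d^{V_{\ge 3}}$ ways, and (iii) choosing the at-most-two degree-one positions from the remaining non-excess vertices, at most $\binom{|V|}{2}=O(k^2)$ ways; the other non-excess vertices get degree $2$ by default. Combining this $\sigma$-count with $n\cdot d(d-1)^{m-1}$ for the starting vertex and recursive sequence, and summing over $m\le k$ and over $V_{\ge 3}$ (an extra $O(k)$ factor), the overall bound has the shape
\[
n\cdot d(d-1)^{k-1}\cdot O(k^3)\cdot \binom{|V|}{V_{\ge 3}}\cdot d^{V_{\ge 3}}.
\]
Applying the standard entropy bound $\binom{N}{r}\le 2^{H_2(r/N)N}$ with the ratio $V_{\ge 3}/|V|$ controlled in terms of $\delta/(1-\delta)$ converts the binomial into $2^{H_2(\delta/(1-\delta))k}$ and the degree factor into $d^{\delta k}$, matching the claim.

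The main obstacle is the bookkeeping in the second bound: extracting exactly the entropy exponent $H_2(\delta/(1-\delta))$ and the $d^{\delta k}$ factor, rather than the cruder $2^{H_2(2\delta/(1-\delta))k}\cdot d^{2\delta k}$ one would read off naively from $V_{\ge 3}\le 2\delta k+2$. This requires tighter accounting relating $V_{\ge 3}$, the excess-degree sum $\sum_{v:\deg\ge 3}(\deg(v)-2)$, and the visited-vertex count $|V|$, plus careful absorption of the various additive constants (the ``$+2$'' terms from $V_1\le 2$) and low-order polynomial slack into the single $k^3$ factor so that the final expression matches the statement cleanly.
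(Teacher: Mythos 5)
Your approach is the same as the paper's: in both bounds you enumerate valid $\textbf{GraphEnc}$ encodings and invoke the determinism of the unpacking algorithm, and in the second bound you exploit that for non-start vertices of degree two the entry $\sigma_v$ is forced. The first bound is correct as you wrote it, including the geometric-series absorption into the factor $2$.

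The ``obstacle'' you ran into in the second bound is a genuine defect in the lemma's \emph{statement}, not a missing idea in your argument. The hypothesis as printed is $exc(\cH)\le\delta k$; as you correctly derive from $\sum_{v:\,\deg\ge 3}(\deg_\cH(v)-2)=2\,exc(\cH)+V_1$, this only yields $\abs{\textup{excSet}(\cH)}\le 2\delta k+2$, and the resulting count carries the exponents $2^{H_2\left(2\delta/(1-\delta)\right)k}\,d^{2\delta k}$ --- exactly what you observed. The paper's own proof does not bridge this gap: it writes ``$\abs{\textup{excSet}(\cH)}\le\delta m$'' outright, tacitly replacing the hypothesis on $exc$ by a hypothesis on $\textup{excSet}$. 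The downstream invocation in \cref{lemma:total_hike_graphs_2} confirms this intended reading: there \cref{cor:excessdegree} supplies $\textup{excSet}(\cH)\le 2\delta n_0+2$ and the lemma is applied to produce $\binom{k}{2\delta k+1}\cdot d^{2\delta k+1}$, i.e.\ the lemma's parameter is instantiated at roughly $2\delta$, which is consistent only if $\delta$ in \cref{lemma:graphcount} bounds $\textup{excSet}$, not $exc$. So you should read the hypothesis as $\abs{\textup{excSet}(\cH)}\le\delta k$ (the paper even uses the stronger $\le\delta m$). Under that reading your derivation closes immediately: with $t\ge(1-\delta)m$ you have $\sum_{j\le\delta m}\binom{t}{j}\le 2^{H_2\left(\delta/(1-\delta)\right)t}$ and $d^{j}\le d^{\delta k}$, and the sums over $m$, $t$, and $j$ together with the $t^2$ for the at-most-two degree-one positions are absorbed into the leading $2k^3$.
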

 \begin{proof}
    We first fix the number of edges as $m$ and we will then sum up the expression for $m \leq k$.
    Algorithm~\ref{algo:unpack_hike} unambiguously recovers the graph and therefore the number of possible graphs can be counted by counting the number of possible inputs. The number of degree sequences and start vertices are $n\cdot d (d-1)^{m-1}$. The number of $\sigma$-strings of encoding I are $2^{2m}$. Therefore for a given $m$, we have $n d \cdot (d-1)^{m-1} \cdot 2^{2m}$ and summing this gives the first claim.
    
    In the second case, the key idea is that for every vertex (except the start) of degree $2$, $\sigma_v$ must be $1$. Since $\abs{\textup{excSet}(\mathcal{H})} \leq \delta m$, almost all of the string $\sigma$ is filled by $1$.

We first pick the number of vertices, say $t$. There are at most $m$ choices for this. Then, we let the number of excess vertices be $j$. Summing over all possible $j$, the number of $\sigma$-strings of length $t$ is $ \leq t^2 \sum_{j=0}^{\delta m} \binom{t}{j}d^{j} \le t^2 d^{\delta m} \sum_{j=0}^{\delta m} \binom{t}{j} \leq t^2 d^{\delta m} 2^{H_2\left(\frac{\delta}{1-\delta} \right)t}$.
    
    Here the first term counts the ways or having or up to two vertices of degree 1, the second counts the ways to choose the excess vertices and the third counts the number of their recursive neighbours. In the last inequality we used that $t = m - exc(\cH) \geq (1-\delta)m $. 
    
    The complete expression for the number of graphs would then be 
    \[ \sum_{m\leq k} \left( nd(d-1)^{m-1} \sum_{t = (1-\delta)m}^m t^2 d^{\delta m} 2^{H_2\inparen{\frac{\delta}{1-\delta}}t} \right) ~ \le ~ 2nk^3 \cdot d \cdot (d-1)^{k-1} \cdot 2^{H_2\inparen{\frac{\delta}{1-\delta}}k} \cdot d^{\delta k}. \]
 \end{proof}

\subsection{Bounding Singleton Free Hikes}

Following \cite{MOP20}, we make the following useful definitions,

\begin{definition}[Singleton-free hikes]
  A \textit{$k$-hike} $W$ is a closed walk of $2k$-steps\footnote{That
  is sequence of $(v_0,\cdots, v_{2k-1})$ such that $(v_i,v_{i+1}) \in
  E(G)$ and $v_0 = v_{2k-1}$ } in $G$ in which every step except
  possibly the $(k+1)^{st}$ is non-backtracking. A hike
  is \emph{singleton-free} if no edge is traversed exactly once.
\end{definition}

\begin{definition}[Bicycle free radius \cite{MOP20}]
  A graph $G$ is said to have a bicycle-free radius at radius $r$ if
  the subgraph $\cH$ of distance-$r$ neighborhood of every vertex has
  $exc(\cH) \leq 0$.
\end{definition}
 
We will work with singleton-free hikes in this section. A
singleton-free $k$-hike on $G$ defines a subgraph $\cH$ such that
there at most two vertices of degree $1$ (the start vertex and the
middle vertex) and the number of edges is at most $k$ as every edge is
traversed at least twice. The goal now is to count the possible number
of singleton-free $k$-hikes that yield a fixed subgraph $\cH$. Having
access to $\mathcal{H}$, we will need to encode the hike in a way
similar to the encoding of stale stretches in~\cite{MOP20}.

\noindent \textbf{HikeEnc}: \begin{enumerate}[topsep=0.3ex, itemsep=0.4ex, label={({\alph*})}]
  \item $(v_1,\dots,v_s)  \in V(\mathcal{H})^s$,where $s =\lceil 2k/r \rceil$ and $r$ is the bicycle free radius of $\cH$.
  \item $(c_1,\dots,c_s) \in \{0,\pm 1, \cdots , \pm \lfloor r/2 \rfloor \}^s$. Here, $c_i$ denotes the number of times the unique cycle (in the neighborhood of $v_i$) is to be traversed and the sign indicates the orientation. Since each stretch is of length $r$ and each cycle of length at least $2$ we can traverse a cycle at most $\lfloor r/2 \rfloor$ times.
\end{enumerate}

\begin{claim}\label{claim:hike_enc_in_hike_graph_total}
  For any graph $\mathcal{H}$ that is bicycle free at radius $r$, the number of simple singleton-free $k$-hikes that have $\cH$
  as their hike graph is at most $\left(\abs{ rV(\mathcal{H})}\right)^{\lceil 2k/r \rceil}$.
\end{claim}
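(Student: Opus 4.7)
The plan is to establish a decoding procedure that recovers a singleton-free $k$-hike $W$ with hike graph $\mathcal{H}$ from any encoding in \textbf{HikeEnc}; then the number of such hikes is bounded by the number of valid encodings, which is at most $|V(\mathcal{H})|^s \cdot (2\lfloor r/2\rfloor + 1)^s \leq (r \cdot |V(\mathcal{H})|)^s$ for $s = \lceil 2k/r \rceil$.

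First, I would partition the $2k$-step walk $W$ into $s$ consecutive stretches, each of length at most $r$, placing the boundaries so that the distinguished (possibly backtracking) $(k+1)$-st step of the hike coincides with a stretch boundary. This way every individual stretch is a genuine non-backtracking walk in $\mathcal{H}$, and the starting vertex of the $i$-th stretch is $v_i$; these are exactly the vertices stored in the encoding. The final stretch closes the loop, so its endpoint is $v_1$ and no extra data are needed to identify it.

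Next I would argue that the $i$-th stretch lives inside the distance-$r$ neighborhood $N_r(v_i)$ of $v_i$ in $\mathcal{H}$. Since $\mathcal{H}$ is bicycle-free at radius $r$, the subgraph $N_r(v_i)$ has excess at most zero, so it is either a tree or a tree plus one extra edge forming a unique cycle. The central lemma, and the main technical step, is that a non-backtracking walk of length at most $r$ inside such an $N_r(v_i)$ is completely determined by its start $v_i$, its end $v_{i+1}$, and a signed winding count $c_i$ recording how many times the unique cycle (if any) is traversed and in which direction. In the purely tree case there is a unique non-backtracking path between any two vertices, so $c_i = 0$ suffices. In the unicyclic case, any non-backtracking walk canonically decomposes as ``tree path from $v_i$ to the cycle, some $|c_i|$ windings in direction $\operatorname{sgn}(c_i)$, tree path out to the endpoint $v_{i+1}$''; since the stretch has length at most $r$ and the cycle has length $\geq 2$, necessarily $|c_i| \leq \lfloor r/2 \rfloor$, which matches the range allowed in \textbf{HikeEnc}.

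The main technical obstacle is to justify this canonical decomposition rigorously; I would do so by observing that in a tree-plus-one-cycle the only cyclic freedom for a non-backtracking walk is to loop around the cycle consecutively, since once the walk leaves the cycle into the tree part it cannot return without backtracking. Given this, decoding proceeds stretch by stretch: from $v_i$, $v_{i+1}$ and $c_i$ reconstruct the $i$-th stretch inside $N_r(v_i) \subseteq \mathcal{H}$, then concatenate over $i = 1, \ldots, s$ to recover $W$. Multiplying the $|V(\mathcal{H})|$ choices for each $v_i$ by the $\leq r$ choices for each $c_i$ across all $s = \lceil 2k/r \rceil$ stretches gives the claimed bound $\bigl(r \cdot |V(\mathcal{H})|\bigr)^{\lceil 2k/r \rceil}$.
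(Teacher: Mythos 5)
Your proposal is correct and takes the same route as the paper: the paper's own proof is the one-liner ``Follows from the possible values the encoding \textbf{HikeEnc} can take,'' and your write-up supplies exactly the decoding argument that statement leaves implicit. You correctly identify the key fact that a non-backtracking walk of length at most $r$ from $v_i$ to $v_{i+1}$ inside the bicycle-free ball $N_r(v_i)$ is determined by its endpoints together with a signed winding count around the (at most one) cycle, and you correctly observe that the winding is bounded by $\lfloor r/2\rfloor$; the closed-walk structure lets the last stretch reuse $v_1$ as its endpoint. One small accounting remark: forcing a stretch boundary at the distinguished $(k{+}1)$-st step means you split $[1,k]$ and $[k{+}1,2k]$ separately into $\lceil k/r\rceil$ pieces each, giving $s=2\lceil k/r\rceil$, which can exceed $\lceil 2k/r\rceil$ by one. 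Likewise $\{0,\pm1,\dots,\pm\lfloor r/2\rfloor\}$ has $2\lfloor r/2\rfloor+1$ elements, which equals $r+1$ when $r$ is even. Both slippages are already present in the paper's stated bound and are harmless for the downstream use in \cref{lemma:total_hike_graphs_2}, where the exponent is anyway relaxed to $2(k-1)/r+1$ and the extra factor of $r|V(\mathcal{H})|$ is absorbed into $\gamma$.
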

\begin{proof}
  Follows from the possible values the encoding \textbf{HikeEnc} can take.
\end{proof}

We use a generalization of the bound of Alon \etal \cite{AHL02} on the
excess number (originally involving the girth), extended to
bicycle-free radius in~\cite{MOP20}.

\begin{theorem}{\cite[Theorem 2.13]{MOP20}}
  Let $\mathcal{H}$ be a bicycle free graph of radius $r \ge 10 \ln(\abs{V(\mathcal{H})})$. Then
  \begin{align*}
    \textup{exc}(\mathcal{H}) \le \frac{\ln(e\abs{V(\mathcal{H})})}{r} \cdot \abs{V(\mathcal{H})}.
  \end{align*}
\end{theorem}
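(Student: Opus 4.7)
The plan is to adapt the Alon--Hoory--Linial irregular Moore bound from the girth setting to the bicycle-free-radius setting. The natural parameter to track is the average degree: setting $N = |V(\mathcal{H})|$, $M = |E(\mathcal{H})|$ and $\bar{d} = 2M/N$, the definition of excess gives
\begin{equation*}
  \textup{exc}(\mathcal{H}) \;=\; M - N \;=\; \frac{N(\bar{d}-2)}{2},
\end{equation*}
so the target inequality is equivalent to $\bar{d} - 2 \le 2\ln(eN)/r$. If $\bar d \le 2$ the bound is trivial, so I would assume throughout that $\bar{d} > 2$.

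The key technical step is to establish a lower bound of the form $N \ge e^{-1}(\bar{d}-1)^r$ (up to mild constants absorbed into the $e$ factor). To do so, I would follow the AHL walk-counting template, but accounting for the fact that bicycle-freeness only guarantees each radius-$r$ ball is a tree \emph{or} unicyclic (as opposed to girth, where every short ball is a tree). Concretely: fix a root $v$ and count non-backtracking walks of length $r$ from $v$; in a tree rooted at $v$ this count is exactly $\deg(v)(\deg(v)-1)^{r-1}$, and the walks land on distinct vertices of the ball. If the ball instead contains a single cycle, each terminal vertex is hit by at most a bounded multiplicity (two is enough with some bookkeeping), which is where the absorbed $e$ factor comes from. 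Summing the per-vertex walk counts over $v \in V(\mathcal{H})$ and comparing against $N \cdot (\text{max multiplicity})$, convexity (Jensen's inequality applied to $x \mapsto x(x-1)^{r-1}$) yields the desired $N \gtrsim (\bar d - 1)^{r}$.

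Given this intermediate bound, the conclusion is a short calculation. Taking logs of $(\bar d - 1)^r \le eN$ gives $r\ln(\bar d - 1) \le \ln(eN)$. In the regime $r \ge 10 \ln N$, the hypothesis forces $\bar d - 1$ to be close to $1$, so I can use the elementary inequality $\ln(1+x) \ge x/(1+x)$ to obtain $\ln(\bar d - 1) \ge (\bar d - 2)/(\bar d - 1) \ge (\bar d - 2)/2$ once $\bar d \le 3$ (a range ensured by the previous step plus the assumption on $r$). Rearranging produces $\bar{d} - 2 \le 2\ln(eN)/r$, which is exactly what is needed.

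The main obstacle is the core walk-counting step: unlike the girth case, where a non-backtracking walk of length $r/2$ never revisits a vertex, here each ball may carry one cycle, so the injectivity of walks to vertices fails by a controlled amount. Handling this carefully --- i.e.\ quantifying how many walks can collide on the same endpoint when only one cycle is allowed per ball, and ensuring that degree-$1$ vertices of $\mathcal{H}$ (which contribute zero non-backtracking walks after the first step) do not derail the averaging --- is the part that requires the genuine generalization of the AHL argument and is exactly what \cite{MOP20} carry out.
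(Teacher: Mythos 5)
The paper does not actually prove this statement: it is quoted verbatim from \cite[Theorem~2.13]{MOP20} and used as a black box, so there is no in-paper proof to compare against. Your outline does match the approach taken in \cite{MOP20}, namely adapting the Alon--Hoory--Linial irregular Moore bound from the girth setting to the bicycle-free-radius setting, with the correct reformulation $\textup{exc}(\mathcal{H}) = N(\bar d - 2)/2$, the correct target intermediate inequality $(\bar d - 1)^r \lesssim N$, and a correct closing calculation (the step $\ln(\bar d - 1) \ge (\bar d - 2)/2$ for $\bar d \le 3$, with $\bar d \le 3$ being forced once $r \ge 10\ln N$, is fine).

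Two points worth flagging, since they sit at the heart of the argument you are deferring. First, the sentence ``in a tree rooted at $v$ this count is exactly $\deg(v)(\deg(v)-1)^{r-1}$'' is only true for a regular tree; in an irregular tree the number of non-backtracking walks of length $r$ from $v$ depends on the degrees of \emph{all} vertices along each walk, not just $\deg(v)$. This is precisely why AHL's argument is not a one-line application of Jensen but requires their more careful layered averaging, so the statement as written is self-undermining (if the count really were $\deg(v)(\deg(v)-1)^{r-1}$, the convexity step you invoke afterward would be trivial). Second, the multiplicity bound of two walks per endpoint in a unicyclic ball is correct but deserves a real argument (fix the orientation around the unique cycle; then the wrap count is uniquely determined by the prescribed total length, and the entry and exit points are determined by the tree paths from $v$ and to the endpoint), and one also needs to observe that degree-one vertices can be pruned without changing $\textup{exc}$ or decreasing the bicycle-free radius, which reduces to the minimum-degree-two case handled by AHL. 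Since you explicitly defer these steps to \cite{MOP20}, the proposal is an honest and essentially accurate sketch of the right proof rather than a complete one.
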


\begin{corollary}\label{cor:excessdegree}
   Let $G$ be a $d$ regular graph on $n$ vertices bicycle free at radius $r$.
   Let $\mathcal{H}$ be a subgraph with at most two vertices of degree one on $n_0$ vertices where $n_0 = e^{\delta r - 1} $ for some $\delta \leq 1/10$. Then, 
   \[  \textup{excSet}(\mathcal{H}) \le 2\delta n_0 +2. \]
\end{corollary}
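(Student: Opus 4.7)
The plan is to combine the quoted excess bound (Theorem 2.13 of \cite{MOP20}, cited just above this corollary) with an elementary handshake-lemma conversion from ``total excess'' $\textup{exc}(\mathcal{H}) = |E(\mathcal{H})| - |V(\mathcal{H})|$ to ``number of vertices of degree $>2$.''

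First, I need to verify that $\mathcal{H}$ itself inherits the bicycle-free property at radius $r$ from $G$, since Theorem 2.13 is a statement about a bicycle-free graph, not about a subgraph of one. This step should be essentially automatic: for every $v \in V(\mathcal{H})$, the radius-$r$ ball $B_\mathcal{H}(v, r)$ is a connected subgraph of $B_G(v, r)$, and the latter has excess at most $0$ by hypothesis and so contains at most one cycle. Any connected subgraph of a graph with at most one cycle also has at most one cycle, hence satisfies $|E| \le |V|$, i.e.\ excess $\le 0$. So $\mathcal{H}$ is bicycle-free at radius $r$, and Theorem 2.13 applies.

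Second, I apply Theorem 2.13 to $\mathcal{H}$. The prerequisite $r \ge 10 \ln(|V(\mathcal{H})|) = 10 \ln n_0$ follows from $\ln n_0 = \delta r - 1$ together with $\delta \le 1/10$. Plugging $n_0 = e^{\delta r - 1}$ into the conclusion gives $\ln(e n_0) = \delta r$, so
\[
\textup{exc}(\mathcal{H}) \;\le\; \frac{\ln(e n_0)}{r} \cdot n_0 \;=\; \delta\, n_0.
\]

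Third, I convert this into a bound on $\textup{excSet}(\mathcal{H})$ by handshake. Let $n_1, n_2, n_{\ge 3}$ denote the numbers of vertices of $\mathcal{H}$ of degree $1$, $2$, and $\ge 3$ respectively (connectedness of $\mathcal{H}$ rules out isolated vertices). Then $|V(\mathcal{H})| = n_1 + n_2 + n_{\ge 3}$ and $2|E(\mathcal{H})| \ge n_1 + 2n_2 + 3n_{\ge 3}$, whence
\[
2\,\textup{exc}(\mathcal{H}) \;=\; 2|E(\mathcal{H})| - 2|V(\mathcal{H})| \;\ge\; n_{\ge 3} - n_1.
\]
Using the hypothesis $n_1 \le 2$ (at most two degree-one vertices), this rearranges to
\[
\textup{excSet}(\mathcal{H}) \;=\; n_{\ge 3} \;\le\; 2\,\textup{exc}(\mathcal{H}) + 2 \;\le\; 2\delta n_0 + 2,
\]
which is the desired bound.

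The only mildly nontrivial step is the first one (checking bicycle-freeness is inherited by the subgraph $\mathcal{H}$); the remaining work is a one-line substitution into Theorem 2.13 and a handshake-lemma count.
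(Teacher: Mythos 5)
Your proof is correct, and it is the natural (and presumably intended) derivation: the paper states this corollary without proof, immediately after quoting~\cite[Theorem~2.13]{MOP20}. All three of your steps check out: (1) bicycle-freeness passes to subgraphs since, for $v \in V(\mathcal{H})$, the ball $B_\mathcal{H}(v,r)$ is a connected subgraph of $B_G(v,r)$, and a connected subgraph of a connected graph with excess $\le 0$ again has excess $\le 0$; (2) the substitution $\ln(e n_0) = \delta r$ gives $\textup{exc}(\mathcal{H}) \le \delta n_0$, and the hypothesis $r \ge 10\ln n_0 = 10\delta r - 10$ is satisfied when $\delta \le 1/10$; (3) the handshake count $2\,\textup{exc}(\mathcal{H}) \ge n_{\ge 3} - n_1$ with $n_1 \le 2$ yields the stated bound. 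One small implicit assumption worth flagging: the handshake step uses that $\mathcal{H}$ has no isolated vertices, which you justify by connectedness of $\mathcal{H}$; the corollary as stated says only ``subgraph,'' but in the paper's sole use of this corollary, $\mathcal{H}$ is a hike graph and hence connected, so this is consistent with the intended reading.
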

\begin{lemma}\label{lemma:total_hike_graphs_2}
   Let $G$ be a $d$ regular graph, with $d \ge 3$, on $n$ vertices bicycle free at radius $r$. Then, the total number of singleton free $(k-1)$-hikes on $G$ is at most 
 \begin{align*}
        \left(2^{\gamma_1}\sqrt{d-1}\right)^{2k} \text{ where } \gamma_1 = 1 + \frac{\log (nrk)}{2k} + \frac{\log(rk)}{r}.
  \end{align*}
   
   If we assume that $3 \le k \le e^{\delta r} $, then it is at most
     \begin{align*}
        \left(2^{\gamma_2}\sqrt{d-1}\right)^{2k} \text{ where } \gamma_2 = \frac{\log (16 nk^3rd)}{2k} + \frac{\log(rk)}{r} + H_2(5\delta)/2 +  \delta \log d.  
  \end{align*}
  \end{lemma}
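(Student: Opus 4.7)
The plan is to count singleton-free $(k-1)$-hikes in two stages: bound the number of distinct hike graphs $\cH$ that can arise, and then, for each fixed $\cH$, bound the number of hikes that realise it. Concretely,
\begin{equation*}
\#\{\text{singleton-free $(k-1)$-hikes in $G$}\} \;\le\; \#\{\text{hike graphs}\}\cdot \max_{\cH}\#\{\text{hikes with hike graph } \cH\}.
\end{equation*}
A singleton-free $(k-1)$-hike traverses each edge of its hike graph at least twice in its $2(k-1)$-step walk, so $\cH$ is a connected subgraph of $G$ with at most $k-1$ edges, at most $k$ vertices, at most two vertices of degree one (the start vertex and the midpoint of the hike), and inherits bicycle-freeness at radius $r$ from $G$. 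These are exactly the structural properties that \cref{lemma:graphcount} (for the graph count) and \cref{claim:hike_enc_in_hike_graph_total} (for hikes per graph) are designed to exploit.

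For the first (general) bound I invoke Encoding~I of \cref{lemma:graphcount} to get at most $2nd(d-1)^{k-2}\cdot 2^{2(k-1)}$ hike graphs, and \cref{claim:hike_enc_in_hike_graph_total} together with $|V(\cH)|\le k$ to bound the number of hikes realising any fixed $\cH$ by $(rk)^{\lceil 2k/r \rceil}\le (rk)^{2k/r+1}$. Multiplying and taking the $2k$-th root gives a bound of the form $2\sqrt{d-1}\cdot (nrk)^{1/(2k)}\cdot (rk)^{1/r}$, which is precisely $2^{\gamma_1}\sqrt{d-1}$ for the stated $\gamma_1 = 1 + \tfrac{\log(nrk)}{2k} + \tfrac{\log(rk)}{r}$, after absorbing the $O(1)$ multiplicative constants (from the $2nd$ prefactor, from $4^{k-1}$ vs.\ $4^k$, and from $\lceil 2k/r\rceil\le 2k/r+1$) into $\log(nrk)$.

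For the sharper second bound, valid under the hypothesis $3\le k\le e^{\delta r}$, I replace Encoding~I by Encoding~II of \cref{lemma:graphcount}, which trades the $4^k$ factor for $d^{|\textup{excSet}(\cH)|}$ plus a binomial-entropy factor. To control $|\textup{excSet}(\cH)|$ I apply \cref{cor:excessdegree} to $\cH$: writing its vertex count $n_0\le k$ as $n_0 = e^{\delta' r - 1}$ yields $\delta' = \tfrac{\log n_0 + 1}{r}\le \delta + \tfrac{1}{r}\le 1/10$, whence $|\textup{excSet}(\cH)| \le 2\delta' n_0 + 2 \le 2\delta k + O(k/r + 1)$. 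Setting $\delta^\ast := 2\delta + O(1/r + 1/k)$ as the effective excess-set fraction, plugging into Encoding~II, multiplying by \cref{claim:hike_enc_in_hike_graph_total} exactly as before, and taking the $2k$-th root yields $2^{\gamma_2}\sqrt{d-1}$. The $\delta\log d$ term in $\gamma_2$ comes from $d^{\delta^\ast k}$ (whose $2k$-th root is essentially $d^\delta$); the $H_2(5\delta)/2$ term comes from monotonicity of $H_2$ together with the inequality $\delta^\ast/(1-\delta^\ast)\le 5\delta$, which holds for $\delta\le 1/10$; and all the $o(1)$ corrections are absorbed into the $\tfrac{\log(rk)}{r}$ and $\tfrac{\log(16nk^3rd)}{2k}$ summands.

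The main obstacle is the bookkeeping in the second part: propagating the \cref{cor:excessdegree} bound on $|\textup{excSet}(\cH)|$ (which arises parametrised by $n_0$ rather than directly by $k$) through Encoding~II into an exponent of exactly the form $H_2(5\delta)/2 + \delta\log d$, while carefully routing all $O(1/r + 1/k)$ correction terms into the lower-order summands of $\gamma_2$ rather than letting them inflate the dominant ones. Once this is arranged, the remainder of the proof is a direct multiplication of the two counts and a single $2k$-th root.
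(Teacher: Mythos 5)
Your proposal is correct and follows essentially the same route as the paper: factor the count as (number of hike graphs) $\times$ (hikes per hike graph), use Encoding~I of \cref{lemma:graphcount} together with \cref{claim:hike_enc_in_hike_graph_total} for the unconditional bound, and switch to Encoding~II with \cref{cor:excessdegree} under the hypothesis $k \le e^{\delta r}$ for the sharper bound. Your explicit tracking of the $O(1/r+1/k)$ corrections when translating the $\textup{excSet}$ bound (parametrised by $n_0$) into an effective fraction $\delta^\ast$ for Encoding~II is if anything slightly more careful than the paper's own bookkeeping, which folds these into the $\binom{k-1}{2\delta(k-1)+2}\le 2^{H_2(5\delta)k}$ step without comment.
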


\begin{proof}
  Any singleton-free $(k-1)$-hike defines a connected graph $\cH$ with at most $k-1$ edges and therefore at most $k-1$ vertices. If there is no backtracking step then all vertices except the start have degree at least two. Else, the end point of one of the backtracking step may have degree $1$. Thus there are at most 2 vertices of degree one. When $k$ is unbounded, we use the bound from the first encoding i.e. \cref{lemma:graphcount} and combine it with the number of possible hikes on this from \cref{claim:hike_enc_in_hike_graph_total} to get 
   \begin{align*}
     &\leq 2 n \cdot d \cdot (d-1)^{k-2} \cdot 2^{2(k-1)} (r(k-1))^{\frac{2(k-1)}{r}+1}\\
     &\leq  (nrk)\cdot (d-1)^{k} \cdot 2^{2k}  (rk)^{\frac{2k}{r}}\\
     &\leq  \left(2\cdot 2^{ \log (nrk)/2k} 2^{ \frac{\log(rk)}{r} }  \right)^{2k} (d-1)^{k} \\
     &\leq  \left(2^{\gamma_1}\sqrt{d-1}\right)^{2k}.
 \end{align*}
  
  The assumption on $k$ lets us use~\cref{cor:excessdegree} which when combined with \cref{lemma:graphcount} gives us the bound on the number of such graphs as $  4 nk^2 d \cdot (d-1)^{k-1} \cdot {k \choose 2\delta k+1} \cdot d^{2\delta k+1} $. Combining with the number of possible hikes on this from \cref{claim:hike_enc_in_hike_graph_total}, we get the total number of singleton-free $k$-hikes bounded by 
 \begin{align*}
     &\leq 4 n(k-1)^2 \cdot d \cdot (d-1)^{k-2} \cdot {k-1 \choose 2\delta (k-1)+2} \cdot d^{2\delta (k-1) +2} (r(k-1))^{\frac{2k-2}{r}+1}\\
     &\leq  ( 16 nk^3rd) (d-1)^{k} \cdot 2^{H_2(5\delta)k} \cdot d^{2\delta k} (rk)^{\frac{2k}{r}}\\
     &\leq  \left(2^{ \log (16 nk^3rd))/2k} d^{\delta}2^{ \frac{\log(rk)}{r} } 2^{H_2(5\delta)/2} \right)^{2k} (d-1)^{k} \\
     &\leq  \left(2^{\gamma_2}\sqrt{d-1}\right)^{2k}.
 \end{align*}
\end{proof}

\section{Instantiation of The First Two Main Results}
In this section, we will use the bound on singleton-free hikes
obtained in the last section to bound the eigenvalue of the lifted
graph. We first handle non-singleton free hikes and show that they can
be easily bounded by the $\varepsilon$-biased property of the
distribution of the signings. We then formalize the construction by
instantiating it using an expander from MOP having large bicycle-free
radius and then bring the bounds together.

\subsection{A Simple Generalization of The Trace Power Method in MOP}

We now show that the problem of bounding the spectral radius of the
signed non-backtracking operator reduces to counting singleton-free
hikes. This reduction is a straightforward generalization of the
argument~\cite[Prop. 3.3]{MOP20} for $\Z_2$ to any abelian group.

Let $B_s(\chi)$ (as defined in \cref{def:backtrack}) be the signed
non-backtracking operator with respect to a signing and a non-trivial
character $\chi$ and $\rho(B_s)$ denote its spectral radius.  The goal
is to bound the largest eigenvalue of $B_s(\chi)$. The trace method is
the name for utilizing the following inequality,
$$
\tr((B^*)^{k}B^{k}) = \norm{B^k}_F = \sum_i\abs{\lambda_i^k}^{2}  \geq \rho(B)^{2k}.
$$
The signing $s$ is drawn from some distribution $\cD$ and we wish to
show via the probabilistic method that there exists a signing in $\cD$
for which $\rho(B_s(\chi))$ is small for any set of $(l-1)$
non-trivial characters $\chi$. We will use a first-order Markov
argument and therefore wish to bound $\E_{s \sim \cD }\tr(B_s^{k}
(B_s^*)^{k})$. Writing it out we get,
\begin{align*}\label{eqn:split_trace}
    T_\chi(s) = \tr((B_s^*)^{k}B_s^{k} ) &= \sum_{e \in E^d} \left((B_s^*)^kB_s^ke\right)_{e}\\
    &=\sum_{ (e_0,\cdots, e_{2k})} B(e_0,e_1)\cdots B(e_{k-1},e_k) B^*(e_{k},e_{k+1})\cdots B^*(e_{2k-1},e_{2k}) \\
    &=\sum_{ (e_0,\cdots, e_{2k}) } \chi(s(e_1))\cdots \chi(s(e_{k})) \chi^*(s(e_{k}))\cdots \chi^*(s(e_{2k-1})) \\
    &=\sum_{ (e_0,\cdots, e_{2k}) } \chi(s(e_1))\cdots \chi(s(e_{k-1})) \chi^*(s(e_{k+1})) \cdots \chi^*(s(e_{2k-1})).
\end{align*}

Notice that $e_0, e_k$ don't appear in the term and so we define
$\cH_{k-1}$ as the multiset of all tuples $(e_1, \ldots,
e_{k-1},e_{k+1}, \ldots ,e_{2k-1})$ appearing in the support of this
summation. We denote each term in the summation above by $\chi_w(s)$
where $w \in \cH_{k-1}$. It follows directly from the definition that
each $w \in \cH_{k-1}$ defines a $(k-1)$-hike. Also observe that, any
tuple appears at most $(d-1)^2$ times as given a tuple $w$, we have at
most $(d-1)$ choices for each $e_0, e_k$. Let $\cH^s_{k-1}$ denote the
singleton-free hikes in $\cH_{k-1}$. We can split $T_\chi(s) = T_1(s)
+ T_2(s)$ where
\[ T_1(s) = \sum_{ w  \in \cH^s_{k-1} } \chi_{w}(s),\;\; T_2(s) = \sum_{ w  \not\in \cH^s_{k-1} } \chi_{w}(s) .\]

 We now define $\varepsilon$-biased distributions that will be the key pseudorandomness tool.

\begin{definition}[Bias]
  Given a distribution $\mathcal{D}$ on a group $H$ and a character
  $\chi$, we can define the bias of $\mathcal{D}$ with respect to
  $\chi$ as $bias_\chi(\mathcal{D})
  := \abs{\E_{h \sim \mathcal{D}} \chi(h)}$ and the bias of
  $\mathcal{D}$ as $\bias(\mathcal{D}) = \max_{\chi}
  bias_\chi(\mathcal{D})$, where the maximization is over non-trivial
  characters.
\end{definition}

\begin{lemma}\label{lem:biased_sum}
  Let $\cD \subseteq H^{E(G)}$ be an $\nu$-biased distribution and let
  $w \not\in \cH^s_{k-1} $ be a singleton-hike i.e. there is an edge
  that is travelled exactly once. Then, $|\E_{s\sim \cD} \chi_{w}(s)
  | \leq \nu$.
\end{lemma}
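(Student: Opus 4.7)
The plan is to show that, viewed as a function of the signing $s$, the quantity $\chi_w(s)$ is a non-trivial character of the product group $H^{E(G)}$, and then invoke the defining property of the $\nu$-biased distribution $\cD$.

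First, I would rewrite $\chi_w(s)$ as a product indexed by undirected edges of $G$. The character $\chi$ of the abelian group $H$ is unitary, so $\overline{\chi(h)} = \chi(h)^{-1} = \chi(h^{-1})$, and by convention the signing satisfies $s(v,u) = s(u,v)^{-1}$ on the extended edge set $E^d$. Using both facts, every factor $\chi(s(e_i))$ or $\chi^*(s(e_i))$ in
\[ \chi_w(s) \;=\; \prod_{i=1}^{k-1}\chi(s(e_i)) \,\cdot\, \prod_{i=k+1}^{2k-1}\chi^*(s(e_i)) \]
reduces to $\chi(s(e))^{\pm 1}$ for the underlying undirected edge $e$, where the $\pm$ sign depends on the orientation of $e_i$ relative to the canonical one and on whether the factor came from the $\chi$ or $\chi^*$ block. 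Collecting these contributions edge by edge gives
\[ \chi_w(s) \;=\; \prod_{e \in E(G)} \chi^{a_e}\bigl(s(e)\bigr), \]
where $\chi^{a} \colon h \mapsto \chi(h)^{a}$ is a character of $H$ and $|a_e|$ is at most the number of times $e$ appears among $e_1,\dots,e_{k-1},e_{k+1},\dots,e_{2k-1}$.

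The key observation is then that a singleton edge $e^*$ of $w$ appears exactly once in the tuple defining $w$, so there is nothing to cancel in its exponent and $a_{e^*} = \pm 1$. Because $\chi$ is non-trivial on $H$, both $\chi$ and $\chi^{-1}$ are non-trivial characters of $H$; hence the product character $\bigotimes_{e \in E(G)} \chi^{a_e}$ of $H^{E(G)}$ is non-trivial. Applying the $\nu$-biased property of $\cD$ (using the abelian-group generalization of $\varepsilon$-biased distributions cited in the overview, e.g.\ Jalan--Moshkovitz) to this non-trivial character yields $|\E_{s \sim \cD}[\chi_w(s)]| \le \nu$, as claimed.

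The only real subtlety will be bookkeeping the orientation and conjugation conventions carefully enough to guarantee $a_{e^*} = \pm 1$ rather than $0$. In particular, one must check that the notion of ``singleton'' in the tuple $w = (e_1,\dots,e_{k-1},e_{k+1},\dots,e_{2k-1})$ is the same as the notion of ``edge contributing a single factor to $\chi_w(s)$''. This alignment is built in because the positions $e_0$ and $e_k$ never appear in $w$ nor in $\chi_w(s)$, so no accidental cancellation is possible, and everything else is a routine unpacking of definitions.
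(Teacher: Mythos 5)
Your proof is correct and follows essentially the same route as the paper's: both arguments collect the factors of $\chi_w(s)$ by undirected edge to obtain a character $\prod_{e} \chi^{a_e}(s(e))$ of $H^{E(G)}$, observe that a singleton edge forces some exponent $a_e = \pm 1$ so that (since $\chi$ is non-trivial) the resulting character of $H^{E(G)}$ is non-trivial, and then invoke the definition of a $\nu$-biased distribution. The sign bookkeeping you flag is exactly the computation the paper relegates to a footnote, and you correctly note that $e_0, e_k$ are excluded from both $w$ and $\chi_w$ so no mismatch can arise there.
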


\begin{proof}
   Let the set of distinct edges in $w$ be $\{e_1, \cdots, e_r\}$ and
   let edge $e_i$ be travelled $t_i$ times where $t_i$ takes the sign
   into account.\footnote{ Let $e_i$ appear $f_1$ times in the first
   $k-1$ steps and $b_1$ times in the next $(k-1)$ steps.  Similarly
   let $e_i^T$ which is the reverse direction of $e$ appear $f_2$
   times in the first $k-1$ steps and $b_2$ times in the next $(k-1)$
   steps.  Then, $t_i = f_1+b_2 - f_2-b_1$.}Let $e_j$ be the edge
   traversed exactly once. Then, $t_j = \pm 1$. Now, we can rewrite
   $\chi_w(s) = \prod_{i=1}^r \chi(s(e_i))^{t_i} $ and it can be
   extended to a character on $H^{E(G)}$. Since $t_j =\pm1$, this
   character is non-trivial and the claim follows from the
   $\nu$-biased property.
\end{proof}

\begin{lemma}[Analog of Corr. 3.11 in \cite{MOP20}]\label{lem:spectrum}
  Let $G$ be a $d$-regular graph on $n$-vertices, $\varepsilon < 1$ be a fixed constant, $\ell$ be a parameter,
  $H \subseteq Sym(\ell)$ be an abelian group and $\cD \subseteq H^m$ be an $\nu$-biased distribution such that
  $\nu \leq (nld^2)^{-1}. \left(\frac{\varepsilon}{d}\right)^{2k}$. 

  Assume that the number of singleton-free $(k-1)$-hikes is bounded by $(2^\gamma\sqrt{d-1})^{2k}$.
  Then for any non-trivial character $\chi$ of $H^m$, we have that except with probability at most $1/\ell$ over $\cD$,
  $\rho( B(\chi)) \leq 2^{\gamma'}\sqrt{d-1} + \epsilon $ where $\gamma' = \gamma + \frac{\log(\ell d^2)}{2k}$.
\end{lemma}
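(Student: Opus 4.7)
The plan is to combine the trace power method (as set up just above) with a Markov inequality, following the same template as in~\cite{MOP20} but calibrated so that the $2k$-th root can absorb a union bound over the $\ell-1$ non-trivial characters from \cref{fact:character_decomp}. The starting point is the inequality $\rho(B_s(\chi))^{2k} \le \tr((B_s^*)^k B_s^k) = T_\chi(s) = T_1(s) + T_2(s)$ already introduced in the excerpt, where $T_1$ ranges over singleton-free hikes and $T_2$ over hikes that contain some edge traversed exactly once.

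First I will bound $\E_s T_1(s)$ and $\E_s T_2(s)$ separately. For $T_1$, every summand $\chi_w(s)$ has modulus at most $1$, so $|\E_s T_1(s)| \le |\cH^s_{k-1}|$ counted with multiplicity; using the observation from the excerpt that each singleton-free $(k-1)$-hike appears in $\cH_{k-1}$ at most $(d-1)^2$ times (one factor each for the free endpoints $e_0$ and $e_k$), the hypothesis yields
$$|\E_s T_1(s)| \;\le\; (d-1)^2 \, (2^{\gamma}\sqrt{d-1})^{2k}.$$
For $T_2$, every $w \notin \cH^s_{k-1}$ is a singleton hike, so \cref{lem:biased_sum} gives $|\E_s \chi_w(s)| \le \nu$. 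A crude enumeration bounds $|\cH_{k-1}|$ by $n\cdot d^{2k}$ (one start vertex, then $d$ choices per step, allowing the single backtrack), and plugging in the hypothesized $\nu \le (n\ell d^2)^{-1}(\epsilon/d)^{2k}$ gives $|\E_s T_2(s)| \le \nu \cdot n d^{2k} \le \epsilon^{2k}/(\ell d^2)$.

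Next I apply Markov's inequality to the \emph{non-negative} real random variable $T_\chi(s) = \|B_s^k\|_F^2$, using the triangle-inequality bound $\E_s T_\chi(s) \le |\E_s T_1(s)| + |\E_s T_2(s)| \le (d-1)^2 (2^{\gamma}\sqrt{d-1})^{2k} + \epsilon^{2k}/(\ell d^2)$. Markov then gives, except with probability at most $1/\ell$,
$$T_\chi(s) \;\le\; \ell \cdot \E_s T_\chi(s) \;\le\; \ell d^2 \,(2^{\gamma}\sqrt{d-1})^{2k} + \epsilon^{2k}.$$
On this good event, $\rho(B_s(\chi))^{2k} \le T_\chi(s)$, and taking $2k$-th roots via the subadditivity $(a+b)^{1/(2k)} \le a^{1/(2k)} + b^{1/(2k)}$ yields $\rho(B_s(\chi)) \le (\ell d^2)^{1/(2k)} \cdot 2^{\gamma}\sqrt{d-1} + \epsilon = 2^{\gamma'}\sqrt{d-1} + \epsilon$, with $\gamma' = \gamma + \log(\ell d^2)/(2k)$ exactly as claimed.

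There is no real conceptual obstacle here beyond careful book-keeping: the content is essentially packaged in the trace decomposition of the excerpt and in \cref{lem:biased_sum}. The only thing to watch is that the ``slack'' coming from (a) the multiplicity factor $(d-1)^2$ in $\cH_{k-1}$, (b) the Markov factor $\ell$, and (c) the crude enumeration of $|\cH_{k-1}|$ by $n d^{2k}$, together get absorbed into the small correction $\log(\ell d^2)/(2k)$ to the exponent $\gamma$, rather than leaking into the main $\sqrt{d-1}$ term. This is precisely why the hypothesis on $\nu$ is calibrated to $(n\ell d^2)^{-1}(\epsilon/d)^{2k}$.
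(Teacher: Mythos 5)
Your proof is correct and follows essentially the same steps as the paper's: split $T_\chi$ into the singleton-free and singleton parts, bound each in expectation (using the $(d-1)^2$ multiplicity factor and \cref{lem:biased_sum}), apply Markov with threshold $\ell\cdot\E T_\chi$ to absorb the later union bound over the $\ell-1$ characters, and take $2k$-th roots via subadditivity. One small imprecision worth flagging: the index tuples $(e_0,\dots,e_{2k})$ in the trace sum consist of $2k+1$ directed edges, so the crude enumeration of $|\cH_{k-1}|$ should be on the order of $nd\cdot d^{2k}$ rather than $n\cdot d^{2k}$ (the paper uses the even looser $nd^{2k+2}$); in any case the $d^2$ factor built into the hypothesis $\nu\le(n\ell d^2)^{-1}(\epsilon/d)^{2k}$ absorbs this, so your final bound $2^{\gamma'}\sqrt{d-1}+\epsilon$ is unaffected.
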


\begin{proof}
  By the decomposition above, we have $T(s) = T_1(s) + T_2(s)$.
  As each term in the expression is of the form $\chi(h)$ and as remarked earlier,
  all the characters are roots of unity so $\abs{\chi(s(e))} = 1$.
  Thus, $\abs{T_1(s)} \leq  \abs{\pi^{-1}\left(\cH^*_{k-1}\right) } \leq (d-1)^2\abs{\cH^*_{k-1}}$ 
\begin{align*}
  \mu := \abs{\E_{s \sim \cD} T}  &= \abs{\E T_1  + \E T_2} \\
                              &\leq \abs{\E T_1}  + \abs{\E T_2} \\
                              & \leq \abs{\cH^s_{k-1}}  + \sum_{w \not\in \cH^s_{k-1}} \abs{\E_{s \sim \cD}\chi_w(s)}\\
                              & \leq d^2(2^\gamma\sqrt{d-1})^{2k}  + \nu \abs{\cH_{k-1}}\\
                               & \leq d^2(2^\gamma\sqrt{d-1})^{2k}  + \nu nd^{2k+2}.
\end{align*}
Here we have used the observation that $\abs{\cH^s_{k-1}} \leq
(d-1)^2\{|\text{Singleton-free $(k-1)$-hikes}|\}$ and \cref{lem:biased_sum}. The bound on $\abs{\cH_{k-1}}$ is trivial as we have $nd$ choices for the starting edge and a walk of length of $2k+1$. Since $T$ is a
non-negative random variable, we apply Markov to conclude that
$T \leq \mu \ell$ with probability at most $1/\ell$.
\begin{align*}
\rho(B_s(\chi))\le T^{1/2k} < (\mu \ell)^{1/2k} &\leq  \left(  d^2\ell\left( 2^\gamma\sqrt{d-1} \right)^{2k} + \nu \ell n d^{2k+2}\right)^{1/2k}\\
  &\leq   (d^2\ell)^{1/2k}2^\gamma\sqrt{d-1} +  \left(\nu \ell n d^{2k+2} \right)^{1/2k}\\
  &\leq  2^{\gamma'}\sqrt{d-1} +  (\nu \ell n d^{2})^{1/2k}d\\
  &\leq  2^{\gamma'}\sqrt{d-1}  +  \frac{\varepsilon}{d}d\\
  &\leq  2^{\gamma'}\sqrt{d-1} + \varepsilon.
\end{align*}
\end{proof}

\subsection{The Instantiation}

Before we instantiate the explicit construction of abelian lifted
expanders leading to~\cref{theo:main1}, we will need two tools.
The first one is an explicit construction of expander graphs to be
used as base graphs in the lifting operation. Since we need this
technical condition of bicycle-freeness, we use the construction
in~\cite{MOP20}.

\begin{theorem}\cite[Theorem 1.1]{MOP20}\label{thm:mopgraph}
For any given constants $d\ge 3,\varepsilon >0$, one can construct in
deterministic polynomial time, an infinite family of graphs $\{G_n\}$
with $\lambda(G_n) \leq 2\sqrt{d-1}+\varepsilon$ and $G_n$ is
\begin{itemize}
    \item $n \le |V(G_n)| \le 2n$ ,
    \item $G_n$ is bicycle-free at radius $c \log_{d-1}( |V(G_n)|)$,
    \item $\lambda_2(B_G) \leq \sqrt{d-1} + \varepsilon$.
\end{itemize}
\end{theorem}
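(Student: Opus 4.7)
The plan is to obtain the theorem via a seed-and-amplification scheme: start from a small ``seed'' graph that is provably Ramanujan and has large girth, then repeatedly apply derandomized $2$-lifts until the vertex count lies in the window $[n,2n]$. Since each $2$-lift doubles the number of vertices, the total number of lifting steps is $O(\log n)$, and we have to ensure that at each step (i) the near-Ramanujan bound $2\sqrt{d-1}+\varepsilon$ is (approximately) preserved, (ii) bicycle-freeness at radius $c\log_{d-1}|V|$ is preserved or can be re-established, and (iii) each $2$-lift can be carried out by a deterministic polynomial-time algorithm. The final spectral slackness $\varepsilon$ will be split so that each of the $O(\log n)$ lifts loses at most $\varepsilon/\log n$ in the eigenvalue bound.

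For the \textbf{seed graph}, I would take an LPS/Morgenstern Ramanujan graph of degree $d$ if $d-1$ is a prime power; this already gives $\lambda\le 2\sqrt{d-1}$ and girth $\Omega(\log n)$, which in particular yields bicycle-freeness at logarithmic radius. For general $d$, I would either (a) patch in a small deterministic construction of a near-Ramanujan $d$-regular seed obtained by passing from a bipartite Ramanujan graph via an extra fixed lift, or (b) start from a Ramanujan graph of a nearby degree and use a fixed, constant-sized modification to reach degree $d$ while losing only a small additive $\varepsilon_0$ in the eigenvalue. The seed is of constant size (depending on $d,\varepsilon$), so its bicycle-free radius is moot: we rely on the \emph{post-lift} base graph being bicycle-free.

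The core step is a \textbf{derandomized $2$-lift}. Given a current graph $G_0$, I would sample a signing $s\colon E(G_0)\to\mathbb Z_2$ from an $\nu$-biased distribution with $\nu\le (nd)^{-O(1)}$; such distributions have support size $\poly(n)$ and are explicitly constructible. By \cref{fact:character_decomp} the new non-backtracking spectrum is $\spec(B_{G_0}) \cup \spec(B_s(\chi))$ with $\chi$ the nontrivial character of $\mathbb Z_2$, so only $\rho(B_s(\chi))$ needs control. Apply the trace power method exactly as in \cref{lem:spectrum} with walk length $2k$ chosen so that $k\asymp \log n$; the analysis decomposes $\tr((B_s^*)^kB_s^k)$ into a singleton-free part (bounded by counting singleton-free $(k-1)$-hikes) and a part killed in expectation up to $\nu$ by \cref{lem:biased_sum}. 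Using the refined graph-plus-hike encoding of \cref{lemma:total_hike_graphs_2}, with $k$ small enough that $k\le e^{\delta r}$ for the current bicycle-free radius $r=\Theta(\log n)$, the singleton-free count is $(2^{o(1)}\sqrt{d-1})^{2k}$, so taking $2k$-th roots gives $\rho(B_s(\chi))\le \sqrt{d-1}+\varepsilon/\log n$, hence (via \cref{fact:ihara}) $\lambda(G_1)\le 2\sqrt{d-1}+\varepsilon$ after all the lifts accumulate. Finally, iterate over the $\poly(n)$ elements in the $\nu$-biased support and pick one that achieves the union bound over characters; this is where the probabilistic-method step becomes a deterministic search.

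The \textbf{main obstacle} is maintaining bicycle-freeness at radius $c\log_{d-1}|V(G)|$ across lifts. A $2$-lift can create new short cycles: whenever a short cycle in $G_0$ has a signing whose product is $+1$, two disjoint lifted copies of it appear in $G_1$, and pairs of such copies in the same neighborhood could destroy bicycle-freeness. The fix is to add a second constraint on the signing, namely that for every vertex $v$ and every pair of short cycles (length $\le c\log n$) in the ball around $v$, the signing makes at least one cycle ``non-closing''; this event is parametrized by a small number of linear tests over $\mathbb Z_2$ and hence is preserved with $1-o(1)$ probability under an $\nu$-biased distribution, combining with the spectral event via union bound. A secondary obstacle is the delicate choice of $k$: it must exceed $\log(\ell d^2)$ to overpower the union bound (here $\ell=2$, but the argument must scale), yet satisfy $k\le e^{\delta r}$ so the improved encoding of \cref{lemma:graphcount} and \cref{cor:excessdegree} applies; the bicycle-free radius grows like $c\log n$, so such a $k$ exists. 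Once both the spectral and bicycle-free events are satisfied, the lifted graph $G_1$ meets the inductive hypothesis and the construction proceeds.
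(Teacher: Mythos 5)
This statement is cited verbatim from~\cite{MOP20} (Theorem~1.1, augmented with two extra properties that MOP20's construction is known to satisfy), and the paper under review does not prove it -- it is used as a black box, so there is no in-paper proof to compare against. Your proposal is an attempt to reconstruct MOP20's own argument. At the level of the high-level skeleton -- seed graph, iterated derandomized $2$-lifts via $\varepsilon$-biased signings, trace power method on the non-backtracking operator, exhaustive search over the $\varepsilon$-biased support at each step -- you do capture the shape of MOP20's construction, which this paper itself describes as ``a sequence of $2$-lifts'' requiring ``bicycle-freeness \dots at each lift operation.''

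There are, however, real gaps. First, the ``split $\varepsilon$ into $\varepsilon/\log n$ pieces'' plan reflects a misunderstanding of how lifts compose. By \cref{fact:character_decomp}, the non-backtracking spectrum of a $2$-lift is \emph{exactly} the union $\spec(B_{G_0}) \cup \spec(B_s(\chi))$; the bound $\rho_2(B_{G_i}) \le \sqrt{d-1}+\varepsilon$ is therefore \emph{preserved}, not degraded, by each lift, provided the fresh term $\rho(B_s(\chi))$ separately meets the same threshold. Losses do not accumulate -- the resulting bound is a maximum over steps, not a sum -- and this ``union of spectra'' observation is precisely what makes an iterated-lift construction viable; your proposal misses it. Second, and more seriously, you wave away the seed: a constant-size seed has constant bicycle-free radius, and while lifts preserve bicycle-freeness they do not automatically grow the radius to $\Theta(\log|V|)$, which is what the trace power method needs at every step of the chain. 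Your proposed fix -- adding constraints so that ``at least one of a pair of short cycles is non-closing,'' viewed as ``a small number of linear tests'' -- does not go through as stated: $\varepsilon$-biased sets control the bias of each individual $\mathbb{Z}_2$-test, but the event you want has intrinsically small probability (each test only holds with probability about $1/2$), so a union bound is useless. One would need a direct argument that an $\varepsilon$-biased signing, with probability $1-1/\poly(n_i)$, increases the bicycle-free radius by a suitable amount, exploiting that a bicycle-free base graph has at most one cycle per ball so the requirement localizes to one XOR test plus a distance condition per vertex. That lemma is the technical core of the statement and is left unproved. Third, the LPS/Morgenstern seed requires $d-1$ to be a prime power; ``modify to a nearby degree'' is not a valid move here, and (as far as I recall) MOP20 instead starts from a graph of guaranteed bicycle-free radius but only weak $O(\sqrt{d})$ expansion and lets the lift analysis upgrade the spectral bound, rather than requiring a Ramanujan seed.
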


The second tool is a $\nu$-biased distribution for abelian groups
(having a sample space depending polynomial on $1/\nu$). In
particular, we use a recent construction by Jalan and Moshkovitz.

\begin{theorem}\cite{JM21}\label{thm:biasedset}
  Given the generating elements of a finite abelian group $H$ and an integer $m \geq 1$ and $\nu >0$, there is a deterministic polynomial
  time algorithm that constructs subset $S \subseteq H^m$ with size $\;O\left(\frac{m \log(H)^{O(1)}}{\nu^{2+ o(1)}}\right)$ such that the
  uniform distribution over $S$ is $\nu$-biased.
\end{theorem}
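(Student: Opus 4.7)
The plan is to reduce the problem to constructing $\nu$-biased sets over cyclic groups of prime-power order, and then to lift known near-optimal constructions over prime fields. By the structure theorem for finite abelian groups, $H$ decomposes as a direct sum $H \cong \bigoplus_{i=1}^{r} \mathbb{Z}_{p_i^{k_i}}$ where each $p_i$ is prime, and this decomposition (together with an explicit isomorphism) can be computed from the generating elements in deterministic polynomial time via Smith normal form. So first I would argue that it suffices to construct $\nu$-biased distributions on each $\mathbb{Z}_{p_i^{k_i}}^m$ separately and then combine them via an independent direct product. The key observation is that any non-trivial character $\chi$ of $H^m \cong \prod_i \mathbb{Z}_{p_i^{k_i}}^m$ factors as $\chi = \chi_1 \otimes \cdots \otimes \chi_r$ with at least one $\chi_i$ non-trivial; so if each coordinate is sampled independently from a $\nu$-biased distribution on $\mathbb{Z}_{p_i^{k_i}}^m$, the bias of $\chi$ on the product is at most the bias of the factor corresponding to the non-trivial $\chi_i$, namely $\nu$. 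The resulting support size is multiplicative in the factor constructions.

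Next, for a single prime power $q = p^k$, I would construct a $\nu$-biased distribution on $\mathbb{Z}_q^m$. The starting point is a near-optimal $\nu$-biased set over $\mathbb{F}_p^{km}$ of size $O(km \cdot \log(p)^{O(1)}/\nu^{2+o(1)})$, which can be obtained by a Ta-Shma-style expander walk construction adapted from the Boolean setting (using unique-neighbour expanders over an appropriately sized alphabet, or alternatively the generalizations of AGHP given by Azar, Motwani and Naor). To transfer this to $\mathbb{Z}_{p^k}^m$, identify each coordinate $x \in \mathbb{Z}_{p^k}$ with its $p$-adic digit expansion $(x^{(0)},\dots,x^{(k-1)}) \in \mathbb{F}_p^k$. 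Any character of $\mathbb{Z}_{p^k}$ has the form $x \mapsto \omega^{ax}$ with $\omega = e^{2\pi i /p^k}$ and $a \in \mathbb{Z}_{p^k}$, and when viewed as a function of the digit expansion it becomes a complex exponential of an affine-linear function of the digits. Thus a character sum over $\mathbb{Z}_q^m$ can be bounded by the character sum of a (non-trivial, bounded-modulus) ``generalized character'' of $\mathbb{F}_p^{km}$; one then shows that a $\nu$-biased distribution on $\mathbb{F}_p^{km}$ fools these bounded-modulus exponentials with at most a $\poly(k,\log p)$ loss in the bias.

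Finally, combining the per-factor constructions via the product step and noting $\sum_i k_i \log p_i \le \log|H|$, I would obtain a total sample space of size $O\!\left(m \log(|H|)^{O(1)}/\nu^{2+o(1)}\right)$ constructed in deterministic polynomial time. The main obstacle I expect is the second step: achieving the near-optimal $\nu^{2+o(1)}$ exponent rather than the trivial $\nu^{2}$ from a probabilistic construction requires running the Ta-Shma walk over the non-binary alphabet $\mathbb{F}_p$ while keeping the analysis tight, and then ensuring that the $p$-adic digit lift to $\mathbb{Z}_{p^k}$ preserves this exponent up to only polylogarithmic factors in $|H|$. The character-sum transfer is delicate because the ``lifted'' characters are not themselves characters of $\mathbb{F}_p^{km}$ but rather bounded-modulus exponentials, so the argument must handle them via (for instance) expressing each such exponential as a convex combination of characters and applying linearity.
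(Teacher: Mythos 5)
The paper does not prove this theorem; it cites it from Jalan and Moshkovitz \cite{JM21}, so there is no internal proof to compare against. Evaluating your proposal on its own terms, there are two genuine gaps.

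The first is in the combination step. You decompose $H \cong \bigoplus_i \mathbb{Z}_{p_i^{k_i}}$ and take an \emph{independent} product of $\nu$-biased distributions, one for each prime-power factor. This does preserve the bias bound, but the support size multiplies: if $H$ has $r$ prime-power factors, and $r$ can be as large as $\Theta(\log|H|/\log\log|H|)$ (e.g.\ $H = \Z_2 \times \Z_3 \times \Z_5 \times \cdots$), the resulting sample space has size on the order of $(m/\nu^2)^r$, which is super-polynomial in $m$, $\log|H|$, and $1/\nu$. A correct reduction is instead to a single cyclic group via the quotient structure: writing $H$ in Smith normal form $\Z_{n_1}\times\cdots\times\Z_{n_t}$ with $n_1\mid\cdots\mid n_t$, the group $H^m$ is a quotient of $\Z_{n_t}^{tm}$, and since every non-trivial character of $H^m$ pulls back to a non-trivial character of $\Z_{n_t}^{tm}$, pushing forward a $\nu$-biased distribution on $\Z_{n_t}^{tm}$ yields a $\nu$-biased distribution on $H^m$ with the \emph{same} support size. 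As $t \le \log_2|H|$ and $\log n_t \le \log|H|$, this gives the claimed $O(m\,\log(|H|)^{O(1)}/\nu^{2+o(1)})$ once you have the cyclic case.

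The second gap is in the digit-expansion transfer, and it is fatal in the form you sketch. You want to transport a $\nu$-biased distribution over $\F_p^{km}$ to $\Z_{p^k}^m$ via the $p$-adic digit bijection and claim only a $\poly(k,\log p)$ loss. The argument you outline (expressing the lifted character as a linear combination of $\F_p^{km}$-characters and applying linearity) bounds the new bias by $\nu \cdot \smallnorm{\widehat{\chi}}_1$, where $\smallnorm{\widehat{\chi}}_1$ is the Fourier $L^1$ norm of the lifted character over $\F_p^{km}$. This norm factorizes over the $m$ coordinates, and each coordinate with a non-degenerate character contributes a factor strictly greater than $1$. Concretely, for $p=2$, $k=2$, the character $x\mapsto i^x$ of $\Z_4$ equals $i^{x_0}(-1)^{x_1}$ in the bit coordinates, and the function $x_0\mapsto i^{x_0}$ on $\F_2$ has Fourier expansion $\tfrac{1+i}{2} + \tfrac{1-i}{2}(-1)^{x_0}$, hence $L^1$ norm $\sqrt{2}$. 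A character of $\Z_4^m$ that is non-degenerate in every coordinate therefore has lifted $L^1$ norm $2^{m/2}$, so the transfer loses a factor exponential in $m$, not merely $\poly(k,\log p)$. (Parseval gives $\smallnorm{\widehat{\chi}}_2 = 1$, which would suffice for a \emph{random} small set by a second-moment argument, but that is precisely what an explicit construction cannot invoke.) The natural repair is to avoid the transfer entirely and run the $\nu$-biased construction directly over $\Z_q$ for $q = \exp(H)$ — the Ta-Shma walk analysis is essentially agnostic to whether the base alphabet is $\F_2$ or a general cyclic group, since it only uses the group structure of the character group — which, together with the quotient reduction above, gives the theorem.
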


We are now ready to prove our first main result.

\MainI*

\begin{proof}
  Construct $G_0$ on $n \le n' \le 2n$ vertices for given $(d,\epsilon)$ using \cref{thm:mopgraph} which has $r \geq c\log_{d-1} n'$. 

\begin{itemize}[leftmargin=0.3cm, label={$\cdot$}]
    \item \textbf{Regime 1} - Here shorter walks will suffice and we will use the bound on $\gamma_2$ from \cref{lemma:total_hike_graphs_2}. To get Near-Ramanujan, we need $\gamma' = \gamma_2 + \frac{\log (d^2\ell)}{2k} = \gamma_2' + \frac{\log (\ell)}{2k} $ to be vanishing with $\varepsilon$. Observe that when $k = \omega(\log n)$, $\gamma_2$ is bounded by $ o(1) + \left(2\sqrt{\delta} + \delta\log d\right) $. We pick $\delta$ small enough and assume that $n' \ge N(\varepsilon,d)$ such that $\gamma_2' \leq \frac{2\varepsilon}{\sqrt{d-1}}$. In the bounded $k$ regime we can pick $k < e^{\delta r}$. Since, $\frac{\log (\ell)}{2k}$ must also be vanishing in $\varepsilon$, this forces $\log(\ell) \leq \varepsilon k \le \varepsilon e^{\delta r}$. This explains the bound on $\ell$.
    
    \item \textbf{Regime 2} - Here $\ell$ is larger and so we pick $k = \log \ell$. Now, we need to use $\gamma_1$ which we recall is $1 + \frac{\log k}{r} + o(1)$. Thus, $\gamma' = (\gamma_1 + \frac{\log{d^2}}{k}) + \frac{\log{\ell}}{k} \leq 3/2 + \frac{\log k}{r}$. Since, $r = c \log_{d-1}(n')$, to get non-trivial expansion $k \le n^{c/2}$ which explains the bound on the exponent $\delta$.   
\end{itemize}

The precise parameters are as follows
\[
\begin{array}{|c|c|c|c|c|c|}
     \hline
     \textbf{Regime} & \delta &  k & \nu & \gamma' \\
     \hline
     1& O\left(\frac{\varepsilon^2}{d}\right)& \frac{10\sqrt{d-1}}{\varepsilon}\max(\log \ell, \log n)& (nld^2)^{-1} \left(\frac{\varepsilon}{3d}\right)^{2k} = (n\ell)^{c_{d,\varepsilon}} & \frac{2\varepsilon}{3\sqrt{(d-1)}}\\
     \hline
     2&  \le c/2 & \log \ell = n^\delta & (n \ell d^2)^{-1} \left(\frac{1}{3d}\right)^{2k} = (n\ell)^{c_{d}}  & 2 + \frac{\delta}{c}\log (d-1)  \\
     \hline
\end{array}
 \]

Construct a $\nu$-biased distribution $\cD$ using
\cref{thm:biasedset}. These two constructions take $\poly(n,\ell)$
time.

From \cref{cor:decomposespec}, we have to analyze $B(\chi)$
for $\ell-1$ non-trivial characters $\chi$ that appear in this
decomposition. The largest eigenvalue is clearly given by $B(1)$ which
is $d-1$. For the second largest, $\lambda_2(B(1)) \leq \sqrt{d-1} + \varepsilon$ by
the property of the base graph $G$ obtained by
\cref{thm:mopgraph}. Since we have the bicycle-free property,
we can use~\cref{lem:spectrum} to conclude that for any
non-trivial characters we have except with probability at most $1/\ell$
\begin{itemize}[leftmargin=0.3cm, label={$\cdot$}]
    \item \textbf{Regime 1} - $\rho(B(\chi)) \leq 2^{\gamma'}\sqrt{d-1} + \epsilon/3 \leq \sqrt{d-1} + \varepsilon$.
    \item \textbf{Regime 2} - $\rho(B(\chi)) \leq 2^{\gamma'}\sqrt{d-1} + 1 \leq 2\cdot2^{2} d^{\delta/c} \sqrt{d-1}  ~\le~ \varepsilon d $ when $\; d \ge \left(\frac{8}{\varepsilon}\right)^{\frac{2c}{c-2\delta}}$.
\end{itemize}

Using the \cref{fact:character_decomp}, we assume that the
decomposition has exactly one trivial character (say, $\chi_1$) and
$(l-1)$ non-trivial characters. Then for the trivial character
$\rho(B_{G_0(s)}) = \rho(B(\chi_1)) = d-1$ and thus, $ \rho_2(B)
= \max \left\{ \lambda(G_0), \max_{i=2}^\ell \rho(B(\chi_i)) \right\}$.

Since the bound holds for any non-trivial $\chi$ except with
probability $1/\ell$ we take a union bound over these $\ell-1$
characters we get that there is a labelling $s \in D$ such that the
bound holds for $\rho (B(\chi_i))$ and thus for
$\lambda(B_{G_0(s)})$. By \cref{fact:ihara}, we get that
$\lambda(G) \leq 2\rho_2(B_G) $ which satisfies the bounds we need.

We can brute force through each $s \in \supp(\cD)$ to find an $s$ such
that the lifted graph $G = G_0(s)$ has the required spectral
gap. Checking this is a simple linear algebraic task and can be done
in time cubic in $n\ell$. Therefore, the total time taken is
$\poly(n,\ell)$.
\end{proof}

\section{Derandomizing Exponential Lifts}\label{sec:exact_exponential_lift}

We will now construct explicit expanding graphs where the lift size is
exponential. In this regime, known tools like expander Chernoff
suffice and in fact, one can verify that the results of \cite{ACKM19}
can be directly derandomized by an application of these. However, we
give a simplified (mostly) self-contained proof relying on a key lemma
of Bilu and Linial \cite{BL06} which could be of independent interest
and derandomize it.

\subsection{The Setup and Construction}

In this subsection, we will describe how the signings for the lift are
generated via walks on an expander and utilize an expander Hoeffding
bound which will be used to bound the spectrum. We will assume from
now that the group is $\Z_\ell$\footnote{This can be extended in a
straightforward manner to any abelian subgroup $H \leqslant Sym(\ell)$
by just taking the expander graph on $|H|$ vertices. Since, we only
work with abelian groups having a transitive action, $\abs{H} = \ell$.
} We start with a expander construction from \cite{Alon21},

\begin{theorem}\cite[Thm. 1.3]{Alon21}
  For every degree $d \ge 3$, every $\epsilon > 0$ and all sufficiently large $ m \geq n_0(d,\epsilon) $ where
  $md$ is even, there is an explicit construction of an $(m, d, \lambda)$-graph with $\lambda \leq 2\sqrt{d-1} + \epsilon$.
\end{theorem}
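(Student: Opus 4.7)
The plan is to leverage an existing explicit near-Ramanujan construction (which is only defined on a sparse sequence of vertex counts) together with an explicit size-adjustment operation that preserves near-Ramanujan expansion, thereby covering every admissible $m$. Concretely, I would invoke \cref{thm:mopgraph} to obtain, for each $d \geq 3$ and $\epsilon > 0$, $d$-regular graphs $G_n$ with $\lambda(G_n) \leq 2\sqrt{d-1} + \epsilon/2$ on $n$ vertices for a sequence $\{n_i\}$ that is sufficiently dense in the sense that $n_{i+1}/n_i \to 1$. For any target $m$ with $md$ even and $m \geq n_0(d,\epsilon)$, pick $n = n_i$ to be the largest value in this sequence with $n \leq m$, so that $m = n + k$ with $k/m = o(1)$.

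Next, construct $G$ on $m$ vertices from $G_n$ by a \emph{partial $2$-lift}: choose a set $S \subseteq V(G_n)$ of size $k$, duplicate each vertex of $S$ while keeping $V(G_n) \setminus S$ unlifted, and introduce a $\{\pm 1\}$-signing on edges touching $S$ that decides how the duplicated copies are connected. The resulting graph is $d$-regular on exactly $n + k = m$ vertices. Derandomize the signing using an $\epsilon$-biased distribution over $\mathbb{Z}_2^{|E(G_n)|}$ (\cref{thm:biasedset}) and brute-force search over its polynomial-size support to locate a signing $s$ with the desired spectral property.

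Finally, bound $\lambda(G)$. The non-backtracking operator of the partially lifted graph should decompose, in the spirit of \cref{fact:character_decomp}, into a block capturing $G_n$ alone (whose spectrum is already controlled by hypothesis) and a signed block depending on $s$. The signed block is then bounded by the trace/hike-counting argument of \cref{lem:spectrum}, exploiting both the bicycle-free radius of $G_n$ promised by \cref{thm:mopgraph} and the $\epsilon$-biased property of the signing. The main obstacle will be the spectral analysis of the partial lift itself: in a full $2$-lift the signed non-backtracking operator decomposes cleanly and every hike appearing in the trace is ``symmetric,'' but here the spectrum depends on how singleton-free hikes cross the asymmetric boundary between lifted vertices in $S$ and unlifted vertices in $V(G_n)\setminus S$. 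Adapting the MOP hike encoding to this mixed setting — so that one can still enumerate singleton-free hikes with a multiplicative overhead absorbable into $\epsilon/2$ — is the crux, and likely requires a refined encoding that records, at each step of a hike, whether it crosses the boundary of $S$, together with a careful choice of $S$ (for instance, a spread-out set whose induced subgraph inherits the bicycle-free property) so that hikes lingering inside $S$ contribute negligibly.
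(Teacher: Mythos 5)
This statement is not proved in the paper: it is invoked as a black box, cited to Alon's \emph{Explicit expanders of every degree and size}~\cite{Alon21}, so there is no internal proof to compare against. What you have written is therefore an attempt to reprove Alon's theorem from scratch, and as such it needs to be assessed on its own terms.

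The central step of your argument --- the ``partial $2$-lift'' --- does not produce a $d$-regular graph, and this is a fatal gap. Suppose $v\in S$ has $b>0$ neighbors outside $S$. Each edge $(u,v)$ with $u\notin S$ must become a single edge in the lift (otherwise $u$'s degree goes up). Those $b$ edges are then split between the two copies $v_0,v_1$: if $v_0$ receives $c$ of them it has degree $(d-b)+c$ and $v_1$ has degree $(d-b)+(b-c)=d-c$, and both equal $d$ only when $b=0$. Thus the construction is $d$-regular only if $S$ has no edges leaving it, i.e.\ $S$ is a union of connected components, which defeats the purpose (and contradicts expansion of $G_n$). Without an explicit mechanism for restoring degrees --- say, adding a suitable auxiliary regular graph across $S_0\cup S_1$, which would be a genuinely different construction with its own spectral issues --- the object you want to analyze does not exist. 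A second, downstream problem: the block decomposition you want to apply (in the spirit of \cref{fact:character_decomp}) relies on a free $\mathbb{Z}_2$-action on the lifted graph, and a partial lift carries no such action; the non-backtracking operator of a partial lift therefore does not split into a ``base'' block plus a signed block indexed by the nontrivial character. So both the construction and the spectral analysis need to be replaced, not merely refined.

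Alon's actual argument does indeed start, as you intend, from near-Ramanujan graphs on a sparse but reasonably dense sequence of sizes and then performs explicit size adjustments; but the adjustment step is a different local surgery (designed so that regularity is preserved by construction) paired with a perturbation/interlacing-style spectral bound rather than a non-backtracking trace argument over a group lift. If you want to carry this through yourself, the first thing to fix is to specify a size-increasing operation that is provably $d$-regular; only then does it make sense to ask how to control $\lambda_2$.
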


We can fix the degree to be an even constant, say $d'$, and have
$\epsilon = \sqrt{d'-1}$. Then, for every large enough $m$, we have an
expander on $m$ vertices. We use this to get an explicit expander,
$L$, on $\ell$ vertices. To obtain a sequence of lifts i.e. elements
of $\Z_\ell$, we first pick a random vertex, $v_1$, of $L$ which uses
$\log \ell$ bits of randomness. Then, we do a random walk for $dn-1$
steps producing a sequence $(v_1, \cdots, v_{dn-1})$ of vertices of
$L$ which we interpret as elements of $\Z_\ell$. Each step of the
random walk requires $O(\log d')$ bits of randomness as the graph is
$d'$-regular. Therefore, the total amount of randomness is
$O(\log \ell + (dn-1)\log d' )$\footnote{Another way to say this is
that the number of walks of length on $dn$ on $L$ is $|H| \cdot
d'^{dn}$.} The main observation is that an expander random walk
suffices to guarantee that the lifted graph will be an expander. To
formalize this, we first state a Hoeffding type concentration result
for our random variables generated via the Markov chain on an
expander.

\begin{theorem}\cite[Thm. 1.1]{Rao19}\label{thm:expanderchernoff}
  Let $\{Y_i\}_{i=1}^\infty$ be a stationary Markov chain with state space $[N]$, transition matrix $A$,
  stationary probability measure $\pi$, and averaging operator $E_\pi$, so that $Y_1$ is distributed
  according to $\pi$.
  Let $\lambda = |A - E_\pi|_{L_2(\pi) \to L_2(\pi)}$ and let $f_1, \cdots , f_t: [N] \to \R$ so that $\E[f_i(Y_i)] = 0$
  for all $i$ and $|f_i(v)| \leq a_i$ for all $v \in [N]$ and all $i$. Then for $u \geq 0$,
  \begin{align*}
    \Pr\left[\left|\sum_{i=1}^t f_i(Y_i)\right| \geq u \left(\sum_{i=1}^t a_i^2\right)^{1/2}\right] \leq 2\exp\left(\frac{-u^2(1-\lambda)}{64e}\right).
  \end{align*}
\end{theorem}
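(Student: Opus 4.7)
The plan is to prove this via the classical Chernoff moment-generating-function (MGF) method adapted to the Markov chain setting using operator perturbation theory on $L_2(\pi)$. For any $\theta > 0$, Markov's inequality gives
\[
\Pr\!\left[ \sum_{i=1}^t f_i(Y_i) \geq u \Big(\sum_i a_i^2\Big)^{1/2} \right] \leq e^{-\theta u (\sum_i a_i^2)^{1/2}} \cdot \Esymb\!\left[ \exp\!\Big(\theta \sum_{i=1}^t f_i(Y_i)\Big) \right],
\]
and I would symmetrize at the end to get the two-sided bound (hence the factor of $2$). So the main task is to control the MGF; then choosing $\theta$ optimally, roughly $\theta \asymp u(1-\lambda) / (\sum_i a_i^2)^{1/2}$, will deliver the stated exponent.

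\textbf{Reduction to operator norms.} Introduce the multiplication (diagonal) operator $D_i$ on $L_2(\pi)$ defined by $(D_i g)(v) = e^{\theta f_i(v)} g(v)$. Using stationarity of $\{Y_i\}$, the MGF can be written as an inner product in $L_2(\pi)$:
\[
\Esymb\!\left[\exp\!\Big(\theta \sum_{i=1}^t f_i(Y_i)\Big)\right] \;=\; \langle \mathbf{1},\, D_1 A D_2 A \cdots A D_t \mathbf{1}\rangle_{L_2(\pi)}.
\]
Applying Cauchy--Schwarz (and possibly a symmetrization that replaces each $A$ by its self-adjoint part in $L_2(\pi)$), this is upper bounded by $\prod_{i=1}^{t}\|A D_i\|_{L_2(\pi)\to L_2(\pi)}$ times a factor polynomial in the $e^{\theta a_i}$ (which turns out to be absorbable). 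Thus it suffices to bound each $\|A D_i\|_{L_2(\pi)\to L_2(\pi)}$ individually.

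\textbf{Key operator perturbation lemma.} The heart of the argument is a bound of the form
\[
\|A D_i\|_{L_2(\pi)\to L_2(\pi)} \;\leq\; \exp\!\left( \frac{C \theta^2 a_i^2}{1-\lambda} \right),
\]
valid provided $\theta a_i \le c_0$ for some absolute $c_0$. To prove this, I decompose $A = E_\pi + (A - E_\pi)$ where $\|A - E_\pi\|_{L_2(\pi)\to L_2(\pi)} \leq \lambda$ by hypothesis. Taylor-expand $D_i = I + \theta F_i + \tfrac{\theta^2}{2} F_i^2 + \cdots$ where $F_i$ is multiplication by $f_i$. The mean-zero hypothesis $\Esymb[f_i(Y_i)]=0$ means $E_\pi F_i E_\pi = 0$, killing the linear term in the $E_\pi$-invariant direction. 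The surviving cross terms involving $(A-E_\pi) F_i$ pick up a $\lambda$ factor; summing the geometric series in $\lambda$ produces the $1/(1-\lambda)$ denominator. A careful bookkeeping of the second and higher order terms, using $\|F_i\|_{L_2(\pi)\to L_2(\pi)} \leq a_i$, produces the stated exponential bound.

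\textbf{Main obstacle and wrap-up.} Combining the per-step bounds yields $\Esymb[e^{\theta S}] \leq \exp\!\big(C \theta^2 \sum_i a_i^2 /(1-\lambda)\big)$, and plugging this into the Chernoff inequality with the optimal $\theta$ above gives $\exp(-u^2(1-\lambda)/(64e))$ on one side; doubling for the two-sided version gives the claim. The main technical obstacle is establishing the perturbation lemma with a single power of $1/(1-\lambda)$ (rather than $1/(1-\lambda)^2$), which requires carefully exploiting the mean-zero orthogonality of $f_i$ against the top (constant) eigenvector of $A$ to prevent the linear-in-$\theta$ terms from leaking into the bound. A secondary subtlety is the adaptation from identically distributed, uniformly bounded $f_i$ (as in Gillman's original result) to the non-identical case with individual bounds $a_i$: the operator-product formulation handles this automatically, but the Cauchy--Schwarz symmetrization step must be done with care so that the non-self-adjoint part of $A$ does not inflate the final constant.
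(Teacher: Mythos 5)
The theorem you are asked to prove is a \emph{black-box citation} to Rao's 2019 Hoeffding inequality for Markov chains: the paper states it as \texttt{\textbackslash cite[Thm.\ 1.1]\{Rao19\}} and never proves it, so there is no internal proof to compare against. Your sketch follows the classical Gillman--Healy--Lezaud--Rao line of attack (MGF as an operator product $\langle \mathbf 1, D_1 A D_2 \cdots A D_t \mathbf 1\rangle_{L_2(\pi)}$, then a per-factor bound), which is indeed the right genre for this result, so at the level of ``approach selection'' you are on track.

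That said, the central technical step as you describe it would not go through. Your key lemma $\|AD_i\|_{L_2(\pi)\to L_2(\pi)} \le \exp(C\theta^2 a_i^2/(1-\lambda))$ cannot be established by the na\"\i ve decomposition $A=E_\pi + (A-E_\pi)$ plus triangle inequality and a ``geometric series in $\lambda$'': that route gives (in the reversible case) $\|AD_i\|^2 \le \|D_i E_\pi D_i\| + \|D_i(A-E_\pi)^2D_i\| \le e^{2\theta^2 a_i^2} + \lambda^2 e^{2\theta a_i}$, whose right-hand side tends to $1+\lambda^2 \approx 2$ as $\theta a_i \to 0$ when $\lambda$ is close to $1$ --- nowhere near $1 + O(\theta^2 a_i^2/(1-\lambda))$. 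The true bound holds because of a cancellation between the $E_\pi$-part and the $(A-E_\pi)$-part that the triangle inequality throws away; capturing it requires the two-component (``parallel to $\mathbf 1$'' vs.\ ``orthogonal to $\mathbf 1$'') bookkeeping à la Healy's $2\times 2$ reduction or Rao's argument, which tracks a $2\times 2$ matrix of block norms through the \emph{whole product} rather than bounding each $\|AD_i\|$ in isolation. (There is also no obvious ``geometric series in $\lambda$'' in a single-factor analysis; the $1/(1-\lambda)$ arises from amortizing the orthogonal component's $\lambda$-contraction across all $t$ factors.) Additionally, your Cauchy--Schwarz step must be regrouped as $\langle D_1\mathbf 1, (AD_2)\cdots(AD_t)\mathbf 1\rangle$ and $\|D_1\mathbf 1\|$ bounded by Hoeffding's lemma; the ``factor polynomial in $e^{\theta a_i}$, absorbable'' claim is misleading because a stray $\|D_1\|=e^{\theta a_1}$ factor is \emph{not} absorbable and must be avoided, not absorbed. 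In short: right family of ideas, but the per-factor operator-norm lemma as stated is not how this proof works, and the na\"\i ve perturbation argument for it fails.
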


\begin{corollary}\label{cor:expandercher}
Let $U$ be any subset of edges in the base graph $G$. Then, 
$$\Pr\left[ \left|\sum_{ e \in U } \Re(\chi(s(e)))\right| \geq t \right] \leq 2\exp\left(\frac{-t^2}{128e|U|}\right), $$

$$\Pr\left[ \left|\sum_{ e \in U } \Im(\chi(s(e)))\right| \geq t \right] \leq 2\exp\left(\frac{-t^2}{128e|U|}\right). $$

\end{corollary}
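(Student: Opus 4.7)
The plan is to apply Theorem~\ref{thm:expanderchernoff} essentially verbatim to the Markov chain produced by the random walk on the auxiliary expander $L$. Enumerate the edges of $G$ as $e_1, e_2, \ldots$ in the order in which the walk assigns signings, so that $s(e_i) = v_i \in \Z_\ell$, where $v_i$ is the $i$-th vertex visited by the walk. Because $v_1$ is drawn uniformly from $L$ and $L$ is regular, the chain $(v_i)$ is stationary with respect to $\pi = \mathrm{Unif}(\Z_\ell)$. For the real-part bound, I would define $f_i \from \Z_\ell \to \R$ by $f_i(v) = \Re(\chi(v))$ if $e_i \in U$ and $f_i \equiv 0$ otherwise; for the imaginary part, replace $\Re$ by $\Im$.

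Three routine checks complete the proof. First, $\E_\pi[f_i] = 0$: this is orthogonality of non-trivial characters, i.e.\ $\sum_{h \in \Z_\ell}\chi(h) = 0$, and the real or imaginary part of zero is zero. Second, $|f_i(v)| \le 1$ for all $v$, since $|\chi(v)| = 1$; thus one can take $a_i = 1$ for $e_i \in U$ and $a_i = 0$ otherwise, giving $\sum_i a_i^2 = |U|$. Third, the normalized second eigenvalue $\lambda$ appearing in Theorem~\ref{thm:expanderchernoff} (the $L_2(\pi) \to L_2(\pi)$ norm of the transition operator minus the projection onto constants) satisfies $\lambda \le 1/2$, since the construction of $L$ uses a $d'$-regular expander with $\lambda_2(L) \le 3\sqrt{d'-1}$ and one only needs to pick the inner degree $d'$ a sufficiently large constant (for example $d' \ge 37$). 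Substituting $u = t/\sqrt{|U|}$ into Theorem~\ref{thm:expanderchernoff} then yields $2\exp(-t^2(1-\lambda)/(64 e |U|)) \le 2\exp(-t^2/(128 e |U|))$, which is exactly the claimed inequality; the same chain of inequalities works verbatim for the imaginary-part claim.

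The argument is really just a translation of Rao's inequality into the notation of this section, so there is no substantive obstacle. The only mildly delicate point is confirming the constant $1 - \lambda \ge 1/2$, which amounts to having fixed the inner degree $d'$ of $L$ large enough during the construction; I would record this choice once at the construction step and then invoke it without further comment when deriving the corollary. One small bookkeeping issue worth noting is that the corollary implicitly assumes $\chi$ is a non-trivial character of $\Z_\ell$ (otherwise the zero-mean hypothesis fails); this is the only case actually needed in the sequel, since the trivial character corresponds to the top eigenvalue of the unsigned non-backtracking operator and is handled separately via \cref{fact:character_decomp}.
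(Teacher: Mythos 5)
Your proof is correct and is essentially identical to the paper's: both invoke Theorem~\ref{thm:expanderchernoff} with $f_e = \Re(\chi(s(e)))$ (resp.\ $\Im$) indicated on $U$, verify zero mean from $\sum_{h\in\Z_\ell}\chi(h)=0$, take $a_e=1$ on $U$, use $1-\lambda\ge 1/2$ for $d'\ge 36$, and substitute $u=t/\sqrt{|U|}$. Your added remark that the bound requires $\chi$ to be non-trivial (otherwise the zero-mean hypothesis fails) is a correct and worthwhile observation that the paper leaves implicit.
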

\begin{proof}
Let $Y_{e} = s(e)$ be the random variables associated to each edge
$e$. From the construction described earlier, $\{Y_{u,v}\}$ is a
Markov chain with the transition matrix being the weighted adjacency
matrix of the expander $L$ with second (normalized)eigenvalue bounded
by $3\sqrt{d'-1}/d'$. Thus, $1-\lambda \geq
1- \frac{3}{\sqrt{d'}} \geq 1/2$ for $d' \geq 36$. The stationary
measure $\pi$ is the uniform measure on vertices of $L$ and it is
stationary as the all-ones vector is an eigenvector of the weighted
adjacency matrix with eigenvalue 1. Recall that we picked the first
vertex ($Y_1$) uniformly i.e. from $\pi$. Let $f_{e}
= \Re(\chi(s(e)))$ if $e \in U$ and $0$ otherwise. Analogously $g_{e}
= \Im(\chi(s(e))$ when the edge is in $U$ and $0$ otherwise.

$\E[f_{e}] = \frac{1}{\ell} \sum_{i=0}^{\ell-1} \Re(\omega^i) $ because the
characters are roots of unity and the expectation is over $\pi$ which
is uniform. Since the sum of roots of unity are zero, so is its real
and imaginary part. This holds thus for $g_{e}$ too. Moreover, $a_{e}
= 1$ if $e \in U$ and is $0$ otherwise.
Applying \cref{thm:expanderchernoff} with $u := t/\sqrt{|U|}$ gives the
result.
\end{proof}

\subsection{A Simpler Lifting Proof}

In this section, we give a simpler proof of a weaker result similar to
one in \cite{ACKM19} which says that if the lifts were picked
independently and uniformly at random, then the lifted graph is also
expanding. In place of the random signings, we will use the signings
generated from random walks as described in the earlier section.  The
proof can be seen as building up from the tools of
\cite{BL06} by simplifying their derandomization of the 2-lift and
extending it to $\ell$-lifts.

Let $G$ be a graph, $H$ be an abelian group and $s$ be a signing
$s:E\to H$ that gives a $(H,\ell)$-lift. Let $A$ be its adjacency
matrix. We will use the earlier notation and denote by $A(\chi)$, the
matrix where for every edge $e$, we replace $1$ by $\chi(s(e))$. Let
$A(\chi) = C + iD$ where $C, D$ are real symmetric matrices.  We want
to bound the spectral radius of $A(\chi)$. It is not very hard to see
that $\norm{A} \leq 2 \max\{ \norm{C}, \norm{D}\} $. This can be
observed by letting $v = v_1 + iv_2$ be an eigenvector and $\alpha
= \max\{ \norm{C}, \norm{D}\} $. Then,
\begin{align*}
  v^* Av = \Re(v^* A v) &= (v_1^TCv_1 + v_2^TCv_2 - v_1^TDv_2 + v_2^TDv_1)\\
         &\leq \norm{C}\norm{v_1}^2 + 2\norm{D}\norm{v_1}\norm{v_2}\\
         &\leq \alpha (\norm{v_1} +\norm{v_2})^2\\
         &\leq 2 \alpha \left(\norm{v_1}^2 + \norm{v_2}^2 \right)\\
         &= 2 \alpha \norm{v}^2.
\end{align*}
Therefore we reduce the problem to bounding spectral radius for the
constituent real matrices. We now state a very useful lemma by Bilu
and Linial which is a discretization result and can be seen as a
converse to the expander mixing lemma. It says that bounding the
Rayleigh coefficient on Boolean vectors suffices to bound the real
spectrum up to logarithmic factors.

\begin{theorem}\cite[Lemma 3.3]{BL06}\label{thm:bilu1}
Let $A$ be an $n \times n$ real symmetric matrix such that the $l_1$ norm
of each row in $A$ is at most $d$, and all diagonal entries of $A$ are, in absolute
value, $O(\alpha(\log(d/\alpha) + 1))$. Assume that for any two vectors, $u,v \in \{0,1\}^n$,
with $\supp(u)\cap \supp(v) = \emptyset$:

\[ \frac{|u^tAv|}{\norm{u}\norm{v}} \leq \alpha,\]
then spectral radius of $A$ is $O(\alpha(\log(d/\alpha) + 1))$.
\end{theorem}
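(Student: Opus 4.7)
The plan is to bound the spectral radius by combining a dyadic decomposition of the top eigenvector of $A$ with the disjoint-support hypothesis. Since $A$ is real symmetric, let $v$ satisfy $Av = \mu v$ with $|\mu| = \rho(A)$; the goal is to show $|v^T A v|/\norm{v}_2^2 = O(\alpha(\log(d/\alpha)+1))$. After normalizing so that $\norm{v}_\infty = 1$, I first separate the small coordinates $V_{\mathrm{small}} = \{i : |v(i)| < \alpha/d\}$, whose contribution to $v^T A v$ is shown to be $O(\alpha)\norm{v}_2^2$ via the row-$\ell_1$ bound on $A$, and then partition the remaining coordinates by sign and magnitude into dyadic levels $V_k^\sigma = \{i : 2^{-k-1} < \sigma v(i) \leq 2^{-k}\}$ for $\sigma \in \{+,-\}$ and $k = 0, 1, \ldots, K-1$ with $K = \lceil\log_2(d/\alpha)\rceil + O(1)$.

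Writing $v \approx \sum_{k,\sigma} \sigma \cdot 2^{-k} \mathbf{1}_{V_k^\sigma}$ and expanding the Rayleigh quotient produces a weighted sum of terms $\pm\, 2^{-k-l}\, \mathbf{1}_{V_k^\sigma}^T A \mathbf{1}_{V_l^\tau}$. For every pair $(k,\sigma) \neq (l,\tau)$ the indicator vectors have disjoint supports, so the hypothesis bounds each summand by $\alpha\sqrt{|V_k^\sigma|\,|V_l^\tau|}$. Applying Cauchy--Schwarz together with the elementary inequality $\sum_{k,\sigma} 2^{-2k}|V_k^\sigma| \leq 4\norm{v}_2^2$ (valid because $|v(i)|^2 > 2^{-2k-2}$ on $V_k^\sigma$) collapses the resulting double sum:
\[
  \sum_{(k,\sigma)\neq(l,\tau)} 2^{-k-l}\,\alpha\sqrt{|V_k^\sigma|\,|V_l^\tau|} \;\leq\; \alpha \cdot 2K \cdot \sum_{k,\sigma} 2^{-2k}|V_k^\sigma| \;\leq\; O(\alpha \log(d/\alpha))\,\norm{v}_2^2.
\]
The remaining diagonal pairs $(k,\sigma) = (l,\tau)$ are self-interactions $\mathbf{1}_S^T A \mathbf{1}_S$ that the hypothesis does not address directly: I would isolate the true diagonal $\sum_{i\in S}A_{ii}$, which is absorbed by the assumption $|A_{ii}| = O(\alpha(\log(d/\alpha)+1))$, and handle the off-diagonal remainder by recursively partitioning $S = S_1 \sqcup S_2$ into balanced halves, applying the hypothesis to the cross term $\mathbf{1}_{S_1}^T A \mathbf{1}_{S_2}$ at each level and recursing on the self-terms until $|S|$ drops below a cutoff of order $d/\alpha$.

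The main obstacle I anticipate is calibrating the truncation scale and the recursion cutoff so that the final bound involves $\log(d/\alpha)$ rather than $\log n$. The threshold $\alpha/d$ defining $V_{\mathrm{small}}$ is dictated by the row-sum constraint on $A$ and is precisely what produces $K = O(\log(d/\alpha))$ levels; the diagonal hypothesis $|A_{ii}| = O(\alpha(\log(d/\alpha)+1))$ is calibrated to the same scale, so the per-level diagonal contribution does not dominate. Reconciling the naive $\log|S|$ that emerges from the recursive halving of self-terms with the desired $\log(d/\alpha)$ requires stopping the recursion at the scale $d/\alpha$ and handling smaller sets with a direct estimate, which is the technical heart of the argument. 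Once this bookkeeping is arranged, the disjoint-support hypothesis does all the essential work and one concludes $|\mu| = O(\alpha(\log(d/\alpha)+1))$, and since this bound holds for every eigenvalue of $A$ by taking the corresponding eigenvector, it yields the claimed bound on the spectral radius.
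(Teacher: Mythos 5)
The paper does not actually prove this lemma: it is cited verbatim from Bilu and Linial~\cite{BL06}, so there is no in-paper proof to compare against; I am comparing your sketch against the original argument.

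Your overall strategy --- normalize the eigenvector in $\ell_\infty$, truncate the tiny coordinates, dyadically bucket the rest by sign and magnitude into $K = O(\log(d/\alpha))$ levels, and control cross-bucket pairs with the disjoint-support hypothesis plus Cauchy--Schwarz against $\sum_{k,\sigma} 2^{-2k}|V_k^\sigma| = O(\|v\|_2^2)$ --- is indeed the Bilu--Linial strategy, and the cross-term estimate you write down is sound.

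The gap is in the same-bucket self-interaction terms $\mathbf{1}_S^T A\mathbf{1}_S$. Your recursive halving accrues a cross-term of at most $\alpha\sqrt{|S_1||S_2|}\le \alpha|S|/2$ at each of the $2^t$ nodes at level $t$, so after $L$ levels the accumulated cross-terms are $\Theta(\alpha L |S|)$, while the $2^L$ leftover pieces still contribute $\sum_T |\mathbf{1}_T^T A\mathbf{1}_T| \le \sum_T d|T| = d|S|$ by the row $\ell_1$ bound --- this leftover is \emph{independent} of $L$ and does not shrink as you recurse. Running to singletons trades the $d|S|$ for the diagonal hypothesis but costs $L = \log_2|S|$, i.e.\ potentially $\log n$, not $\log(d/\alpha)$. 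Stopping at pieces of size $\sim d/\alpha$, as you propose, still leaves the aggregate leftover at $d|S|$, which is nowhere near $\alpha\log(d/\alpha)\cdot|S|$. So the cutoff you flag as ``the technical heart'' does not in fact close the gap, and a different idea is needed.

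The resolution in~\cite{BL06} is an averaging argument, not a recursion. For a uniformly random $T\subseteq S$ one has $\E\bigl[\mathbf{1}_T^T A\mathbf{1}_{S\setminus T}\bigr] = \tfrac14 \sum_{i\ne j,\, i,j\in S} A_{ij}$, and since $T$ and $S\setminus T$ are always disjoint the hypothesis gives $\bigl|\mathbf{1}_T^T A\mathbf{1}_{S\setminus T}\bigr| \le \alpha\sqrt{|T||S\setminus T|} \le \alpha|S|/2$ pointwise, hence $\bigl|\mathbf{1}_S^T A\mathbf{1}_S - \sum_{i\in S} A_{ii}\bigr| \le 2\alpha|S|$ in one step, with no logarithmic loss. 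With this in hand the self-terms contribute only $O(\alpha)\|v\|_2^2$ plus the assumed diagonal bound, and the $\log(d/\alpha)$ in the conclusion comes solely from the $K$ dyadic levels in the cross-terms and from the diagonal hypothesis. You should replace the recursive halving and cutoff by this one-shot averaging step; the rest of your outline, including the threshold $\beta = \alpha/d$ dictating $K$, then carries through in the spirit of~\cite{BL06}.
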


Since the graph is $d$-regular, $A(\chi)$ is $d$-sparse and so is $C$
and $D$. The $\ell_1$-norm of any row of $C, D \leq d $ as we have a
sum of $d$ entries of the form $\Re(\omega^j), \Im(\omega^j)$ for some
$j$ and the absolute value of each of these is upper bounded by
$1$. Moreover, the diagonal entries are all zero. Therefore, these
satisfy the norm criteria of the theorem. Now, we need to bound the
(generalized) Rayleigh coefficient.

Let $S,T$ be subsets of the vertices of a $d$-regular graph. Define
$E(S,T) = \{(x,y) \in E\;|\; x \in S, \;y\in T\}$ and let $e(S,T) :=
|E(S,T)|$. Let $u,v \in \{0,1\}^n$ and let $S := \supp(u),
T:= \supp(v)$. Then,
\begin{align}\label{eqn:sumbound}
     \abs{u^TCv} \leq \sum_{u \in E(S,T)} |\Re(\chi(s(e)))| \leq e(S,T),   \\ 
     \abs{u^TDv} \leq \sum_{u \in E(S,T)} |\Im(\chi(s(e)))| \leq  e(S,T).  
\end{align}
Let us now state the expander mixing lemma,
\begin{equation}\label{eqn:eml}
\left|e(S,T) - \frac{d|S||T|}{n} \right| \leq \lambda(G) \sqrt{|S||T|} .
\end{equation}

Now we assume that the signing $s$ was generated from the random walk
as described earlier. This lets us use \cref{cor:expandercher}.

\begin{lemma}\label{lem:bilu2}
  Let $M$ be either $C$ or $D$ where these are the matrices defined above. Pick $\gamma$ such that $\gamma^3 \geq \frac{256e \sqrt{d}}{n} \ln(3\ell)$.
  Then for every pair of vectors $u,v \in \{0,1\}^n$, $|u^TMv| \leq \alpha \norm{u}\norm{v}$
  where $\alpha = (\gamma+1)\lambda$ except with probability $\frac{2}{3\ell}$ over choice of $s$. 
\end{lemma}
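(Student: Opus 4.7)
The plan is to combine the expander Chernoff bound from \cref{cor:expandercher} with the expander mixing lemma \cref{eqn:eml} and then union bound over all $4^n$ pairs of vertex subsets $(S, T)$. I will handle $M = C$ in detail; the argument for $M = D$ is identical, losing only an extra factor of $2$ in the final failure probability. For fixed $u = \mathbf{1}_S$, $v = \mathbf{1}_T$ with disjoint supports, $u^T C v$ is a sum of $e(S,T)$ terms of the form $\Re(\chi(s(e)))$. Applying \cref{cor:expandercher} with $U = E(S,T)$ and $t = \alpha\sqrt{|S||T|}$ gives
\[
   \Pr\!\bigl[\,|u^T C v| \ge \alpha \sqrt{|S||T|}\,\bigr] \;\le\; 2\exp\!\left(-\frac{\alpha^2\,|S||T|}{128\,e\, e(S,T)}\right).
\]

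Next I would split into two cases on $e(S,T)$. If $e(S,T) \le \alpha\sqrt{|S||T|}$, the conclusion holds deterministically by the trivial bound \cref{eqn:sumbound}. Otherwise, combining $e(S,T) > (\gamma+1)\lambda\sqrt{|S||T|}$ with the EML upper bound $e(S,T) \le d|S||T|/n + \lambda\sqrt{|S||T|}$ forces $\sqrt{|S||T|} > n\gamma\lambda/d$, and hence $e(S,T) \le (1 + 1/\gamma)\, d|S||T|/n$. Substituting back, the per-pair failure probability drops to $2\exp(-c\,\gamma(\gamma+1)\lambda^2\, n/(ed))$ for a universal constant $c > 0$.

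The last step is to union bound over all $\le 4^n$ pairs $(S, T)$ and over both $M \in \{C, D\}$, and to verify
\[
   4 \cdot 4^n \cdot \exp\!\left(-c\,\gamma(\gamma+1)\lambda^2\, n/(ed)\right) \;\le\; \tfrac{2}{3\ell}.
\]
The main obstacle will be closing this inequality using the hypothesis $\gamma^3 \ge 256 e \sqrt{d}\ln(3\ell)/n$: after using the lower bound $\lambda^2 \ge d-1$ (or the analogous $\lambda = \Theta(\sqrt{d})$ guarantee from the Alon base graph) to turn $\lambda^2/d$ into a constant, one is left to absorb $n\ln 4 + \ln(3\ell)$ into $\gamma(\gamma+1)$. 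The asymmetric cubic scaling in the hypothesis (rather than the naively expected quadratic) reflects how the $\sqrt{d}$ arising when relating $\alpha = (\gamma+1)\lambda$ to $e(S,T)/|S||T|$ combines with the $\ln\ell/n$ penalty; a more refined bucketing by $(|S|, |T|)$ may be needed if the naive $4^n$ union bound is wasteful, but the overall skeleton is clear.
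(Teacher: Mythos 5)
Your per-pair concentration calculation is essentially the paper's: both invoke the expander Hoeffding bound from \cref{cor:expandercher}, both split into a small case handled deterministically by the trivial bound $|u^TMv|\le e(S,T)$ plus the expander mixing lemma, and both handle the large case by concentration. The case splits are equivalent up to rewriting (yours is ``$e(S,T)\le\alpha\sqrt{|S||T|}$ or not'', the paper's is ``$a:=\sqrt{|S||T|}\le\gamma n\lambda/d$ or not'', and each implies the other via EML as you note); the only cosmetic difference is that the paper bounds $e(S,T)\le 2ad$ via the trivial $a\le n$, while you derive the slightly tighter $e(S,T)\le(1+1/\gamma)da^2/n$. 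Either route yields an exponent $\ge\ln(3\ell)$ and hence per-pair failure probability $\le 2/(3\ell)$.

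The genuine divergence is your $4^n$ union bound, which is \emph{not} in the paper's proof of this lemma and which, as you yourself worry, cannot be closed from the stated hypothesis on $\gamma$. In the regime of interest $\ell=\exp(\Theta(n))$, the condition $\gamma^3\ge 256e\sqrt{d}\ln(3\ell)/n$ forces only $\gamma=\Theta(d^{1/6})$, a constant; the exponent $\Theta(\gamma(\gamma+1)n\lambda^2/d)$ is then $O(n)$ with a $d$-dependent constant and cannot absorb $n\ln 4$. What the paper's proof establishes is precisely the single-pair statement $\Pr_s[\,|u^TMv|>(\gamma+1)\lambda\norm{u}\norm{v}\,]\le 2/(3\ell)$ for each fixed $(u,v)$; the exponent is calibrated to beat $\ln(3\ell)$ and nothing more. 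So to match the paper you should drop the union bound over pairs entirely. (Your parenthetical about a bucketed union bound does point at something real for the downstream application of \cref{thm:bilu1} --- in the concentration case $a=\Omega(\gamma n\lambda/d)$ is already large, so $\ln\binom{n}{|S|}\binom{n}{|T|}=O\bigl(a\ln(\sqrt{d}/\gamma)\bigr)$ could in principle be paid for --- but no such argument appears in the paper's proof of this lemma, and your naive $4^n$ bound as written does not work.)
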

\begin{proof}
Since the proofs are identical, we use $M$ as a placeholder which can
be replaced by $C$ or $D$.  Let $S := \supp(u), T:= \supp(v)$ and
define $a := \sqrt{\norm{u}\norm{v}} = \sqrt{|S||T|}$. 
\paragraph{Case 1 - $a \leq \frac{\gamma n\lambda }{d}$.}From  \cref{eqn:sumbound} and \cref{eqn:eml},  we have,
\[ |u^TMv| ~\leq~ e(S,T) ~\leq~ \frac{d}{n}a^2 + \lambda a ~\leq~ (\gamma+1) a\lambda. \] 

\paragraph{Case 2 - $a > \frac{\gamma n\lambda }{d}$.}Using the trivial bound that $a \leq n$ in ~\cref{eqn:eml}, we get, \[e(S,T) ~\leq~ a(\frac{da}{n} + \lambda) ~\leq~ a(d+\lambda) \leq 2ad. \]By \cref{cor:expandercher} we get, 
\[ \Pr\left[ \left|u^TMv \right| \geq (\gamma+1) a\lambda \right] ~\leq~ 2\exp\left(\frac{-((\gamma+1) a\lambda)^2}{128e\cdot e(S,T) }\right). \]
We can upper bound this as,
\begin{align*}
\frac{( (\gamma+1) a\lambda)^2}{128e\cdot e(S,T)} &\geq \frac{((\gamma +1) a\lambda)^2}{128e (2ad) }\\
  &\geq  \frac{ (\gamma+1)^2 a}{256e } && \text{($\lambda^2 > d$)} \\
  &\geq   \frac{\gamma(\gamma+1)^2 n \lambda}{256e d} && \text{(By case assumption on $a$)}\\
  &\geq   \frac{\gamma^3 n}{2\cdot 256e \sqrt{d}} \\
  &\geq \ln(3l). && \text{(By assumption on $\gamma$)}
\end{align*}
\end{proof}

\begin{theorem}[Exactly Exponential Lifts]
 For any positive integers $n,\ell$ and every constant degree $d$, there exists a deterministic $\poly( \exp(n),\ell)$ time algorithm that constructs a $d$-regular graph $G$ on $n\ell$ vertices such that  
  \begin{itemize}
    \item $G$ is quasi-cyclic with lift size $\ell$, and
    \item  If $\ell \leq \exp\left( cnd^{-1/2} \right)$, then $\lambda(G) \le O(\sqrt{d} \log d)$.
    \item  If $\ell = \exp\left( cnd^\delta \right)$ for $\delta \in [-1/2, 1)$, then $\lambda(G) \le O\left(d^{\frac{2+\delta}{3}} \log d\right)$.
  \end{itemize}
In particular, this yields an explicit polynomial time construction of a lift in the regime when $\ell = \exp(O(n))$.
\end{theorem}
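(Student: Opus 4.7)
The plan is to couple the random-walk signing construction just introduced with the Bilu--Linial discretization (Theorem~\ref{thm:bilu1}) and the expander-Chernoff bound of Corollary~\ref{cor:expandercher}. First, I would fix a $d$-regular base expander $G_0$ on $n$ vertices with $\lambda(G_0)\le 2\sqrt{d-1}+\varepsilon$, together with the auxiliary explicit expander $L$ on $\ell$ vertices of constant even degree $d'$ given by the cited Alon construction, whose normalized second eigenvalue is at most $3/\sqrt{d'}\le 1/2$. A signing $s\colon E(G_0)\to\Z_\ell$ is produced by sampling a uniform start vertex $v_1\in V(L)$ and reading off a $(dn-1)$-step random walk on $L$, so the sample space has size $\ell\cdot(d')^{dn-1}=\ell\cdot\exp(O(n))$ and is enumerable in $\poly(\exp(n),\ell)$ time. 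The Markov-chain structure of this walk is what the Hoeffding bound of Corollary~\ref{cor:expandercher} consumes.

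Second, I would invoke Fact~\ref{fact:character_decomp} to decompose the spectrum of the lifted adjacency matrix $A_{G_0(s)}$ as $\bigcup_{\chi}\spec(A(\chi))$, where $A(\chi)=C_\chi+iD_\chi$ is the signed adjacency matrix attached to a character $\chi$ of $\Z_\ell$, and $C_\chi,D_\chi$ are real symmetric, $d$-sparse, zero-diagonal, with row $\ell_1$-norm at most $d$. The trivial character contributes only the top eigenvalue $d$ and $\lambda_2(G_0)$, so everything reduces to bounding $\|A(\chi)\|\le 2\max(\|C_\chi\|,\|D_\chi\|)$ uniformly over the $\ell-1$ non-trivial characters. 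Theorem~\ref{thm:bilu1} then reduces each spectral norm to the Rayleigh-type quotient $|u^\top M v|/(\|u\|\|v\|)$ over all disjoint-support pairs $u,v\in\{0,1\}^n$: if that quotient is uniformly at most $\alpha$, then $\|M\|=O(\alpha\log(d/\alpha))$, giving the final spectral bound on $G$.

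Third, the per-pair Rayleigh-type bound is exactly Lemma~\ref{lem:bilu2}, whose Case~1 is deterministic while Case~2 invokes Corollary~\ref{cor:expandercher} to produce failure probability $2\exp(-\gamma^3 n/O(\sqrt{d}))$ for any fixed $u,v$. Union bounding over the $3^n$ disjoint-support pairs, the $\ell-1$ characters, and the two real parts forces
\[
  \gamma^3 \;\gtrsim\; \frac{\sqrt{d}}{n}\bigl(n\log 3+\log\ell\bigr)\;\asymp\;\sqrt{d}\,\bigl(1+\tfrac{\log\ell}{n}\bigr).
\]
Substituting $\log\ell=\Theta(nd^\delta)$ yields $\alpha=(\gamma+1)\lambda(G_0)=O\!\bigl(d^{(2+\max(0,\delta))/3}\bigr)$ and hence $\lambda(G)=O(d^{(2+\delta)/3}\log d)$ through Bilu--Linial, which at $\delta=-1/2$ specializes to $\lambda(G)=O(\sqrt{d}\log d)$ after the finer pair-counting described below. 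The probabilistic method then delivers a good signing inside the sample space; I would derandomize by brute-force enumeration of the $\ell\cdot(d')^{dn-1}$ candidates, constructing each $G_0(s)$ in $\poly(n\ell)$ time and computing $\lambda(G_0(s))$ directly, for total running time $\poly(\exp(n),\ell)$.

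The main obstacle is calibrating the union bound in the small-$\ell$ regime: the crude $3^n$ pair-count already drives $\gamma^3\gtrsim\sqrt{d}$ independently of $\ell$, so obtaining the stated $\lambda(G)=O(\sqrt{d}\log d)$ when $\log\ell\le cnd^{-1/2}$ requires either summing $\binom{n}{|S|}\binom{n}{|T|}$ over size pairs (which is exponentially smaller than $3^n$ for unbalanced $|S|,|T|$), or restricting the union to pairs that actually lie in Case~2 of Lemma~\ref{lem:bilu2}, since Case~1 is already deterministic. Secondary bookkeeping that still must be checked: (i) $\lambda(G_0)^2>d$, used in the Case~2 manipulation of Lemma~\ref{lem:bilu2} and automatic for our base expander; (ii) $\Re\chi(s(e))$ and $\Im\chi(s(e))$ have mean zero under the uniform measure on $\Z_\ell$, immediate from the vanishing of the sum of $\ell$-th roots of unity; and (iii) the transition operator of $L$ is $\pi$-reversible with spectral gap $\ge 1/2$, for which the constant-degree expander choice of $L$ is precisely what is needed.
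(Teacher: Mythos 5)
Your proposal follows the paper's own proof in essentially every structural choice: the same random-walk generation of signings on the auxiliary expander $L$, the same appeal to the character decomposition to reduce to bounding $\|A(\chi)\|$ for nontrivial $\chi$, the same splitting $A(\chi) = C_\chi + iD_\chi$ with $\|A(\chi)\| \le 2\max(\|C_\chi\|,\|D_\chi\|)$, the same use of the Bilu--Linial discretization (Theorem~\ref{thm:bilu1}) to reduce to Rayleigh-type quotients over Boolean vectors, and the same expander-Hoeffding estimate (Corollary~\ref{cor:expandercher}) via Lemma~\ref{lem:bilu2}. The derandomization by brute-force enumeration of the $\ell\cdot(d')^{dn-1}$ walks and the secondary checks (zero means of $\Re\chi$, $\Im\chi$; spectral gap of $L$ at least $1/2$; $\lambda(G_0)^2>d$) all match.

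The one point where you go beyond the paper is the union bound over the disjoint-support pairs $(u,v)\in\{0,1\}^n\times\{0,1\}^n$, and you are right to worry about it. The proof of Lemma~\ref{lem:bilu2} establishes only a \emph{per-pair} failure probability $2/(3\ell)$, but the theorem's proof then invokes it as if it gave a uniform bound over all pairs before union-bounding only over the $\ell-1$ characters; the union over pairs is left implicit. As you observe, a crude $3^n$ count would force $\gamma^3\gtrsim\sqrt{d}$ independently of $\ell$, which alone gives $O(d^{2/3}\log d)$ and not the claimed $O(\sqrt{d}\log d)$ in the regime $\log\ell\le cnd^{-1/2}$. Your proposed repair — restrict the union to pairs that actually fall into Case~2 of Lemma~\ref{lem:bilu2} (where $a=\sqrt{|S||T|}>\gamma n\lambda/d$), stratify by $(|S|,|T|)$, and play $\binom{n}{|S|}\binom{n}{|T|}$ against the size-dependent failure probability $\exp\bigl(-\Omega((\gamma+1)^2\sqrt{|S||T|})\bigr)$ — is exactly the right fix and is what the argument needs. (One could also note that the restriction $\sqrt{|S||T|}\ge\gamma n\lambda/d \gtrsim n/\sqrt{d}$ eliminates the most numerous small pairs, which is what makes constant $\gamma$, up to polylog-$d$ factors, viable.) So this is the paper's approach, carried out more carefully; you correctly identified a step the paper elides.
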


\begin{proof}
We construct a $d$-regular graph $G_0$ using \cite{Alon21} on $n$
vertices such that $\lambda_2(G) \leq 2 \sqrt{d}$. We generate a set
of signings as described above using a $d'$-regular expander on $\ell$
vertices. This takes time $\ell \exp(nd\ln (d'))$ and we can fix $d' =
36$. or each signing we compute the eigenvalue of the adjacency matrix
of the lifted graph and pick the one with the smallest
$\lambda_2$. The existence of a good signing is guaranteed as follows.

\cref{lem:bilu2} gives a bound of $\alpha = (\gamma+1)\lambda$ on the Rayleigh quotient of $C, D$ holds except with probability $\frac{2}{3\ell}$ over the signings.
\cref{thm:bilu1} then implies that
$$\lambda_{\max}A(\chi) \leq 2\max\{\norm{C}, \norm{D}\} \leq  2\alpha \log(d/\alpha) \leq O( \alpha \log d ). $$

Here, $\gamma^3 = O( \frac{\ln \ell\sqrt{d} }{n} ) = O(d^{1/2 + \delta})
$.  Note that if $\delta <-1/2$, then $O( (\gamma +1)\lambda ) =
O(\lambda)$ Thus, $\alpha = O(\sqrt{d} d^{1/6 + \delta/3} + \lambda )
= O\left(\max \left\{
d^{\frac{2+\delta}{3}}, \sqrt{d} \right\} \right)$.

To finish the proof we need to take a union bound over each of the
$\ell-1$ non-trivial characters and bound the spectrum of $A(\chi)$ as
above. Thus, we have that the probability that there exists a good
signing is at least $1 - \ell \left(\frac{2}{3\ell}\right) > 0$.
\end{proof}

\section{Explicit Quantum and Classical Codes}
We now briefly recall the construction of quantum LDPC codes as
in \cite{PK20} and show how our results derandomize it.  The
construction is as follows. Let $G$ be a $d$-regular graph (on $n\ell$
vertices) such that $G$ is a $(\Z_\ell, \ell)$-lift of a graph on
$n$-vertices. Let $\mathcal{C}_0 \subseteq \F_2^d$ be a binary linear
code (of block length $d$). Let $B$ denote the bipartite graph of the
Tanner code $\textup{T}(G,\mathcal{C}_0)$ and let $F$ denote the cycle
graph on $\ell$ vertices. They define the lifted product
$\textup{LP}(B,F)$ of $B$ and $F$ which is a variation of the usual
tensor product and is also equivalent to the twisted product
in \cite{hho20}.  The main result of \cite{PK20} is the following.

\begin{theorem}[\cite{PK20}]\label{theo:qlp_codes_pk}
   Let $G$ be $(\Z_\ell, \ell)$-lift of a $d$ regular graph on $n$-vertices with $\lambda_2(G) \leq \varepsilon \cdot d$.
   Let $\mathcal{C}_0 \subseteq \mathbb{F}_2^d$ and its dual attain the Gilbert--Varshamov bound.
   If $\varepsilon > 0$ is sufficiently small and $d$ is a sufficiently large constant, then
   
   \begin{itemize}
       \item $\textup{T}(G,\mathcal{C}_0)$ is a good quasi-cyclic LDPC code of blocklength $\Theta(n\ell)$ and circulant size $\Theta(\ell)$.
       \item The quantum \textit{lifted product} code $\textup{LP}(B, F)$ is LDPC and has distance $\Theta_{\epsilon,d}(\ell)$ and dimension $\Theta(n)$.
   \end{itemize}
\end{theorem}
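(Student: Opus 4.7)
The plan is to follow the framework of Panteleev--Kalachev, which rests on two pillars: (i) building a good classical quasi-cyclic LDPC code as an expander Tanner code on the lifted graph $G$, and (ii) promoting this classical construction to a quantum CSS code via the lifted product with the cycle graph $F$. Both parts exploit the $\mathbb{Z}_\ell$ symmetry of $G$ to obtain the required circulant/quasi-cyclic structure.

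For the classical part, since $G$ is a $(\mathbb{Z}_\ell,\ell)$-lift, the group $\mathbb{Z}_\ell$ acts freely by graph automorphisms on $G$, permuting the $\ell$ copies of each base vertex. This induces a permutation action on the edge set (which indexes the bits of the Tanner code) and on the parity checks, partitioning coordinates into $\Theta(n)$ orbits of size $\ell$. Hence $\textup{T}(G,\mathcal{C}_0)$ inherits the desired quasi-cyclic structure with blocklength $\Theta(nd\ell) = \Theta(n\ell)$ and circulant size $\ell$. For rate and distance, I would invoke a Sipser--Spielman style expander code argument: since $\lambda_2(G) \le \varepsilon d$ and both $\mathcal{C}_0$ and its dual lie on the Gilbert--Varshamov bound, any nonzero codeword that was not of linear weight would have to be ``locally concentrated'' on a small set of vertices of $G$, and the expander mixing lemma applied on $G$ then forces a contradiction. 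The rate lower bound comes from counting parity checks and using the rate of $\mathcal{C}_0$.

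For the quantum part, the lifted product $\textup{LP}(B,F)$ is constructed as a CSS code whose $X$- and $Z$-parity checks come from the boundary maps of a chain complex obtained as a twisted tensor product of $B$ (viewed as a length-one chain complex) with $F$. The LDPC property is immediate because $B$ has weights $O(d)$ and $F$ has weights $2$, and the lifted product preserves sparsity of the boundary maps. The dimension $\Theta(n)$ is computed via the Künneth-type formula for the homology of this complex combined with the ranks of the boundary maps of $B$, which are controlled by the rate of $\mathcal{C}_0$ and its dual. The distance $\Theta_{\varepsilon,d}(\ell)$ is the heart of the argument: one shows that every nontrivial class in $H_1$ has a representative of weight $\Omega(\ell)$, by a ``filling'' argument that uses expansion of $G$ to reduce the weight of coboundary corrections and then applies the GV bound on $\mathcal{C}_0$ and its dual to any putative low-weight coset representative.

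The main obstacle will be the quantum distance lower bound. Classical distance analyses only need to rule out low-weight codewords in one code, whereas here one must simultaneously rule out low-weight logical $X$ and $Z$ operators, controlled by two different ``directions'' in the twisted complex. This requires combining the expansion hypothesis $\lambda_2(G) \le \varepsilon d$ (used through expander mixing to bound the interaction between a putative logical and its stabilizer coset) with a small-set ``cleaning'' procedure that iteratively replaces a logical by an equivalent one of no larger weight until the GV properties of $\mathcal{C}_0$ force a linear-in-$\ell$ bound. The remaining pieces --- quasi-cyclicity, LDPC property, dimension count, and classical rate/distance --- amount to relatively routine bookkeeping once this distance bound is established.
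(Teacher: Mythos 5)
This statement is not proved in the paper at all: it is imported verbatim as a theorem of Panteleev and Kalachev, and the paper's contribution at this point is purely to supply, via its Theorems~1.1 and~1.2, an \emph{explicit} $(\mathbb{Z}_\ell,\ell)$-lifted expander $G$ satisfying the hypothesis $\lambda_2(G)\le\varepsilon d$, thereby derandomizing the construction. There is no internal proof for your attempt to be compared against.

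As a description of the original Panteleev--Kalachev argument your sketch has the right overall shape (Tanner/expander-code analysis using the GV properties of $\mathcal{C}_0$ and $\mathcal{C}_0^\perp$ for the classical part, a chain-complex/lifted-product argument with a filling or cleaning step for the quantum distance), but note two places where it is imprecise. First, the classical distance argument in~\cite{PK20} is not the vanilla Sipser--Spielman unique-expansion argument; it works directly with the spectral bound $\lambda_2(G)\le\varepsilon d$ via expander mixing and the minimum distance of $\mathcal{C}_0$, and the two-sided GV assumption (on both $\mathcal{C}_0$ and $\mathcal{C}_0^\perp$) is there to control \emph{rate} of the Tanner code and its transpose simultaneously, not just distance. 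Second, and more importantly, the quantum distance bound of $\Theta_{\varepsilon,d}(\ell)$ fundamentally uses that $F$ is the length-$\ell$ cycle and that $\mathbb{Z}_\ell$ acts \emph{freely} on $G$: a putative low-weight logical operator, projected along the $\mathbb{Z}_\ell$-orbit direction, must vary slowly and hence spread over $\Omega(\ell)$ of the $\ell$ ``slices,'' which is the step that turns local expansion into a global $\Omega(\ell)$ lower bound. A generic ``small-set cleaning'' procedure as you describe it does not by itself produce a bound that grows with $\ell$ rather than with $n$; you would need to make the role of the free $\mathbb{Z}_\ell$-action explicit. In any case, if you intend this as a verification of the present paper, the honest answer is that the paper treats \cref{theo:qlp_codes_pk} as a black box and the place to check the details is~\cite{PK20} itself.
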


To achieve these, \cite{PK20} picks a $d$-regular expander on $n$
vertices and creates a random $\ell$-lift i.e. where each signing is
chosen uniformly at random from $\Z_\ell$. The final graph is
expanding with high probability from the results of \cite{ACKM19}
(~\cref{theo:agarwal_et_al} ). The distance achieves the almost-linear
bound only when the lift is large and thus lifts of exponential size
are preferred. By the upper bound in ~\cref{theo:agarwal_et_al},
better than exponential size lifts break expansion for abelian groups.

For this application, the constant degree regime is important for two
reasons. The locality of the code is essentially $d$ and thus it has
to be constant for it to be LDPC. Moreover, the code
$\mathcal{C}_0 \subseteq \mathbb{F}_2^d$ can be easily constructed via
brute-force search since $d$ is constant.

While the corollary follows in a straightforward manner from our main
results, we show the computations for completeness.

\begin{corollary}[\cite{PK20}, \cref{theo:main1}, \cref{theo:main2}]
   We have explicit polynomial time construction of each of the following,
   \begin{enumerate}
       \item Good quasi-cyclic LDPC code of block length $N$ and
  any circulant size up to $N/\polylog(N)$ or $\Theta(N/\log(N))$.
       \item Quantum LDPC code with distance $\Omega(N/\log(N))$
  and dimension $\Omega(\log(N))$.
        \item Quantum LDPC code with distance $\Omega(N^{1-\alpha})$
  and dimension $\Theta(N^{\alpha})$ for every constant $\alpha > 0$.
   \end{enumerate}
\end{corollary}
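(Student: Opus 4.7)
The plan is to invoke \cref{theo:qlp_codes_pk} as a black box, reducing each part to exhibiting an explicit $d$-regular $(\mathbb{Z}_\ell,\ell)$-lifted expander $G$ with $\lambda_2(G) \le \varepsilon \cdot d$ for some sufficiently small constant $\varepsilon$ and sufficiently large constant degree $d$, and to instantiate $(n,\ell)$ using \cref{theo:main1} and \cref{theo:main2}. Once $d$ is a fixed constant, the inner code $\mathcal{C}_0 \subseteq \mathbb{F}_2^d$ (together with its dual) meeting the Gilbert--Varshamov bound can be produced in constant time by brute-force enumeration. By \cref{theo:qlp_codes_pk}, the resulting quasi-cyclic LDPC code has blocklength $N = \Theta(n\ell)$ and circulant size $\Theta(\ell)$, while the quantum lifted product code has distance $\Theta_{\varepsilon,d}(\ell)$ and dimension $\Theta(n)$, so each of the three items reduces purely to a choice of $(n,\ell)$ compatible with one of our two lift theorems.

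For item $(2)$, I take $\ell = \exp(\Theta(n))$ and apply \cref{theo:main2}, which produces a $(\mathbb{Z}_\ell,\ell)$-lift with $\lambda(G) \le O(\sqrt{d}\log d)$; choosing $d$ to be a sufficiently large constant pushes this below $\varepsilon \cdot d$. Since $N = n\ell = \Theta(n \cdot 2^{\Theta(n)})$, this yields $n = \Theta(\log N)$ and $\ell = \Theta(N/\log N)$, and hence distance $\Omega(N/\log N)$ and dimension $\Omega(\log N)$. For item $(3)$, I instead set $n \defeq \Theta(N^{\alpha})$ and $\ell \defeq \Theta(N^{1-\alpha})$; then $\log \ell = \Theta(\log N) \ll n^{\delta_0}$ for any fixed $\alpha > 0$ and any fixed $\delta_0 > 0$ once $N$ is large, so we are in the moderately-exponential regime of \cref{theo:main1}(ii), which directly gives $\lambda(G) \le \varepsilon \cdot d$. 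Plugging back in yields distance $\Theta_{\varepsilon,d}(\ell) = \Omega(N^{1-\alpha})$ and dimension $\Theta(n) = \Theta(N^{\alpha})$.

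Item $(1)$ has two subcases. The $\Theta(N/\log N)$ circulant-size construction is exactly the one used for item $(2)$, and the quasi-cyclic guarantee in \cref{theo:qlp_codes_pk} does the rest. For an arbitrary circulant size $\ell$ up to $N/\polylog(N)$, I set $n \defeq N/\ell$, so that $n \ge \polylog(N)$; tuning the polylog exponent so that $n^{\delta_0} \ge \log \ell$ again places us in the moderately-exponential regime of \cref{theo:main1}(ii) with $\lambda(G) \le \varepsilon \cdot d$, and \cref{theo:qlp_codes_pk} delivers the corresponding good quasi-cyclic LDPC code. The only genuine subtlety is the compatibility of constants: the ``sufficiently small $\varepsilon$'' and ``sufficiently large $d$'' demanded by \cref{theo:qlp_codes_pk} must be simultaneously achievable by our lift constructions. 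This is where I expect the bookkeeping to be most delicate, but both \cref{theo:main1}(ii) and \cref{theo:main2} let us pick $\varepsilon$ arbitrarily small and $d$ arbitrarily large (in the right order of quantifiers, and for \cref{theo:main2} by absorbing $O(\sqrt{d}\log d)$ into $\varepsilon \cdot d$), so the constants line up and every construction runs in deterministic $\poly(N)$ time.
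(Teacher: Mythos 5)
Your proposal is correct and follows essentially the same route as the paper's own proof: both invoke the Panteleev--Kalachev lifted-product theorem (\cref{theo:qlp_codes_pk}) as a black box, find $\mathcal{C}_0$ by brute force (legitimate since $d$ is constant), and then choose $(n,\ell)$ to instantiate \cref{theo:main2} for the $\ell=\exp(\Theta(n))$ case (items 1 and 2) and \cref{theo:main1}(ii) in the moderately-exponential regime for arbitrary polylog circulant sizes and for $\ell = N^{1-\alpha}$ (items 1 and 3). Your brief discussion of constant compatibility (choosing $d$ large enough so that $O(\sqrt{d}\log d) \le \varepsilon d$, and the order of quantifiers for $\varepsilon$ and $d$) is a small but welcome addition that the paper leaves implicit, and your observation that the allowed polylog exponent is governed by $1/\delta_0$ matches the paper's explicit bound $\ell = O(N/(\log N)^{1/\delta_0})$.
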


\begin{proof}
  We always work in the constant degree regime so
  $\cC_0 \subseteq \mathbb{F}_2^d$ can be found by brute-force. When
  $\ell = \exp(\Theta(n))$, we use ~\cref{theo:main2} to construct $G$
  explicitly.  Thus, $N= n\ell$ and thus the circulant size and
  distance are both $\Theta(\ell) = \Theta(N/\log N)$.

  For $\ell \le 2^{n^{\delta_0}}$ with some fixed $\delta_0 \in
  (0,1)$, we can explicitly construct $G$ which is a
  $(\Z_\ell,\ell)$-lift by ~\cref{theo:main1} and by \cite{PK20},
  $\textup{T}(G,\mathcal{C}_0)$ has circulant size $\Theta(\ell)$ and
  $\log (N) \leq \log n + n^{\delta_0} \le 2n^{\delta_0}$ (for $n$
  sufficiently large) and thus, $\ell =O\left( N/(\log
  N)^{1/\delta_0} \right)$. Therefore, the construction works for any
  size less than $N/(\log N)^{1/\delta_0}$. This calculation also
  shows that we get quantum LDPC codes for any distance less than
  $N/(\log N)^{1/\delta_0}$.  So for a given $\alpha$, we take a base
  graph on $n = N^\alpha$ vertices and construct a $\ell =
  N^{1-\alpha} = n^{1/\alpha-1}$ lift. For any $\alpha$, this is a
  polynomial sized-lift and can be done via ~\cref{theo:main1}.
\end{proof}

\bibliographystyle{alphaurl}
\bibliography{macros,references}

\appendix

\section{Signed Non-backtracking Operator}\label{app:spectrum}

\subsection{Diagonalizing Non-backtracking Operator}

Let $\rho:Sym(l) \to \mathrm{GL}(\C^l)$ be the matrix representation of a
permutation. More concretely, given a permutation $\sigma \in \Sym(l)$
the map $\rho(\sigma)e_i = e_{\sigma\cdot i}$ where $\{e_1, \cdots,
e_l\}$ is the set of elementary basis vectors of $V = \C^l$. Since $H
\subseteq \Sym(l) \to  \mathrm{GL}(\C^l)$ it also gives a map on $H$ by
restriction. For example, let $P$ be the $l \times l$ permutation
matrix that maps $Pe_i = e_{i+1}$ where $i+1 $ is taken modulo
$l$. Then for $H= \Z_l$ and for $t \in \Z_l$, $\rho(t) = P^t$.

For a map $\rho$ as above and an extended signing $s$ , define a
generalized non-backtracking walk matrix in which for a non-zero entry
indexed by $(e_1,e_2)$ we replace $1$ by the block matrix
$\rho(s(e_2))$.

\begin{lemma}
  The non-backtracking walk matrix of the lifted graph is $B_{G(s)} = B_G(\rho)$. 
\end{lemma}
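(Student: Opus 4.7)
The plan is to verify the matrix identity by directly unpacking both sides on a natural basis. First, I would set up a bijection between the directed edges of $G(s)$ and pairs $(e, i) \in E^d \times [\ell]$: for $e = (u,v) \in E^d$ and $i \in [\ell]$, the corresponding lifted directed edge runs from $(u, i)$ to $(v, s(e) \cdot i)$. This exhausts $E^d(G(s))$ because every lifted edge sits over a unique $e \in E^d$ and each $e$ contributes exactly $\ell$ lifted directed edges, one per source copy index.

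With this identification in hand, I would compute the entry $B_{G(s)}[\tilde e_1, \tilde e_2]$, where $\tilde e_1$ corresponds to $(e_1, i_1)$ and $\tilde e_2$ to $(e_2, i_2)$. By the definition of the non-backtracking walk operator in $G(s)$, this entry is $1$ iff (i) the head of $\tilde e_1$ equals the tail of $\tilde e_2$ in $G(s)$ and (ii) $\tilde e_2$ is not the reverse of $\tilde e_1$. Condition (i) splits into the vertex compatibility $v_1 = u_2$ (so $e_1, e_2$ meet in $G$) and a copy-index relation of the form $i_2 = s(e_\bullet) \cdot i_1$. Condition (ii) reduces to the corresponding condition $u_1 \neq v_2$ in $G$; here the extended signing convention $s(v,u) = s(u,v)^{-1}$ ensures that the copy-index relations for an edge and its reverse are mutually inverse permutations of $[\ell]$, so no additional copy-level non-backtracking condition needs to be tracked.

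Putting these pieces together, $B_{G(s)}[\tilde e_1, \tilde e_2]$ vanishes unless $B_G[e_1, e_2] = 1$, in which case the $(i_1, i_2)$ entry is read off from a single permutation matrix on $[\ell]$ encoding the $s$-action on copy indices. This permutation matrix is exactly $\rho(s(e_2))$, and the resulting block structure agrees verbatim with the definition of $B_G(\rho)$: each non-zero entry of $B_G$ indexed by $(e_1,e_2)$ is replaced by the $\ell \times \ell$ block $\rho(s(e_2))$.

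The main obstacle is purely one of bookkeeping. The bijection between $E^d(G(s))$ and $E^d \times [\ell]$ must be set up consistently (source copy index vs.\ target copy index), and one must choose the right convention for reading entries of the permutation block $\rho(\sigma)$ so that the block at position $(e_1,e_2)$ comes out as $\rho(s(e_2))$ rather than as $\rho(s(e_1))$ or an inverse/transpose. The correct choice is forced by the scalar definition of $B_s(\chi)$, which evaluates $\chi$ on the signing of the \emph{second} edge; once that convention is fixed, the remaining verification is a routine unfolding of the lift construction, the extended signing, and the non-backtracking walk operator.
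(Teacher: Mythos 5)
Your approach is essentially the same as the paper's: directly verify the block identity by setting up an explicit bijection between directed edges of $G(s)$ and $E^d \times [\ell]$, split the non-backtracking condition into a vertex part and a copy-index part, and observe that the extended-signing convention $s(v,u) = s(u,v)^{-1}$ forces the copy-index part of the backtracking condition to be automatic once the vertex part is. That last step is exactly the paper's computation "if $y = u$ then $j = i + s(x,y) = i - s(u,v)$, so the second condition fails."

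One genuine bookkeeping point you flag but do not resolve correctly in the body of your argument: you fix the bijection via the \emph{source} copy index, so $(e,i)$ with $e=(u,v)$ names the lifted edge $(u,i)\to(v,s(e)\cdot i)$. With that choice, the adjacency constraint $\text{head}(\tilde e_1)=\text{tail}(\tilde e_2)$ becomes $i_2 = s(e_1)\cdot i_1$, so the $(e_1,e_2)$ block comes out as (a transpose of) $\rho(s(e_1))$, \emph{not} $\rho(s(e_2))$. The paper instead indexes by the \emph{target} copy, writing $[u,v,i]$ for the edge $(u,i-s(u,v))\to(v,i)$, which yields $j = i + s(e_2)$ and hence the claimed block $\rho(s(e_2))$. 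You do identify this as the central subtlety ("source copy index vs.\ target copy index"), so the gap is one of execution rather than idea, but as stated your convention and your conclusion are inconsistent; you need the paper's target-indexing convention (or, equivalently, keep source indexing and redefine the generalized block matrix to carry $\rho(s(e_1))$, which has the same spectrum and serves the downstream argument equally well).
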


\begin{proof}
  In the lifted graph, the edges are of the form $[(u,i-s(u,v)), (v,i)] =: [u,v,i]$ and thus can be indexed by $E'\times [l]$.
  The non-backtracking walk matrix $B_{\hat{G}}$ would then have a non zero entry from $\left( [u,v,i] , [x,y,j] \right)$ iff $(v, i) = (x,j - s(x,y))$ and $(y, j ) \neq (u,i- s(u,v))$.
  Assume that the first condition is met i.e. $x = v$ and $j = i+s(x,y)$. If $y = u$, then $ i-s(u,v) = i  - s(y,x) = i+ s(x,y) = j $ and therefore, the second condition can't be met.
  This is just a longer way of saying that the lifts give a matching between $u \times [l]$ and $v \times [l]$.
  The implication of all this is that, $y$ has to be distinct from $u$ and thus the pair of edges $(u,v), (v,y)$ has a non-zero entry in $B_G$.
  Moreover, for every $i$ and every pair of edges $(u,v), (v,y)$ we have a non-zero entry for $(u,v, i) , (v,y, i+s(v,y))$ in $B_{G(s)}$
  and thus it can be written as a block matrix with the entry in $(u,v) , (v,y) $ equal to $\rho(s(v,y))$.
\end{proof}

Since the base graph $G$ and the signing $s$ will be fixed throughout,
we will drop the subscript to make reading less hurtful. We will need
the following well-known fact (see \cite[Thm. 5]{diag_url} for a nice
proof) that a collection of commuting matrices that are diagonalizable are also simultaneously diagonalizable. Since, $H$ is abelian, we have that $\{\rho(h)\}$ are commuting and since they are invertible, they clearly are diagonalizable. Thus, they simultaneously diagonalize, \ie there exists $F$ such that $\rho(h) = F\diag(\chi_1(h), \cdots,\chi_l(h)) F^{-1}$ where $\chi_i$ are characters of $H$.

\begin{corollary}\label{cor:decomposespec}
 If for $H$, the standard representation splits as $\rho = \oplus_i \chi_i$, then the non-backtracking walk matrix
  $B_{\hat{G}} = Q \diag(B(\chi_1), \cdots , B(\chi_t)) Q^{-1}$ and thus $\spec(B_{\hat{G}}) = \cup_i \spec(B(\chi_i))$.
  Moreover, if $H$ is transitive, then exactly one of the characters is trivial.
\end{corollary}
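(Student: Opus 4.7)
The plan is to combine the preceding lemma ($B_{G(s)} = B_G(\rho)$) with the simultaneous diagonalization of the commuting family $\{\rho(h)\}_{h \in H}$, and then apply a block permutation to put the result into block-diagonal form indexed by characters.

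First I would record the key linear-algebraic input. Because $H$ is abelian and each $\rho(h)$ is a (unitary) permutation matrix, hence diagonalizable, the family $\{\rho(h)\}_{h\in H}$ is simultaneously diagonalizable: there exists $F \in \mathrm{GL}(\C^\ell)$ and characters $\chi_1,\ldots,\chi_\ell$ of $H$ with
\[
  \rho(h) \;=\; F\,D(h)\,F^{-1}, \qquad D(h) \;=\; \diag\bigl(\chi_1(h),\ldots,\chi_\ell(h)\bigr),
\]
for every $h \in H$. (The $\chi_i$ are the characters appearing in the decomposition of the permutation representation $\rho$, with appropriate multiplicities.)

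Next I would push this through the block structure of $B_G(\rho)$. View $B_{\hat G} = B_G(\rho)$ as an $|E^d|\times|E^d|$ block matrix whose $(e_1,e_2)$ block is either $0$ or $\rho(s(e_2))$. Setting $Q_0 = I_{|E^d|}\otimes F$ (apply $F$ to the lift coordinate of each edge), conjugation yields $Q_0^{-1}B_G(\rho)Q_0$ whose $(e_1,e_2)$ block is either $0$ or $F^{-1}\rho(s(e_2))F = D(s(e_2))$, a diagonal $\ell\times\ell$ matrix. Now apply the permutation $P$ that regroups coordinates so that the $i$-th diagonal slot of every edge-block is collected together. Writing $Q = Q_0 P$, one checks directly from the definitions that
\[
  Q^{-1} B_{\hat G}\, Q \;=\; \diag\bigl(B_s(\chi_1),\,\ldots,\,B_s(\chi_\ell)\bigr),
\]
since the $i$-th block has $(e_1,e_2)$-entry equal to $\chi_i(s(e_2))$ precisely when $(e_1,e_2)$ is a valid non-backtracking edge transition, matching \cref{def:backtrack}. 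This gives the claimed similarity and hence the spectral identity $\spec(B_{\hat G}) = \bigcup_i \spec(B_s(\chi_i))$.

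For the transitivity clause, I would invoke the standard fact that the multiplicity of the trivial character in a permutation representation equals the number of orbits (most directly, the span of the indicator vectors of the orbits is exactly the fixed subspace of $\rho$, and this subspace has dimension equal to the number of trivial summands). Transitivity means a single orbit, so the trivial character appears exactly once. The main obstacle, such as it is, is purely bookkeeping: verifying that the permutation $P$ indeed reshuffles $Q_0^{-1}B_G(\rho)Q_0$ into the claimed block-diagonal form and that each resulting block really equals $B_s(\chi_i)$ as defined; everything else is routine linear algebra and standard representation theory.
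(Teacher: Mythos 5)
Your proof of the block-diagonalization is essentially the same as the paper's: both conjugate $B_G(\rho)$ by $I \otimes F$ (where $F$ simultaneously diagonalizes $\{\rho(h)\}$) and then apply a coordinate-regrouping permutation to obtain $\diag(B(\chi_1),\ldots,B(\chi_\ell))$. For the transitivity clause, you invoke the standard fact that the multiplicity of the trivial character in a permutation representation equals the number of orbits, whereas the paper re-derives this in place by a short contradiction argument (taking a putative second invariant vector, choosing $i \in \supp(v)$ and $j \notin \supp(v)$, and using transitivity); your version is a slightly cleaner appeal to a known result, but the two arguments are equivalent in substance.
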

\begin{proof}
  To ease notation we write $B_G(\rho) = \sum M_{u,v}\otimes
  \rho(s(u,v))$ for some $M_{u,v}$. We have $\rho(s(u,v)) = F
  \diag(\chi_1(h), \cdots, \chi_l(h) ) F^{-1}$ and thus \[ B_G(\rho) =
  (I \otimes F) \sum M_{u,v}\otimes \diag(\chi_1(h), \cdots, \chi_l(h)
  ) (I \otimes F^{-1}) \] Let $|E| = N$ and let $T$ denote the
  permutation on $Nt$ that maps $T(i) := bt +(a+1) $ where $a ,b$ are
  the unique non-negative integers such that $0 \leq b <N$ $i-1 = aN
  +b$. It can then be seen that $\sum M_{u,v}\otimes \diag(\chi_1(h),
  \cdots, \chi_t(h) ) = T \diag\left( \sum M_{u,v}\otimes \chi_i(h)
  \right) T^{-1} $. Notice that $\sum M_{u,v}\otimes \chi_i(h) =
  B_G(\chi_i)$ and thus putting it together we have that for $Q = (I
  \otimes F)T$, $B_{G(s)} = Q \diag(B(\chi_1), \cdots , B(\chi_t))
  Q^{-1}$. The statement on the spectrum follows immediately.

  Since, the all-ones vector is clearly invariant under the standard
  representation, we have a copy of the trivial character $\chi_0$ in
  $\rho$. Let there be another vector $v \in \C^{\ell}$ that is
  invariant. Let $i \in \mathrm{supp}(v)$ and $j \not \in
  \mathrm{supp}(v)$. By transitivity, we have an $h$ such that $h\cdot
  i = j$ but then $h\cdot v \neq v$ which violates the invariance.
\end{proof}

\subsection{A Simple Consequence of Ihara-Bass}

We now slightly extend a claim in~\cite{MOP20} for general signings.

\begin{claim}\label{claim:o_donnel_obs}
  Let $A$ be the (signed) adjacency matrix of a $d$-regular graph.
  Suppose $f$ is an eigenvector of $A$ satisfying
  \begin{align*}
    Af = \left(\beta + \frac{d-1}{\beta}\right)f.
  \end{align*}
  Then $g(u,v) \coloneqq  (f(u) -\beta f(v))$
  (or in the signed case $g(u,v) \coloneqq A(u,v)^{-1} (f(u) -\beta \cdot A(u,v)f(v))$)
  is an eigenvector of the
  (signed) non-backtracking matrix $B$ with eigenvalue $\beta$.
\end{claim}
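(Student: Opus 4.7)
The plan is to directly verify that the proposed $g$ satisfies $Bg = \beta g$ by a one-line calculation, using that the non-backtracking matrix acts on edge functions by summing over non-backtracking continuations and that the adjacency constraint $Af = (\beta + (d-1)/\beta) f$ lets us convert a sum of $f$-values over neighbors into a multiple of $f(v)$ plus a correction involving $f(u)$.

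First I would fix conventions for the signed case. The signed adjacency entries $A(u,v) = \chi(s(u,v))$ are on the unit circle (characters of a finite abelian group), so $A(v,u) = \overline{A(u,v)} = A(u,v)^{-1}$ and in particular $A(u,v)A(v,u) = 1$. Under these conventions the formula $g(u,v) = A(u,v)^{-1}(f(u) - \beta\, A(u,v)\, f(v))$ simplifies to
\begin{equation*}
g(u,v) \;=\; A(v,u)\, f(u) \;-\; \beta\, f(v),
\end{equation*}
which is the natural signed analogue of $g(u,v) = f(u) - \beta f(v)$.

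Next I would carry out the eigenvector calculation. Using the definition of $B$ from Definition~\ref{def:backtrack}, for a directed edge $(u,v)$ we have $(Bg)(u,v) = \sum_{w \sim v,\, w \neq u} A(v,w)\, g(v,w)$. Plugging in the expression for $g(v,w)$ and using $A(v,w)A(w,v) = 1$, each term contributes $f(v) - \beta A(v,w) f(w)$, so
\begin{equation*}
(Bg)(u,v) \;=\; (d-1)\,f(v) \;-\; \beta \sum_{w \sim v,\, w \neq u} A(v,w)\, f(w).
\end{equation*}
The remaining sum is $(Af)(v) - A(v,u) f(u)$ since we are missing exactly the neighbor $u$, and then the eigenvector hypothesis $Af = (\beta + (d-1)/\beta)f$ yields, after cancellation of the $(d-1)f(v)$ terms,
\begin{equation*}
(Bg)(u,v) \;=\; -\beta^2 f(v) + \beta\, A(v,u) f(u) \;=\; \beta\bigl(A(v,u) f(u) - \beta f(v)\bigr) \;=\; \beta\, g(u,v).
\end{equation*}
The unsigned case is the specialization $A(u,v) \equiv 1$.

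The proof is essentially mechanical; the only subtlety is getting the signing conventions right so that $A(v,w)A(w,v) = 1$ collapses the diagonal-like term cleanly. A minor thing to note at the end is that $g$ is not identically zero (unless $f$ is very special): since $\beta^2 \neq d-1$ generically, the map $f \mapsto g$ is injective on the relevant eigenspace, so the produced eigenvector is nontrivial whenever $f$ is.
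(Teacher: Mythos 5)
Your proposal is correct and takes essentially the same approach as the paper: both directly expand $(Bg)(u,v)$ from the definition of $B$, use $A(v,u) = A(u,v)^{-1}$ to collapse the signing factors, split off the missing neighbor $u$ to reconstitute $(Af)(v)$, and substitute the eigenvalue equation. You treat the signed case directly and recover the unsigned case as the specialization $A \equiv 1$, whereas the paper writes the two cases out separately, but the computation is the same.
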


\begin{proof}
  Let $f$ and $g$ be as in the statement of the claim.
  Suppose for that $A$ and $B$ are not signed.
  Computing we have
  \begin{align*}
    (Bg)(u,v) &=\sum_{w\sim v, w\ne u} f(v) - \beta \cdot f(w)\\
              &= (d-1)f(v) - \sum_{w\sim v, w\ne u} \beta \cdot f(w)\\
              &= (d-1)f(v) + \beta \cdot f(u) - \beta \sum_{w\sim v} f(w)\\
              &= (d-1)f(v) + \beta \cdot f(u) - \beta (Af)(v)\\
              &= (d-1)f(v) + \beta \cdot f(u) - \beta \left(\beta + \frac{d-1}{\beta}\right)f(v)\\
              &= \beta (f(u)-\beta \cdot f(v)) = \beta \cdot g(u,v).
  \end{align*}
  Now suppose that $A$ and $B$ are signed.
  First note that $g$ is well-defined since for every entry $g(u,v)$ the
  pair $(u,v)$ is an orientation of an edge of the graph so it has
  a signing $A(u,v) \ne 0$.
  We have
  \begin{align*}
    (Bg)(u,v) &= \sum_{w\sim v, w\ne u} A(v,w) A(v,w)^{-1}(f(v) - \beta \cdot A(v,w) f(w))\\
              &= (d-1) f(v) - \beta \sum_{w\sim v, w\ne u} A(v,w)  f(w)\\
              &= (d-1) f(v) + \beta \cdot A(v,u) f(u) - \beta \sum_{w\sim v} A(v,w) f(w)\\
              &= (d-1) f(v)   + \beta \cdot A(v,u) f(u) - \beta (Af)(v)\\
              &= (d-1)f(v) + \beta \cdot  A(v,u) f(u) - \beta \left(\beta + \frac{d-1}{\beta}\right)f(v)\\
              &= \beta \cdot A(v,u) \left(f(u)-\beta \frac{1}{A(v,u)} f(v)\right),\\
              &= \beta \cdot A(u,v)^{-1} \left(f(u)-\beta \cdot A(u,v) f(v)\right) = \beta \cdot g(u,v),    
  \end{align*}
  where we used $A(v,u) = A(u,v)^{-1}$.
\end{proof}
  
\begin{corollary}\label{cor:ihara_bass_simple}
  Let $A$ be the (signed) adjacency matrix of a $d$-regular graph.
  Let $B$ be its (signed) non-backtracking operator.
  For any $\lambda > 2\sqrt{d-1}$, if $\;\rho_2(B) \le \lambda/2$, then $\rho_2(A) \le \lambda$. Hence, $\lambda(G) = \rho(A) \leq 2\max\set{\sqrt{d-1}, \rho_2(B)}$.
\end{corollary}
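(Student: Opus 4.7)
The plan is to prove the first statement by contradiction, leveraging Claim~\ref{claim:o_donnel_obs} as an explicit ``inverse'' of the Ihara--Bass substitution $\mu = \beta + (d-1)/\beta$. Suppose toward contradiction that $\rho_2(A) > \lambda$ for some $\lambda > 2\sqrt{d-1}$ with $\rho_2(B) \le \lambda/2$. For any real eigenvalue $\mu$ of $A$ with $|\mu| > 2\sqrt{d-1}$, the quadratic $\beta^2 - \mu\beta + (d-1) = 0$ has two real roots satisfying $\beta_+\beta_- = d-1$ and $\beta_+ + \beta_- = \mu$; picking the larger-magnitude root $\beta$ gives $|\beta| \ge \sqrt{d-1}$ and $|\beta| \ge |\mu|/2$. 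I would first verify that the vector $g$ constructed in Claim~\ref{claim:o_donnel_obs} is nonzero: since $g(u,v)$ and $g(v,u)$ are both zero forces $f(u) = \beta^2 f(u)$ (using $A(v,u)=A(u,v)^{-1}$ in the signed case), the identity $g\equiv 0$ would force $f\equiv 0$ whenever $\beta^2 \ne 1$, which holds because $|\beta| \ge \sqrt{d-1} \ge \sqrt{2} > 1$ for $d \ge 3$. Hence $\beta$ is a genuine eigenvalue of $B$.

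Next, I would apply this construction twice. Applied to the Perron eigenvalue $\mu_1 = d$ (with constant eigenvector) it yields $\beta_1 = d-1$ as an eigenvalue of $B$. Applied to some $\mu_2$ with $|\mu_2| = \rho_2(A) > \lambda > 2\sqrt{d-1}$ it yields an eigenvalue $\beta_2$ of $B$ with $|\beta_2| > \lambda/2$. Observing that $\rho_2(A) \le d$ and $\lambda < \rho_2(A) \le d \le 2(d-1)$ (for $d \ge 3$), we also get $|\beta_1| = d-1 > \lambda/2$.

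The main step is to conclude from these two ``large'' contributions that $\rho_2(B) > \lambda/2$, the desired contradiction. I would split into cases by the location of $\mu_2$. If $\mu_2 \ne \pm d$, then by Vieta's $\beta_2 \ne \pm(d-1)$, so $\beta_1$ and $\beta_2$ are distinct eigenvalues of $B$ both of magnitude exceeding $\lambda/2$. If $\mu_2 = -d$ (which forces $A$ to have a bipartite component), then $\beta_2 = -(d-1) \ne d-1 = \beta_1$, again distinct. Finally, if $\mu_2 = d$ with multiplicity $\ge 2$ in $A$, I would take two linearly independent eigenvectors $f_1,f_2$ of $A$ for $d$ and use the linearity of $f\mapsto g$ together with the triviality of its kernel (same argument as above) to conclude $B$ has $\beta=d-1$ with multiplicity $\ge 2$. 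In every case $\rho_2(B) > \lambda/2$, contradicting the hypothesis.

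For the ``hence'' clause, setting $\lambda^\star = 2\max\{\sqrt{d-1},\rho_2(B)\}$ and applying the first statement with $\lambda = \lambda^\star + \varepsilon$ for arbitrary $\varepsilon>0$ (so that $\lambda > 2\sqrt{d-1}$ and $\rho_2(B) \le \lambda/2$), then letting $\varepsilon \to 0$, yields $\lambda(G) = \rho_2(A) \le \lambda^\star$. The step I expect to require the most care is the multiplicity/independence bookkeeping in the $\mu_2 = d$ case and the parallel verification that the $g \ne 0$ argument goes through unchanged in the signed setting using $A(v,u) = A(u,v)^{-1}$; past those points the argument is essentially a two-line application of Claim~\ref{claim:o_donnel_obs}.
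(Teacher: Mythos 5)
Your proof is correct and uses the same central tool the paper does (Claim~\ref{claim:o_donnel_obs}, i.e., lifting an eigenvector $f$ of $A$ to an eigenvector $g$ of $B$ with eigenvalue a root of $\beta^2-\mu\beta+(d-1)=0$), so it is not a genuinely different route. The difference is in care of detail. The paper's proof simply picks \emph{one} eigenvalue $\alpha$ of $A$ with $\abs\alpha>\lambda$, produces one eigenvalue $\beta^+$ of $B$ with $\abs{\beta^+}>\lambda/2$, and leaves the $\rho_2$-to-$\rho_2$ bookkeeping (injectivity of $f\mapsto g$, distinctness of the resulting $\beta$'s, and the multiplicity case) implicit. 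You spell that bookkeeping out fully, including the kernel check $\beta^2\ne1$ and the three cases $\mu_2\ne\pm d$, $\mu_2=-d$, $\mu_2=d$ with multiplicity; this is a genuine improvement in rigor. The one thing to be careful about: your instantiation $\mu_1=d$ with the constant eigenvector is a specialization to the \emph{unsigned} adjacency matrix, whereas the corollary is stated for the signed matrices $A(\chi)$, for which $d$ need not be an eigenvalue and $\mathbf 1$ need not be an eigenvector. The fix is immediate and actually simplifies your argument: instead of invoking Perron, take $\mu_1=\lambda_1(A)$ and $\mu_2=\lambda_2(A)$, both of magnitude $\ge\rho_2(A)>\lambda$ (so both give roots $\beta_1,\beta_2$ of magnitude $>\lambda/2$; the bound $\rho_2(A)\le\rho(A)\le d$ that you need for nothing in particular now comes from the $\ell_1$-row-norm bound on $A(\chi)$, not Perron--Frobenius). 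With $\mu_1,\mu_2$ both real (signed $A$ is Hermitian), your distinctness-or-multiplicity case split then goes through verbatim and covers both the signed and unsigned settings, whereas as written it only covers the unsigned one. The $\varepsilon\to0$ deduction of the ``hence'' clause is fine.
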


\begin{proof}
  We show via the contrapositive. Suppose that $f$ is eigenvector of $A$ with
  eigenvalue $\alpha$ such that $|\alpha| >  \lambda$.
  By possibly multiplying $A$ and $B$ by a phase (\ie $e^{i\theta}$), we can
  assume $\alpha$ is a non-negative real number.
  By~\cref{claim:o_donnel_obs}, we have that $\beta$ satisfying the equation
  $\beta^2-\alpha \beta + (d-1) =0$ is an eigenvalue of $B$. Considering
  the solution $\beta^+ = (\alpha + \sqrt{\alpha^2 -4(d-1)})/2$ and thus,
  we have $\beta^+ \ge \alpha/2 > \lambda/2 $.
\end{proof}

\vspace{-0.5cm}
\section{A Precise Implementation of DFS}\label{sec:dfs_implementation}

We now present the precise implementation of the depth-first search
algorithm (different versions could traverse an edge of an input graph
more than twice, but here we need exactly twice).

\begin{algorithm_simple}{DFS$(G,v$)}{connected graph $G$ and starting vertex $v$}{traversal $\mathcal{T}$ of $G$ and step types $\sigma$}\label{algo:dfs}
    \begin{enumerate}[topsep=0.3ex,itemsep=0.4ex,parsep=0.7ex,label={$\cdot$},leftmargin = 0.4cm]
      \item Color all vertices of $G$ with \textup{'green'}
      \item Traversal $\mathcal{T} = ()$
      \item Step types $\sigma = \textup{''}$
      \item Parent $\pi = \textup{null}$
      \item \textup{\textbf{DFSRec}}$(G,\mathcal{T},\pi,\sigma,v, e=\textup{null})$
      \item return $\mathcal{T},\sigma$
    \end{enumerate}
\end{algorithm_simple}

\begin{algorithm_simple}{DFSRec$(G,\mathcal{T},\pi,\sigma,v,e)$}{graph $G$, traversal $\mathcal{T}$, parent $\pi$, step types $\sigma$,
     vertex $v$ and traversed edge $e$}{Updated $\mathcal{T}$ (as side effect)}\label{algo:dfsrec}
  \begin{enumerate}[topsep=0.3ex,itemsep=0.4ex,parsep=0.7ex,label={$\cdot$},leftmargin = 0.4cm]
    \item $\mathcal{T}.\text{append}(e)$ if $e \ne \textup{null}$
    \item If $v$ is \textup{'green'}: \begin{enumerate}[itemsep=0.4ex,parsep=0.7ex,label={$\cdot$}]
    \item Color $v$ with \textup{'yellow'}
    \item For each neighbor $u$ of $v$ not colored \textup{'red'} and $u \ne \pi$: \begin{enumerate}[itemsep=0.4ex,parsep=0.7ex,label={$\cdot$}]
           \item $\sigma.\text{append}(\textup{'R'})$ \qquad\qquad (recursive step)
           \item \textup{\textbf{DFSRec}}$(G,\mathcal{T},\pi=v,\sigma, u, e=(v,u))$
           \end{enumerate}
      \item Color $v$ with  \textup{'red'}
    \end{enumerate}
    \item $\sigma.\text{append}(\textup{'B'})$ \qquad\qquad\qquad\qquad (backtrack step)
  \end{enumerate}
\end{algorithm_simple}

\noindent The key observation we need for this particular implementation of DFS
is the following.
\begin{remark}
  Let $G$ be a connected graph. The \textbf{DFS} algorithm traversals
  each edge of $G$ exactly twice: first in a recursive step and
  subsequently in a backtrack step.
\end{remark}

\end{document}